\let\@twosidetrue\@twosidefalse\let\@mparswitchtrue\@mparswitchfalse\makeatother %
\DeclareFontShape{T1}{lmr}{b}{sc}{<->ssub*cmr/bx/sc}{}
\DeclareFontShape{T1}{lmr}{bx}{sc}{<->ssub*cmr/bx/sc}{}
\DeclareMathAlphabet{\pazocal}{OMS}{zplm}{m}{n}
\pgfplotsset{compat=1.18}
\let\llncssubparagraph\subparagraph
\let\subparagraph\paragraph
\let\subparagraph\llncssubparagraph
\titlespacing*{\paragraph}{0pt}{0.75ex plus 0.5ex minus 0ex}{*1}
\def\orcidID#1{\textsuperscript{\,\smash{\protect\raisebox{-1.25pt}{\href{http://orcid.org/#1}{\protect\includegraphics[scale=.8]{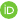}}}}}}
\definecolor{color1}{RGB}{55,126,184}
\definecolor{color2}{RGB}{228,26,28}
\definecolor{color3}{RGB}{77,175,74}
\definecolor{color4}{RGB}{152,78,163}
\definecolor{color5}{RGB}{255,127,0}
\colorlet{shade1}{black!11}
\colorlet{shade2}{white}
\colorlet{shade3}{Black!11}  %
\colorlet{shade4}{Black!6}  %
\colorlet{colcodekey}{PineGreen!60!green!70!black}
\colorlet{colcodeidentifier}{Sepia}
\colorlet{colcodenumeric}{MidnightBlue!75!black!85}
\colorlet{colcodesyncaction}{violet!90}
\colorlet{colcodecomment}{Black!60}
\colorlet{consolebkg}{Blue!13}
\colorlet{consoleborder}{Black!60}
\definecolor{darkred}{RGB}{150,0,0}
\definecolor{darkblue}{RGB}{0,0,128}
\definecolor{darkgreen}{RGB}{0,128,0}
\definecolor{darkpurple}{RGB}{128,0,128}
\Crefname{figure}{Fig.}{Figs.}
\crefname{figure}{fig.}{figs.}
\Crefname{tabular}{Tab.}{Tabs.}
\crefname{tabular}{tab.}{tabs.}
\Crefname{section}{Sec.}{Sects.}
\crefname{section}{sec.}{sects.}
\Crefname{appendix}{App.}{Apps.}
\crefname{appendix}{app.}{apps.}
\Crefname{equation}{Eq.}{Eqs.}
\crefname{equation}{eq.}{eqs.}
\Crefname{example}{Ex.}{Exs.}
\crefname{example}{ex.}{exs.}
\newcolumntype{L}[1]{>{\RaggedRight\let\newline\\\arraybackslash}p{#1}}
\newcolumntype{C}[1]{>{\Centering\let\newline\\\arraybackslash}p{#1}}
\newcolumntype{R}[1]{>{\RaggedLeft\let\newline\\\arraybackslash}p{#1}}
\colorlet{hlboxcoldraw}{black!65}  %
\colorlet{hlboxcolfill}{black!6}   %
\newsavebox\BODYBOX  %
\def\BODY{\unhbox\BODYBOX}
\NewDocumentEnvironment{hlbox}{ O{} }
  {\vspace{-.5ex plus .3ex minus .2ex}%
	\begin{center}\begin{lrbox}{\BODYBOX}\begin{minipage}{.95\linewidth}%
	\ifthenelse{\equal{#1}{}}{}{\textbf{#1}. }}
  {\end{minipage}\end{lrbox}\begin{tikzpicture}%
	\node[rectangle,rounded corners=.3mm,inner sep=1.3ex,thick,
	      draw=hlboxcoldraw,fill=hlboxcolfill] {\BODY};
	\end{tikzpicture}\end{center}%
	\vspace{-.6ex plus .3ex minus .2ex}}
\renewcommand{\emptyset}{\varnothing}
\newcommand{\NN}{\ensuremath{\mathbb{N}}\xspace}  %
\newcommand{\RR}{\ensuremath{\mathbb{R}}\xspace}  %
\newcommand{\RRpos}{\ensuremath{\RR_{\geqslant0}}\xspace}  %
\newcommand{\tool}[1]{\textsc{#1}}
\newcommand{\lang}[1]{\textsc{#1}}
\newcommand{\modest}{\lang{Modest}\xspace}
\newcommand{\toolset}{\tool{Modest Toolset}\xspace}
\newcommand{\thetool}{\tool{modes}\xspace}
\newcommand{\FIG}{\tool{Fig}}
\newcommand{\jani}{\lang{Jani}\xspace}
\newcommand{\eg}{e.g.\ }
\newcommand{\ie}{i.e.\ }
\newcommand{\etal}{et al.\xspace}
\newcommand{\wrt}{w.r.t.\xspace}
\newcommand{\set}[1]{\ensuremath{\{\,#1\,\}}}
\newcommand{\tuple}[1]{\ensuremath{\langle #1 \rangle}}
\newcommand{\defeq}{\mathrel{\vbox{\offinterlineskip\ialign{\hfil##\hfil\cr{\tiny \rm def}\cr\noalign{\kern0.30ex}$=$\cr}}}}
\newcommand{\Dist}[1]{\ensuremath{\mathit{Dist}({#1})}\xspace}
\newcommand{\support}[1]{\ensuremath{\mathit{spt}({#1})}\xspace}
\newcommand{\raisebox{-1pt}{\resizebox{9pt}{!}{\input{figures/sound.pdf_tex}}}\xspace}{\raisebox{-1pt}{\resizebox{9pt}{!}{\input{figures/sound.pdf_tex}}}\xspace}
\newcommand{\raisebox{7pt}{\resizebox{.9em}{!}{\rotatebox{270}{\input{figures/hoofding.pdf_tex}}}}\xspace}{\raisebox{7pt}{\resizebox{.9em}{!}{\rotatebox{270}{\input{figures/hoofding.pdf_tex}}}}\xspace}
\newcommand{\totalreward}{\ensuremath{\mathit{PR}}}
\newcommand{\lambdafun}[1]{\ensuremath{\lambda\,{#1}\mathbf{.}\:}}
\DeclareMathAlphabet{\mathbbmsl}{U}{bbm}{m}{sl}
\DeclareMathAlphabet{\mathcal}{OMS}{cmsy}{m}{n}
\SetMathAlphabet{\mathcal}{bold}{OMS}{cmsy}{b}{n}
\g@addto@macro\normalsize{%
  \setlength\abovedisplayskip{3pt}%
  \setlength\belowdisplayskip{3pt}%
  \setlength\abovedisplayshortskip{-3pt}%
  \setlength\belowdisplayshortskip{3pt}%
}%
\pgfplotsset{
	table/col sep=comma,
	allplot/.style={
		x label style={at=(current axis.north),anchor=south,inner sep=3pt,font=\footnotesize},
		axis x line*=bottom,
		every tick label/.append style={font=\tiny},
	},
	confplot/.style={
		allplot,
		xtick={0,0.5,1}, xticklabels={0,0.5,1},
		xmin=0,xmax=1,ymin=0.85,ymax=1,
		ylabel=\empty,ytick=\empty,
	},
	confleftplot/.style={
		confplot,
		y label style={at=(current axis.west),anchor=south west,inner xsep=3pt, inner ysep=8pt},
		ylabel=\scriptsize $p_{cov}$,
		ytick={0.85,0.9,1}, yticklabels={0.85,0.9,1}
	},
	mainbodyplot/.style={
		width=4.95cm,height=3.5cm,
	},
	appendixplot/.style={
		width=6.5cm,height=5cm,
	}
}%
\begin{document}

\title{%
Sound Statistical Model Checking\\ for Probabilities and Expected Rewards%
\thanks{
This work was supported
by the DFG through the Cluster of Excellence EXC 2050/1 (CeTI, project ID 390696704, as part of Germany's Excellence Strategy)
and the TRR 248 (see \href{https://perspicuous-computing.science/}{perspicuous-computing.science}, project ID 389792660),
by the European Union's Horizon 2020 research and innovation programme under Marie Sk{\l}odowska-Curie grant agreements 101008233 (MISSION), 101034413 (IST-BRIDGE), and 101067199 (ProSVED),
by the EU under NextGenerationEU projects D53D23008400006 (Smartitude) under MUR PRIN 2022 and PE00000014 (SERICS) under MUR PNRR,
by the Interreg North Sea project STORM\_SAFE,
and
by NWO VIDI grant VI.Vidi.223.110 (TruSTy).
}\iftoggle{arxiv}{\\ (extended version)}{}
}
\titlerunning{Sound Statistical Model Checking for Probabilities and Expected Rewards}
\author{
\mbox{Carlos E.\ Budde\inst{1,2}\orcidID{0000-0001-8807-1548}}
\and
\mbox{Arnd Hartmanns\inst{3}\orcidID{0000-0003-3268-8674}}
\and
Tobias Meggendorfer\inst{4}\orcidID{0000-0002-1712-2165}\!
\and\\
Maximilian Weininger\inst{5}\orcidID{0000-0002-0163-2152}
\and
Patrick Wienh\"oft\inst{6,7}\orcidID{0000-0001-8047-4094}\,\smash{\raisebox{0.75pt}{\Envelope}}
}
\authorrunning{C.E. Budde, A. Hartmanns, T. Meggendorfer, M. Weininger, P. Wienhöft}
\institute{%
Technical University of Denmark, Lyngby, Denmark
\and
University of Trento, Trento, Italy
\and
University of Twente, Enschede, The Netherlands
\and
Lancaster University Leipzig, Germany
\and
Institute of Science and Technology Austria, Klosterneuburg, Austria
\and
Technical University Dresden, Germany
$\cdot$ \email{patrick.wienhoeft@tu-dresden.de}
\and
Centre for Tactile Internet with Human-in-the-Loop (CeTI), Dresden, Germany
}

\maketitle

\begin{abstract}
Statistical model checking estimates probabilities and expectations of interest in probabilistic system models by using random simulations.
Its results come with statistical guarantees.
However, many tools use \emph{unsound} statistical methods that produce incorrect results more often than they claim.
In this paper, we provide a comprehensive overview of tools and their correctness, as well as of sound methods available for estimating probabilities from the literature.
For expected rewards, we investigate how to bound the path reward distribution to apply sound statistical methods for bounded distributions, of which we recommend the Dvoretzky-Kiefer-Wolfowitz inequality that has not been used in SMC so far.
We prove that even reachability rewards can be bounded in theory, and formalise the concept of limit-PAC procedures for a practical solution.
The \thetool SMC tool implements our methods and recommendations, which we use to experimentally confirm our results.
\end{abstract}

\section{Introduction}
\label{sec:intro}

Statistical model checking (SMC)~\cite{YS02} estimates quantities of interest by sampling a large number $k$ of random runs from a compact executable model of a probabilistic system. %
Typical quantities of interest are reachability probabilities and expected rewards, to query for \eg reliability or performance measures~\cite{BHHK10}. %
A \emph{sound} statistical model checker delivers results guaranteed to be \emph{probably approximately correct} (PAC), \ie 
it returns %
a confidence interval $I$ with $|I| \leq 2\varepsilon$ (\enquote{$\varepsilon$-approximately correct}) such that the probability for $I$ to contain the (unknown) true value $x$ is higher than a given confidence level $\gamma$ (\enquote{probably correct}).

When applied judiciously, SMC can perform extremely well~\cite{YKNP06} and easily beat probabilistic model checking (PMC)~\cite{Bai16,BAFK18} tools that rely on exhaustive state space exploration as well as partial exploration tools~\cite{ABHK18,KM20,DBLP:conf/cav/MeggendorferW24} in competitions when PAC results are allowed~\cite{BHKKPQTZ20}.
It is widely implemented in tools such as \tool{C-smc}~\cite{CDGL21}, \tool{Cosmos}~\cite{BBD+15}, \FIG~\cite{Bud22a}, \tool{Hypeg}~\cite{PER17}, \tool{modes}~\cite{BDHS20} of the \tool{Modest Toolset}~\cite{HH14}, \tool{MultiVeStA}~\cite{SV13,GRV17}, \tool{Plasma Lab}~\cite{LST16}, \tool{Prism}~\cite{KNP11}, \tool{Sbip}~\cite{NMB+18}, or \tool{Uppaal Smc}~\cite{DLLMP15}.
It has been applied to case studies ranging from hardware~\cite{MTWPS23,RLHBRCZ21} over biology~\cite{Zul15} to cybersecurity~\cite{Bud22b,LMZ23}.
New SMC tools such as \tool{Smc Storm}~\cite{LKP24} are now being developed in industrial contexts.

However, as we detail in \Cref{tab:tools}, many existing and new SMC implementations either are unsound (\ie they do not deliver PAC guarantees), or inefficient (\ie they use statistical methods that need unnecessarily many samples).
The unsoundness is often due to computing confidence intervals via approaches that rely on the central limit theorem, %
while the inefficiency is notably due to the widespread use of the Okamoto bound~\cite{Oka59} for estimating probabilities. %

\paragraph{Our contribution} is a comprehensive treatment of the problem of efficiently obtaining sound SMC results when estimating probabilities as well as expected rewards.
We review the statistical methods available for probabilities %
in \Cref{sec:ForProbabilities}, which forms the basis for \Cref{tab:tools}.
For expected rewards, in \Cref{sec:ForRewards}, we provide a novel fully sound approach for the instantaneous and cumulative cases, prove that sound SMC is possible for reachability rewards in theory, and give a practically useful method.
We implemented our methods and recommendations in the \thetool SMC tool (\Cref{sec:Implementation}) to experimentally confirm our findings in \Cref{sec:Experiments}.

\medskip
SMC for reachability \textbf{probabilities} comes down to estimating binomial proportions,
a well-studied problem in statistics. %
Sound methods for \textbf{expected rewards}, on the other hand, have been an open problem.
Here, we need to estimate the mean $x$ of the path reward distribution $\mu$, whose shape is unknown and which can have unbounded support.
For this problem, no PAC statistical methods exist.
Thus, to obtain a sound SMC approach for expected rewards, we must (1)~use structural information to soundly reduce to case of bounded support $[a, b]$ %
to then (2)~employ an appropriate statistical method for this case.

We review the methods available for Step~2 in \Cref{sec:StatisticsAB}, recommending the use of the Dvoretzky-Kiefer-Wolfowitz inequality (DKW)~\cite{DKW56}.
This inequality provides a very strong and versatile result that allows the derivation of useful confidence intervals for the mean even for conservative values for $a$ and $b$, yet has been curiously ignored in the SMC community so far.

For Step~1, we distinguish two cases in \Cref{sec:BoundingRewards}:
First, for %
(step- or time-bounded) \textbf{cumulative} and \textbf{instantaneous rewards}, we can derive safe and practical values for $a$ and $b$ given an upper bound on $r_\mathit{\!max}$, the highest reward assigned to any state, which can typically be obtained from the model's syntax.
For (unbounded) \textbf{reachability rewards}, we introduce \emph{bounding sets} that provide a means to ignore very large path rewards while introducing an error of at most $\varepsilon' < \varepsilon$.
We prove that a bounding set can be obtained for every finite discrete-time Markov chain (DTMC),
given only bounds on certain parameters of the DTMC which can typically be derived syntactically, too.
Yet, the resulting bounding set is not practical, thus serving only as a proof of the possibility of sound SMC for reachability rewards. %
In practice, we propose to use the DKW to obtain guaranteed \emph{lower bounds} that provably converge to $x$ as $k \to \infty$. %

\paragraph{Our focus}
is on \emph{estimating} probabilities and \emph{undiscounted} expected rewards given either $k$ or an \emph{absolute} error $\varepsilon$.
We briefly comment on \emph{hypothesis testing} where appropriate, referring the reader to dedicated works on hypothesis testing in SMC like Reijsbergen \etal's~\cite{RBSH15} for more details, cautioning that they may not emphasise soundness.
Undiscounted rewards are standard in verification while discounting is ubiquitous in machine learning. %
It is easy to obtain good bounds $[a, b]$ on discounted rewards and thus apply the methods we review in \Cref{sec:StatisticsAB} efficiently.
For %
rare events~\cite{RT09} or very large expected rewards, one may want to specify a \emph{relative error} $\varepsilon \cdot x$; %
we mention some methods specific for this case.

We consider SMC as in the original papers by Younes and Simmons~\cite{YS02} and Hérault \etal\cite{HLMP04},
motivated by the state space explosion problem which PMC faces for finite-but-large models of realistic applications, and the lack of scalable PMC approaches for non-Markovian models like stochastic automata (IOSA)~\cite{DM18} or %
HPnGs~\cite{NPR20}.
Thus, we sample runs from a (mostly) black-box model using $\mathcal{O}(1)$ memory to estimate global quantities of interest.
This is in contrast to ``model-based SMC''~\cite{AKW19,AGKM22,MWW24}, which aims to apply PMC-like methods to black-box systems with simulation access by \emph{learning} a model, in particular its transition probabilities, requiring memory quadratic in the number of states. %

\paragraph{Related soundness.}
The formal methods community values trustworthy results with clear guarantees on possible analysis errors.
For example, after finding that the common stopping criterion of the value iteration algorithm %
can lead to arbitrarily wrong results~\cite{BCCFKKPU14,HM18}, the problem was addressed in many settings~\cite{BKLPW17,ACDKM17,QK18,HK20,EKKW22,AEKSW22,DBLP:conf/lics/KretinskyMW23}.
Yet, in SMC, the issue of soundness has received little attention. %
Only recently, a survey of sound and unsound methods for estimating probabilities appeared~\cite{MWW24}, which our recommendations in \Cref{sec:ForProbabilities} are based~on.

\section{Background}
\label{sec:Background}

We write $\mathbbm{1}_\mathit{pred}$ for the indicator function of $\mathit{pred}$: $\mathbbm{1}_\mathit{pred}(x) = 1\text{ if }\mathit{pred}(x)\text{ else }0$.
A \emph{probability distribution} over a countable set $S$ is a function $\mu\colon S \to [0, 1]$ such that $\sum_{s \in S} \mu(s) = 1$.
Its \emph{support} is $\support{\mu} \defeq \set{ s \in S \mid \mu(s) > 0 }$.

\subsubsection{Models.}
\label{sec:Models}

The assumption of SMC is that models are given in a higher-level formalism---like HPnGs~\cite{NPR20}, IOSA~\cite{DM18}, \jani~\cite{BDHHJT17}, \modest~\cite{BDHK06,HHHK13}, or the \tool{Prism} language~\cite{KNP11}---that allows behaviours to be randomly sampled without having to create an in-memory state space. %
Their semantics are some form of Markov process; %
we focus on the special case of DTMCs to simplify the presentation.
All our methods immediately apply in the general setting or can be extended.

\begin{definition}
A \emph{discrete-time Markov chain} (DTMC) is a tuple $\tuple{S, R, T, s_I}$ of a finite set of \emph{states} $S$, a \emph{reward} function $R\colon S \to \RRpos$, an \emph{initial state} $s_I \in S$, and a \emph{transition} function $T\colon S \to \Dist{S}$ mapping each state to a probability distribution over successor states.
A (finite) \emph{path} $\pi$ ($\pi_\mathit{fin}$) is (a prefix of) an infinite sequence $\pi = s_0\,s_1\ldots \in S^\omega$ such that $s_0 = s_I$ and $\forall i\colon T(s_i)(s_{i+1}) > 0$.
\end{definition}
A DTMC induces a probability measure $\mathbb{P}$ over sets of paths that, intuitively, corresponds to multiplying the probabilities along the path (see \eg\cite[Chp.~10]{BK08}).
Abusing notation, we also use $\pi$ or $\pi_\mathit{fin}$ to refer to the set of a path's states. %
We write $\pi[i]$ for $s_i$, the path's $(i+1)$-th state, and $\mathit{idx}(\pi, S') = \min\{i \in \NN \mid s_i \in S'\}$ for the index of the first state in $S' \subseteq S$ on $\pi$, with $\mathit{idx}(\pi, S') = \infty$ if $\pi \cap S' = \varnothing$.
We write %
$r_\mathit{\!max} \defeq \max\,\set{R(s) \mid s \in S}$ for a DTMC's maximum reward and $p_\mathit{min} \defeq \min(\set{T(s)(s') \mid s, s' \in S} \setminus \set{0})$ for its minimum probability.
We assume that, from only the higher-level formalism's syntax, we can efficiently obtain bounds $\overline{|S|} \geq |S|$, $\overline{r}_\mathit{\!max} \geq r_\mathit{\!max}$, and $0 < \underline{p}_\mathit{min} \leq p_\mathit{min}$.

\subsubsection{Properties.}
\label{sec:Properties}

Every property to be model-checked can be cast as the expected value $\mathbb{E}(X)$ \wrt $\mathbb{P}$ of a random variable $X$ that maps paths to values in $\RRpos$.
We consider the following kinds of properties, some of which take a step bound $c \in \NN$ or a set of \emph{goal states} $G \subseteq S$ specified as part of the model:
$$
\begin{aligned}
P_{\diamond\,G} &\defeq \mathbb{E}(\lambdafun{\pi} \mathbbm{1}_{G \cap \pi \neq \emptyset})
&\qquad\text{\it(reachability probability)}\phantom{_{\!\bot}}\\
P_{\diamond\,G}^{\leq c} &\defeq \mathbb{E}(\lambdafun{\pi} \mathbbm{1}_{\mathit{idx}(\pi, G) \leq c})
&\qquad\text{\it(bounded reach.\ probability)}_{\!\bot}\\
E_{\leq c} &\defeq \mathbb{E}(\lambdafun{\pi} \textstyle\sum_{i = 0}^c R(\pi[i]))
&\qquad\text{\it(cumulative reward)}_{\!\bot}\\
E_{\diamond\,G} &\defeq \mathbb{E}(\lambdafun{\pi} \textstyle\sum_{i = 0}^{\mathit{idx}(\pi, G)} R(\pi[i]))
&\qquad\text{\it(reachability reward)}\phantom{_{\!\bot}}\\
E_{\diamond\,G}^{\leq c} &\defeq \mathbb{E}(\lambdafun{\pi} \textstyle\sum_{i = 0}^{\min\{ \mathit{idx}(\pi, G),\, c \}} R(\pi[i]))
&\qquad\text{\it(bounded and reach.\ reward)}_{\!\bot}\\
E_{=c} &\defeq \mathbb{E}(\lambdafun{\pi} R(\pi[c]))
&\qquad\text{\it(instantaneous reward)}_{\!\bot}\\
E_{=G} &\defeq \mathbb{E}(\lambdafun{\pi} R(\pi[\mathit{idx}(\pi, G)]))
&\qquad\text{\it(reach-instant reward)}\phantom{_{\!\bot}}\\
\end{aligned}
$$
Rewards are obtained upon entering states.
$E_{\diamond\,G}$ and $E_{=G}$ are defined to be~$\infty$ if $\mathbb{P}(\set{ \pi \mid \mathit{idx}(\pi, G) = \infty }) > 0$~\cite{FKNP11}.
The properties marked $_{\bot}$ are \emph{bounded}; all others are \emph{unbounded}.
The former are typical for SMC,
required by \eg \tool{Plasma Lab}~\cite[Table~1]{LST16} and \tool{Smc Storm}~\cite[Sect.\ 2.1]{LKP24}.
Unbounded reachability probabilities and rewards, on the other hand, are standard in PMC and dominate model collections like the Quantitative Verification Benchmark Set (QVBS)~\cite{HKPQR19}.

\subsubsection{Statistical model checking.}
\label{sec:SMC}

At its core, SMC is Monte Carlo simulation~\cite{Kre16,AP18,LLTYSG19}:
randomly generate a predetermined number $k$ of paths, or \emph{simulation runs}%
, that give rise to samples $X_1, \ldots, X_k$ of the random variable $X$; then compute the empirical mean $\hat{X} \defeq \frac{1}{k}\sum_{i=1}^{k} X_i$, and perform a \emph{statistical evaluation} to obtain a confidence interval $I = [l, u] \ni \hat{X}$ at predetermined \emph{confidence level}~$\gamma$.

\medskip\noindent
\textit{Simulation.}
How to obtain simulation runs is specific to each higher-level formalism. %
We only assume that a method $\mathit{sample}(M, \mathit{prop})$ exists that, given a model $M$, implements the random variable $X$ of property $\mathit{prop}$, \ie that (pseudo\nobreakdash-)randomly generates a path $\pi_\mathit{fin}$ through $M$'s semantics according to $\mathbb{P}$ that is long enough to evaluate $X$ and returns $X(\pi_\mathit{fin})$.
For bounded properties, the ``long enough'' criterion is straightforward:
just generate paths of length $c$. %

To end a simulation run for $P_{\diamond\,G}$, we must determine whether it entered a bottom strongly connected component (BSCC) without goal states.
For $E_{\diamond\,G}$ and $E_{=G}$ to be finite, we must determine whether a non-goal BSCC \emph{exists}. %
BSCCs can be detected statistically by sampling given some structural information (such as $\underline{p}_\mathit{min}$)~\cite{ADKW20}.
Yet this requires storing a set of visited states that can be as large as $S$, breaking the $\mathcal{O}(1)$ memory property of SMC.
Additionally, some fraction of $1 - \gamma$ must be devoted to all these tests (see \eg\cite{DHKP17}).
However, most verification models---such as those in the QVBS---are structured so that
(i)~for reachability probabilities, all BSCCs contain only one state, and
(ii)~the goal state sets in reachability rewards are reached with probability~$1$,
which allows for an efficient but limited stopping criterion.
We follow this assumption in this paper.

\paragraph{Statistical evaluation.}
If we repeat the SMC procedure $m$ times to obtain con\-fi\-dence intervals $I_1, \ldots, I_m$, we find some of them might be incorrect, i.e.\ $\mathbb{E}(X) \notin I_i$ for some $i$;
occasionally obtaining an ``incorrect'' result is the nature of a statistical approach based on %
sampling.
The (a priori) probability for a correct result is the \emph{coverage probability} $p_\mathit{cov}(k) = \lim_{m \to \infty} \frac{\mathit{cov}_m}{m}$, where $\mathit{cov}_m$ denotes the amount of correct confidence intervals.
We call an SMC procedure \textbf{sound} if it is guaranteed to provide \emph{probably approximately correct} (PAC) results:
Given $k$ and confidence level $\gamma$, it has $p_\mathit{cov}(k) \geq \gamma$ while producing intervals of width $|I| \leq 2\varepsilon$.
Then, the midpoint of this interval is $\varepsilon$-close to the true mean of $X$ with high probability.

Sound SMC results are obtained by employing an appropriate \emph{statistical (evaluation) method} (SM) that relates $k$ (or the concrete $X_i$), $\gamma$, and $\varepsilon$ to ensure the PAC requirement, with two values given and the third under control of the SM.
We consider two settings:
The \textbf{fixed-}$\boldsymbol{k}$ setting, where $\gamma$ and $k$ are given while $\varepsilon$ is determined by the SM, and the \textbf{sequential} setting, where $\gamma$ and $\varepsilon$ are given so that the SM must determine~$k$.
In the latter, $k$ can precomputed from $\gamma$ and $\varepsilon$, or it can be determined by a \emph{truly sequential} SM that continuously checks whether enough samples have been gathered to be $\gamma$-confident of an interval $I$ with $|I| \leq 2\varepsilon$.
We always assume $\gamma$ to be given, a typical value being $\gamma = 0.95$.

\section{Sound SMC for Probabilities}
\label{sec:ForProbabilities}
\label{sec:Statistics01}

For probabilities, \ie $P_{\diamond\,G}$ and $P_{\diamond\,G}^{\leq c}$, each simulation run is a Bernoulli trial with outcome $X_i \in \set{0, 1}$.
Thus, the SM samples from a binomial distribution with success probability $p$.
After $k$ samples, it observed $k_s = \sum_{i=1}^k X_i$ successes and has empirical mean $\hat p = \frac{k_s}{k} = \hat{X}$. %
Constructing a $\gamma$-confidence interval around $\hat p$ is a well-studied problem in statistics, resulting in many SMs for this task.
We often abbreviate $\delta = 1 - \gamma$ for readability.

Meggendorfer~\etal~\cite[Sec.~3]{MWW24} survey SMs in the context of \enquote{model-based SMC} for Markov decision processes, where individual transition probabilities are estimated to \enquote{learn} the model. Methods for this specific case also apply to SMC for reachability (and other \enquote{qualitative} 0/1 properties) in DTMCs.
Hence, we recap their survey of SMs, extending it with examples and plots.
Moreover, \cite{MWW24} only considers the fixed-$k$ setting, whereas we also discuss the sequential setting and hypothesis testing.
Our survey is the basis for the tool comparison in \Cref{sec:Implementation}, where we show that existing tools use unsound and/or inefficient methods.

\subsection{Unsound Methods}
\label{sec:ForProbabilitiesUnsound}

Denote by $p_\mathit{cov}(k, p)$ the coverage probability that the SM at hand attains given success probability~$p$.
Many of the commonly used SMs for binomial proportions only guarantee an \emph{average} coverage probability of $\gamma$, \ie $\int_0^1 p_{cov}(k,p)\,\mathrm{d}p \geq \gamma$.
This is not in line with the frequentist definition of a confidence interval %
and \textcolor{BrickRed}{\bf not} sufficient for sound SMC%
, producing too many incorrect results for certain values of $p$.
We instead require that $\inf_{p=0}^1 p_{cov}(k,p) \geq \gamma$.

As per~\cite{MWW24}, unsound methods include those based on the central limit theorem (CLT)%
, in particular the textbook Wald interval, the Wilson score interval%
, the Agresti-Coull interval~\cite{AC98}, the Arcsine interval, and the Logit interval.

\paragraph{The Wilson score interval with continuity correction}
(Wilson/CC) \cite{New98} complements the CLT by adding adjustment terms to improve coverage.
However, Newcombe already observed slightly below-nominal coverage~\cite[Table II]{New98}, and \cite{MWW24} confirms the insufficient coverage for high confidence levels and $p$ close to $0$ or $1$. %

\paragraph{Sequential setting.}
Given $\varepsilon$ instead of $k$, Chow and Robbins~\cite{CR65} show that the coverage of %
constructing a Wald interval after every sample and terminating once this interval has half-width $\leq \varepsilon$ goes to $\gamma$ as $\varepsilon \to 0$.
For any concrete $\varepsilon > 0$, however, coverage $\geq \gamma$ may not be achieved and thus this procedure is not sound for SMC.
Reijsbergen \etal\cite{RBSH15} adapt it to perform hypothesis testing with some extra parameters that reduce, but do not eliminate, the chance for incorrect results.
The sequential methods proposed by Chen~\cite{Che15} are \emph{empirically} sound, \ie they appear to produce sound results in practice, but soundness is not proven for $p$ close to~$\frac{1}{2}$, and even for $p$ away from~$\frac{1}{2}$ only as a one-sided version.

\subsection{Sound Methods}
\label{sec:ForProbabilitiesSound}

\paragraph{Okamoto bound.}
In 1959, Okamoto~\cite{Oka59}
proved that, for binomial proportions,
$
\mathbb{P}(\hat p - p \geq \varepsilon) \leq \mathrm{e}^{-2k\varepsilon^2}.
$
We want\iftoggle{arxiv}{\footnote{The directions of the following inequality, and of the first one on the next line, are incorrect in the version of this paper published in the TACAS 2025 proceedings~\cite{publishedversion}.}}{} $\mathbb{P}(|\hat p - p| \geq \varepsilon) \leq \delta$, giving
$$
\textstyle
\frac{\delta}{2} \geq \mathrm{e}^{-2k\varepsilon^2}
\qquad\Leftrightarrow\qquad
\varepsilon \geq \sqrt{\frac{\ln \frac{2}{\delta}}{2k}}
\qquad\Leftrightarrow\qquad
k \geq \frac{\ln \frac{2}{\delta}}{2\varepsilon^2}
$$
by distributing $\delta$ symmetrically.
Thus the interval $I_\mathit{Oka} = [\hat p - \varepsilon, \hat p + \varepsilon]$ always has coverage $\geq \gamma$ when $\varepsilon$, $k$, and $\delta$ satisfy the above inequalities.
This bound is also referred to as Hoeffding bound~\cite{Hoe63} after his more general inequality, see \Cref{sec:StatisticsAB}.

\paragraph{Clopper-Pearson interval.}
The ``exact'' binomial interval by Clopper and Pearson~\cite{CP34} guarantees coverage $\geq \gamma$ for all $p$.
One of several ways to compute~it~is
$$
I_\mathit{CP} = [ B({\delta}/{2}, k_s, k - k_s + 1), B(1 - {\delta}/{2}, k_s + 1, k - k_s) ]
$$
where $B(p, \alpha, \beta)$ is the $p$-quantile of the Beta$(\alpha, \beta)$ distribution. %

\paragraph{Blyth-Still-Casella and Wang.}
The approaches of Blyth-Still-Casella~\cite{Cas86} and Wang~\cite{Wan14} are also sound and produce shortest intervals in a specific sense, but are intricate to implement and computationally very expensive.

\paragraph{Sequential setting.}
The Okamoto bound provides $\varepsilon$, $k$, or $\delta$ given the other two; thus it applies to the sequential setting, too.
For the Clopper-Pearson interval, we use the recent result that its number of required samples is maximal when $\hat p = \frac{1}{2}$~\cite{MWW24}.
Based on this worst case, we can precompute the smallest $k$ where the interval width is $\leq 2\varepsilon$ and perform a fixed-$k$ evaluation.

\subsection{Discussion}
\label{sec:ForProbabilitiesDiscussion}

\paragraph{Recipes for sequential SMs.}
The minimum $k$ may depend on $p$; \eg for Clopper-Pearson, lower $k$ suffice for $p$ close to $0$ or $1$.
A truly sequential method could exploit this.
Any fixed SM can be converted to truly sequential in the Chow-Robbins style by checking after every sample if half-width $\leq \varepsilon$ is met, resulting in methods like ``sequential Clopper-Pearson''.
They however are \textcolor{BrickRed}{\bf not sound} as in general sample mean $\hat{p}$ and precision $|I|$ are correlated~\cite{JSD19}.

We may first spend a fraction of the ``error budget'' $\delta$ to get a rough interval estimate of $p$, and then calculate the number of samples required (given the remaining part of $\delta$) based on the worst case in this interval, \eg the value closest to $\frac{1}{2}$ for Clopper-Pearson as in~\cite{BS24}.
Jégourel \etal's two-step approach~\cite{JSD19} similarly uses the Massart bound which improves on Okamoto's if $p$ is known to be away from $\frac{1}{2}$.
While \textcolor{OliveGreen}{\bf sound} and better than precomputation, these are two-step (generalisable to $n$-step), not truly sequential, approaches.

Frey~\cite{Fre10} calculates a $\delta^{*}$ a priori so that, if a confidence level of $\gamma^{*} = 1 - \delta^{*}$ is used in each iteration of a Chow-Robbins-style sequentialisation of a sound fixed SM, the overall coverage probability \textcolor{OliveGreen}{\bf sound}ly comes out to $\gamma$.
However, computing such a $\delta^{*}$ becomes very hard already for small $k$ (around $100$-$1000$).
The \textsc{AdaSelect}~\cite{DGW02} and \textsc{EBStop}~\cite{MSA08} algorithms are sequential methods for the relative error setting,
recently generalised by Parmentier and Legay~\cite{PL24}. %

\begin{figure}[t]
	\centering
	\pgfplotsset{every axis/.append style={mainbodyplot}}
	\begin{minipage}{0.36\textwidth}
	\centering
	\begin{tikzpicture}
		\begin{axis}[confleftplot,xlabel=Wald interval\strut]
			\addplot [no marks,draw=black] table [x index=0, y index=1] {experiments/binom_coverage/coverage_fixed_n_stdci_normal.csv};
			\addplot [no marks,draw=red,thick] coordinates {(0,0.9) (1,0.9)};
		\end{axis}
	\end{tikzpicture}
	\end{minipage}%
	\begin{minipage}{0.32\textwidth}
	\centering
	\begin{tikzpicture}
		\begin{axis}[confplot,xlabel=Okamoto\strut]
			\addplot [no marks,draw=black] table [x index=0, y index=1] {experiments/binom_coverage/coverage_fixed_n_okamoto.csv};
			\addplot [no marks,draw=red,thick] coordinates {(0,0.9) (1,0.9)};
		\end{axis}
	\end{tikzpicture}
	\end{minipage}%
	\begin{minipage}{0.32\textwidth}
	\centering
	\begin{tikzpicture}
		\begin{axis}[confplot,xlabel=Clopper-Pearson\strut]
			\addplot [no marks,draw=black] table [x index=0, y index=1] {experiments/binom_coverage/coverage_fixed_n_clopper_pearson.csv};
			\addplot [no marks,draw=red,thick] coordinates {(0,0.9) (1,0.9)};
		\end{axis}
	\end{tikzpicture}
	\end{minipage}

	\begin{minipage}{0.36\textwidth}
	\centering
	\begin{tikzpicture}
		\begin{axis}[confleftplot,xlabel=Chow-Robbins method\strut]
			\addplot [no marks,draw=black] table [x index=0, y index=1] {experiments/binom_coverage/coverage_fixed_eps_chow_robbins.csv};
			\addplot [no marks,draw=red,thick] coordinates {(0,0.9) (1,0.9)};
		\end{axis}
	\end{tikzpicture}
	\end{minipage}%
	\begin{minipage}{0.32\textwidth}
	\centering
	\begin{tikzpicture}
		\begin{axis}[confplot,xlabel=seq.\ Clopper-Pearson\strut]
			\addplot [no marks,draw=black] table [x index=0, y index=1] {experiments/binom_coverage/coverage_fixed_eps_sequential_clopper_pearson.csv};
			\addplot [no marks,draw=red,thick] coordinates {(0,0.9) (1,0.9)};
		\end{axis}
	\end{tikzpicture}
	\end{minipage}%
	\begin{minipage}{0.32\textwidth}
	\centering
	\begin{tikzpicture}
		\begin{axis}[confplot,xlabel=Chen's method\strut]
			\addplot [no marks,draw=black] table [x index=0, y index=1] {experiments/binom_coverage/coverage_fixed_eps_chen.csv};
			\addplot [no marks,draw=red,thick] coordinates {(0,0.9) (1,0.9)};
		\end{axis}
	\end{tikzpicture}
	\end{minipage}
	\caption{Coverage for fixed (top, $k=50$) and sequential methods (bottom, $\varepsilon=0.05$).}
	\label{fig:CoverageProbabilities}
\end{figure}

\begin{example}[Soundness]
\label{ex:SoundnessNumerically}
To evaluate SMs for probabilities, %
we can directly compute their coverage probabilities for binomial distributions (see \iftoggle{arxiv}{\Cref{app:addCoverage}}{\cite[App.~C]{BHMWW24-arxiv}}). %
To give a visual comparison of coverage probabilities highlighting the concern for soundness, we fix confidence level $\gamma = 0.9$ and calculate the coverage probabilities for various methods in \Cref{fig:CoverageProbabilities}.
The top row shows $p_{cov}(50,p)$ as achieved by the unsound Wald, the sound-but-inefficient Okamoto, and the sound-and-recommended Clopper-Pearson interval.
Indeed, the Wald interval does not attain coverage $\geq 0.9$ for many values of $p$, while the others do.
Similarly, the bottom row concerns the truly sequential setting with $\varepsilon = 0.05$ and shows that the coverage probabilities for the unsound Chow-Robbins and sequential Clopper-Pearson methods are sometimes below~$\gamma$.
For Chen's method, we confirm its empirical but unproven soundness.
\end{example}

\begin{example}[Sample Efficiency]
We also observe that Okamoto significantly overshoots the desired confidence, which increases the number of samples it requires.
Indeed, $I_\mathit{\!CP}$ always needs fewer samples than $I_\mathit{Oka}$ experimentally~\cite[Sec.~3.3]{MWW24}.
For example, with $\gamma = 0.95$ and $\varepsilon = 0.01$,
we get a minimum $k$ of $18\,445$ for Okamoto, independent of $p$ or~$\hat p$.
For Clopper-Pearson, we get a worst-case $k$ of $9\,701$; for $k$ given and $p$ closer to $0$ or $1$, we would in turn get much smaller $\varepsilon$.
\end{example}

\paragraph{Hypothesis testing.}
All methods that produce sound confidence intervals $I = [l, u]$ can be turned into sound hypothesis tests for deciding whether $p \sim p_t$ for a threshold $p_t$ and $\sim\; \in \set{ \leq, \geq }$:
assuming $\sim$ is $\leq$, answer \textit{yes} if $u \leq p_t$, \textit{no} if $l \geq p_t$, and \textit{unknown} otherwise.
A dedicated and efficient method for hypothesis testing is the sequential probability ratio test (SPRT)~\cite{Wal45}.
It is \textcolor{OliveGreen}{\textbf{sound}} if we consider its \emph{indifference region} (an interval $[p - \varepsilon_i, p + \varepsilon_i]$ where the SPRT is allowed to give wrong answers) to fulfil the role of the $\varepsilon$ error in our PAC requirement.

\subsubsection{Our recommendation} is to implement the Clopper-Pearson interval for $P_{\diamond\,G}$ and $P_{\diamond\,G}^{\leq c}$ in the fixed and sequential settings as it is proven sound \emph{and} sample-efficient.
In the sequential setting, a two-step approach can be considered.
The Okamoto bound, employed by \emph{most} tools using a sound method (\Cref{sec:Implementation}), needs too many samples and produces overly conservative intervals:
it should not be used for estimating probabilities.
We highlight that our recommendations are independent of the underlying system dynamics and thus apply to SMC~in~general.

\section{Sound SMC for Expected Rewards}
\label{sec:ForRewards}

For unbounded expected rewards, %
each simulation run is a sample from an unknown \emph{path reward distribution} $\mu$ with outcomes in $[0, \infty)$. %
Given $k$ samples, we want a PAC guarantee for the expected value $E_{\diamond G}$.
In general, \emph{no} SM can guarantee coverage for unknown distributions with unbounded support.
Intuitively, $k$ gives the SM an indication of how likely it is to have missed some paths; with bounded support (e.g.\ for probabilities), this allows to quantify the uncertainty and thus~$\varepsilon$.
With unbounded support, outcomes with extremely low probability can dominate the expectation if they are even more extremely large.
We discuss the general case in \iftoggle{arxiv}{\Cref{app:bounds_impossibility}}{\cite[App.~A]{BHMWW24-arxiv}} and here provide an illustrative example.

\begin{example}
	The DTMC in \Cref{fig:ExampleForUnsound} %
	has $E_{=\{t_1,t_2\}} = c$.
	If $k$ is significantly lower than $n$, however, the SM likely only sees paths to $t_2$ and---if it is not sound---returns a confidence interval with an upper bound far below $c$.
\end{example}
\begin{figure}[t]
% == Tikz
\newcommand{\drawcirc}{\node[draw,circle,minimum size=.7cm, outer sep=1pt]}
\newcommand{\drawdiamond}{\node[draw,diamond,minimum size=.7cm, outer sep=1pt]}
\newcommand{\drawbox}{\node[draw,rectangle,minimum size=.7cm, outer sep=1pt]}
\newcommand{\drawdummy}{\node[minimum size=0,inner sep=0]}
\newcommand{\drawnum}{\node[minimum size=.4cm,inner sep=1pt]}

\centering
	% MW: See the comment below the caption for this figure for more information on why I did not include the self-looping tikz picture.
%	\begin{tikzpicture}
%		\drawcirc (s) at (0,0) {$\mathsf{s}$};
%		\draw[->]  (s) to[loop right]  node [midway] {$\mathsf{a}$} (s);
%	\end{tikzpicture}
%	\hspace{1cm}
\begin{minipage}[b]{0.5\linewidth}
\centering
\begin{tikzpicture}
	\drawcirc (s) at (0,0) {$s$};
	\drawdummy (mid1) at (1.7,0) {};
	\drawcirc (t1) at (2,0.5) {$t_1$};
	\drawdummy (t1r) at (2,1.05) {$+cn$};
	\drawcirc (t2) at (2,-0.5) {$t_2$};
	\drawdummy (t2r) at (2,-1.05) {$+0$};
	\draw[->] (-0.75,0) to (s);
	\draw[->] [out=10,in=-180] (s) to node [above,pos=0.5,inner sep=0.5mm] {\scriptsize$\frac{1}{n}~~~$} (t1);
	\draw[->] [out=-10,in=-180] (s) to node [below,pos=0.5,inner sep=1mm] {\scriptsize$\frac{n-1}{n}~~~$} (t2);
	\draw[->] [loop,out=-20,in=20,looseness=4] (t1) to node [right,inner sep=1mm] {\scriptsize$1$} (t1);
	\draw[->] [loop,out=-20,in=20,looseness=4] (t2) to node [right,inner sep=1mm] {\scriptsize$1$} (t2);
\end{tikzpicture}
\caption{High reward with low probability}
\label{fig:ExampleForUnsound}
\end{minipage}%
\begin{minipage}[b]{0.5\linewidth}
\centering
\begin{tikzpicture}
	\drawcirc (s) at (0,0) {$s$};
	\drawdummy (sr) at (0,-0.55) {$+1$};
	\drawcirc (t) at (2,0) {$t$};
	\draw[->] (-0.75,0) to (s);
	\draw[->] [out=-10,in=-180] (s) to node [above,pos=0.5,inner sep=1mm] {\scriptsize$\frac{1}{2}$} (t);
	\draw[->] [loop,out=10,in=80,looseness=4] (s) to node [above right,inner sep=0.33mm] {\scriptsize$\frac{1}{2}$} (s);
	\draw[->] [loop,out=-20,in=20,looseness=4] (t) to node [right,inner sep=1mm] {\scriptsize$1$} (t);
\end{tikzpicture}
\caption{Unbounded path rewards}
\label{fig:ExampleForUnbounded}
\end{minipage}
\vspace{-12pt}
\end{figure}
When $\support{\mu} \subseteq [a, b]$, a number of sound SMs exists, which we survey below. %
Then, in \Cref{sec:BoundingRewards}, we avoid the general impossibility of the unbounded case in two ways.
First, we exploit structural information about the DTMC to reduce to the bounded case.
Second, we introduce a novel perspective by considering a new notion of statistically converging lower bounds.

\subsection{Statistical Methods for Bounded Distributions}
\label{sec:StatisticsAB}

The textbook confidence interval for the mean of an unknown distribution is
the \emph{normal interval}:
$I_\mathit{Norm} = \hat X \pm z_\delta \hat \sigma / \sqrt{k}$,
where $\hat \sigma$ is the empirical standard deviation and $z_\delta$ the $(1 - \frac{\delta}{2})$-quantile of the standard normal distribution.
(Obtaining $z_\delta$ via the Student's-$t$ distribution with $k - 1$ degrees of freedom instead may work better for smaller~$k$.)
While asymptotic statement of Chow and Robbins about the normal interval also holds in the general, non-binomial setting, these methods are \textcolor{BrickRed}{\bf not sound}.
For example, on the DTMC of \Cref{fig:ExampleForUnsound} with $n=1000$, $c=1$ and $k = 500$, we experimentally found $I_\mathit{Norm}$ for $\gamma = 0.95$ to have a coverage probability of only~$\approx 0.39 \ll 0.95$.
Knowing bounds $[a, b]$ on the distribution's support, however, the \textcolor{OliveGreen}{\bf sound} methods we list below in this section are available.

\paragraph{Hoeffding's inequality.}
The Okamoto bound of \Cref{sec:ForProbabilitiesSound} is a special case of Hoeffding's inequality, which actually bounds the sum of independent (not necessarily i.d.) bounded random variables~\cite{Hoe63}.
It states that
\begin{equation*}
\mathbb{P}(\hat X - \mathbb{E}(X) \geq \varepsilon)
\leq
\mathrm{e}^{-\frac{2k\varepsilon^2}{(b - a)^2}}
\end{equation*}
and accordingly
$
\varepsilon \geq (b - a)\sqrt{(\ln \sfrac{2}{\delta}) / 2k}
$
for the two-sided case by distributing $\delta$ equally.
Note that Chernoff bounds~\cite{Che52} can be used to derive this inequality.

\paragraph{Bennett's and Bernstein's inequalities.}
Bennett's inequality can provide tighter bounds on a sum of random variables than Hoeffding's by %
taking the variance $\sigma^2$ into account~\cite{Ben62}.
However, not knowing the distribution, we do not know $\sigma^2$ either.
We could insert bounds%
, a simple one being $\sigma^2 \leq \frac14 (b-a)^2$.
Then, however, Bennet's inequality is strictly worse than Hoeffding's~\cite[App.~B]{MWW24}.
Bernstein's inequality~\cite{Ber24,Ber34} is a relaxation of Bennet's that is easier to compute, but yields even wider intervals.
Thus, in our setting, Hoeffding's inequality is preferable. %

\paragraph{Dvoretzky-Kiefer-Wolfowitz(-Massart) inequality (DKW).}
The DKW~\cite{DKW56,Mas90} relates the cumulative distribution function (cdf)
$F(x)=\mathbb{P}(X \leq x)$
of the unknown distribution $\mu$
to the empirical cdf
$\hat{F}(x)=\frac{1}{k} \sum_{i=1}^k\mathbbm{1}_{X_i\leq x}$ as follows:
\[
\mathbb{P}\Bigl({\textstyle\sup_{x\in\mathbb R}}\,|\hat{F}(x) - F(x)| > \varepsilon \Bigr) \le 2e^{-2k\varepsilon^2}.
\]
DKW is about thresholds, \ie in our setting the probability of exceeding a certain reward.
It characterizes a confidence band in which the real cdf lies with high probability.
This can be used to derive bounds on the expected value by computing the expected values of the best- and worst-case cdfs within the confidence band.
Formally, %
let $C$ be a confidence band containing an uncountable set of cdfs;
then with probability at least $1-2e^{-2k\varepsilon^2}$ we have
\[
\min_{\underline{F}\in C}\mathbb{E}(Y\mid Y\sim \underline{F}) \leq \mathbb{E}(X) \leq \max_{\overline{F}\in C}\mathbb{E}(Y\mid Y\sim \overline{F}).
\]
The cdfs minimising or maximising the expectation can be easily computed, as they are the upper and lower bound of the confidence band, respectively:
\begin{align*}
	\underline{F}(x)=\min\left\{ 1, \hat{F}(x) + \sqrt{\tfrac{1}{2k} \ln \tfrac{2}{\delta}} \right\} \quad \quad \overline{F}(x)=\max\left\{ 0, \hat{F}(x) - \sqrt{\tfrac{1}{2k}\ln \tfrac{2}{\delta}} \right\}
\end{align*}
\begin{wrapfigure}[10]{r}{0.33\linewidth}
	\vspace{-0.5cm}
	\includegraphics[width=\linewidth]{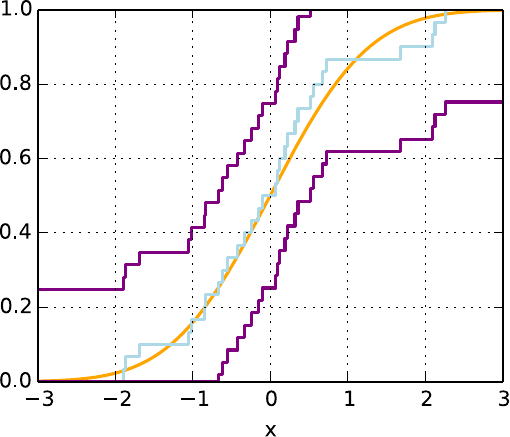}
	\vspace{-0.75cm}
	\caption{The DKW cdfs}
	\label{fig:DKW}
\end{wrapfigure}
\Cref{fig:DKW} illustrates the DKW\footnote{\Cref{fig:DKW} is based on file \href{https://commons.wikimedia.org/wiki/File:DKW_bounds.svg}{commons.wikimedia.org/wiki/File:DKW\_bounds.svg} (CC0).}
for $[a, b] = [-3, 3]$, with $F$ the smooth orange line, $\hat{F}$ the light blue step function in the center, and the outer purple step functions being $\underline{F}$ (to the left, with a higher probability for smaller outcomes) and $\overline{F}$ (to the right).
All steps of $\underline{F}$ are the same as those of $\hat{F}$ except that we \enquote{map} the largest $\sqrt{(\ln \sfrac{2}{\delta})/{2k}}$ fraction of steps to the lower bound (\ie at $x=a$). %
Similarly, $\overline{F}$ shifts probability mass into the upper bound~$b$.
In the worst case, the expectations of $\underline{F}$ or $\overline{F}$ coincide with Hoeffding, but provide a tighter confidence interval when few samples have an extremal value of $a$ or $b$.
Applying DKW is therefore especially advantageous when one of the a priori bounds $a$ and $b$ is very loose and samples are far above/below it.
The best case is obtained when all samples coincide, where the width of confidence interval halves when using DKW as opposed to Hoeffding.
In any case, the DKW interval is always contained in the Hoeffding interval.

\begin{proposition}\label{lem:dkw-v-hoeffding}
For given confidence level $\gamma$ and set of samples from a distribution with bounds $[a,b]$, let $[l_d,u_d]$ and $[l_h,u_h]$ be confidence intervals given by DKW and Hoeffing's inequality, respectively. 
Then $l_h\leq l_d < u_d \leq u_h$ and~$\frac{u_h-l_h}{u_d-l_d} \leq 2$.
\end{proposition}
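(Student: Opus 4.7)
My plan is to compare both intervals via the identity $\mathbb{E}(Y) = a + \int_a^b (1 - F(y))\,dy$, valid whenever $Y$ has support in $[a,b]$ with cdf $F$. Let $\eta = \sqrt{(\ln\sfrac{2}{\delta})/(2k)}$, so that the Hoeffding half-width is $(b-a)\eta$ and the DKW envelopes are $\underline F = \min\{1,\hat F+\eta\}$ and $\overline F = \max\{0,\hat F-\eta\}$. Applying the identity, I get
\begin{align*}
\hat X - l_h &= (b-a)\eta, & u_h - \hat X &= (b-a)\eta,\\
\hat X - l_d &= \int_a^b \bigl(\underline F(y)-\hat F(y)\bigr)\,dy, & u_d - \hat X &= \int_a^b \bigl(\hat F(y) - \overline F(y)\bigr)\,dy.
\end{align*}
This reframes everything in terms of the pointwise gaps between $\hat F$ and its two envelopes.

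For the inclusion $[l_d,u_d]\subseteq[l_h,u_h]$, I would observe that $\underline F(y)-\hat F(y) = \min\{\eta,\,1-\hat F(y)\} \leq \eta$ pointwise, so $\hat X - l_d \leq (b-a)\eta = \hat X - l_h$, yielding $l_h \leq l_d$. The symmetric argument $\hat F(y) - \overline F(y) = \min\{\eta,\,\hat F(y)\} \leq \eta$ gives $u_d \leq u_h$. Strict inequality $l_d < u_d$ follows because $\underline F(y) > \overline F(y)$ for every $y\in[a,b]$ (the gap is at least $\eta > 0$ as shown next), so integrating over the positive-length interval $[a,b]$ yields a strictly positive difference.

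For the width ratio, I would verify the pointwise lower bound $\underline F(y) - \overline F(y) \geq \eta$ by a short case split on whether $\hat F(y)$ lies in $[0,\eta)$, $[\eta,1-\eta]$, or $(1-\eta,1]$: in the middle case both clips are inactive and the gap is exactly $2\eta$; in the boundary cases one clip activates and the gap equals $\hat F(y)+\eta$ or $1-\hat F(y)+\eta$, each at least $\eta$. Integrating,
\[
u_d - l_d = \int_a^b \bigl(\underline F(y) - \overline F(y)\bigr)\,dy \geq (b-a)\eta = \tfrac{1}{2}(u_h - l_h),
\]
which rearranges to $(u_h-l_h)/(u_d-l_d) \leq 2$.

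The one point requiring care is the three-case bookkeeping for $\underline F - \overline F$, since the $\min/\max$ clips can activate independently; once those cases are written out, everything else is a one-line integral comparison. I expect this to be the main (albeit minor) obstacle — the rest is the routine rewriting of the two endpoints via the tail-integral formula for the expectation.
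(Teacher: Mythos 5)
Your proof is correct, but it takes a genuinely different route from the paper's. The paper argues combinatorially over the order statistics $X_{(1)},\dots,X_{(k)}$: it writes $l_d$ and $u_d$ explicitly as averages in which the top (resp.\ bottom) $\eta k$ samples are replaced by $a$ (resp.\ $b$), handles the non-integer fraction via $m=\lfloor \eta k\rfloor$, bounds each replaced term by $b-a$ to get $l_h\le l_d$ and $u_d\le u_h$, and then establishes the width bound by an extremal-configuration argument (the DKW width is minimised when all samples coincide, where it equals $\eta(b-a)$). You instead work at the level of the cdf envelopes and the tail-integral identity $\mathbb{E}(Y)=a+\int_a^b(1-F(y))\,dy$: the inclusion follows from the pointwise bounds $\underline{F}-\hat F\le\eta$ and $\hat F-\overline{F}\le\eta$, and the factor-2 bound from the pointwise gap $\underline{F}-\overline{F}\ge\eta$ integrated over $[a,b]$. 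This buys you a cleaner argument in two places: no bookkeeping for fractional $\eta k$, and the width lower bound becomes a one-line integral comparison rather than the paper's somewhat informal minimisation over sample configurations. Two small caveats, both shared with (or no worse than) the paper's own proof: your three-case split should also cover the regime $\eta>\sfrac{1}{2}$ where both clips can be active simultaneously (there the gap is $1\ge\eta$, so the bound still holds as long as $\eta\le 1$, which is implicitly assumed throughout, since for $\eta>1$ the unclamped Hoeffding interval exceeds $[a,b]$ and the ratio claim degenerates); and the strict inequality $l_d<u_d$ needs $a<b$, which is the only sensible reading of the statement.
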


\noindent DKW is also used by Phan \etal\cite{PTL21,LT19} with a view towards machine learning applications, who attribute its application to expected rewards to Anderson~\cite{And69}.

\paragraph{Sequential setting and hypothesis testing.}
Hoeffding's inequality applies in the sequential setting in the same way as the Okamoto bound.
For DKW, we could precompute $k$ based on the worst case, but this coincides with Hoeffding.
As mentioned, the Chow-Robbins scheme remains applicable and unsound.
The \textsc{AdaSelect} algorithm we mentioned in \Cref{sec:ForProbabilitiesDiscussion} also works soundly for bounded distributions when a relative error is desired.
For hypothesis testing, the SPRT in principle also applies to bounded distributions, and will in general perform better than the DKW for testing a single threshold.
The latter's advantage is that it provides an entire confidence \emph{band} around the cdf and thereby allows deriving the expected reward as well as probability bounds on \emph{all} reward thresholds \emph{at once}.
In this way, the DKW can also be used to tackle quantile problems.

\subsubsection{Our recommendation}\!%
for estimating a bounded distribution
is to use DKW in the fixed-$k$ setting and resort to Hoeffding's inequality when given $\varepsilon$. %

\subsection{Bounding Expected Rewards}
\label{sec:BoundingRewards}

For a full sound SMC procedure for expected rewards, it remains to find the bounds $[a, b]$ on the path rewards.
As rewards are non-negative, $a = 0$ is a safe lower bound (though larger $a$ may give lower $\varepsilon$ or $k$), leaving $b$ to be determined.

\subsubsection{Instantaneous and cumulative rewards.}
\label{sec:BoundingInstCumulRewards}
We know $\overline{r}_\mathit{\!max}$, an upper bound on the maximum state reward (\Cref{sec:Background}).
For instantaneous reward properties $E_{=c}$ and $E_{=G}$, a path's reward is at most $\max_{s\in S}R(s)$, making $b = \overline{r}_\mathit{\!max}$ the tightest safe upper bound we can give.
For step-bounded reward properties $E_{\leq c}$ and $E_{\diamond\,G}^{\leq c}$, we can upper-bound the reward of a path $\pi = s_0 \ldots$ by  $b = (c + 1) \cdot \overline{r}_\mathit{\!max} \geq \sum_{i=0}^{c}R(s_i)$.
Exploiting specific structures in higher-level languages may yield tighter bounds.

\subsubsection{Reachability rewards.}
For unbounded reachability rewards $E_{\diamond\,G}$, we need to bound the accumulated path reward until visiting a state in $G$, i.e.\ $\totalreward(\pi) = \sum_{i=0}^{\mathit{idx}(\pi, G)}R(\pi[i])$.
We assume $\mathbb{E}(\totalreward)$ to be finite,
but $\totalreward$ can still be unbounded:

\begin{example}\label{ex:unboundedPathReward}
	Consider the very simple DTMC in \Cref{fig:ExampleForUnbounded}.
	We have $E_{\diamond\,G} = \sum_{i=1}^\infty i \cdot (\sfrac{1}{2})^i = 2$, but the reward of a single path is unbounded, since every reward $v\in\mathbb{N}$ is obtained with positive probability $(\sfrac{1}{2})^v$.
\end{example}
This is not a degenerate case, but occurs whenever there exists a cycle with non-zero rewards.
Consequently, we cannot directly apply the SMs from \Cref{sec:StatisticsAB}.
Nevertheless, we can give meaningful estimations, by requiring additional knowledge or by relaxing the constraints on the result.

\paragraph{Bounding large values.}
As we cannot bound $\totalreward$, we aim to bound the effect that large values have on $\mathbb{E}(\totalreward)$.
Let us work in a general setting, as follows:

\begin{definition}
\label{def:BoundingSet}
Let $(\Omega, \mathcal{F}, \mathbb{P})$ be a probability space, $X\colon \Omega \to \RRpos$ a random variable with finite expectation, and $B \in \mathcal{F}$.
We call $B$ an $\varepsilon$-\emph{bounding set} if (i)~$\int_{\overline{B}} X \,\mathrm{d}\mathbb{P} < \varepsilon$ (with $\overline{B} = \Omega \setminus B$), and (ii)~$\forall\,\omega \in B\colon X(\omega) \in [a, b]$.
\end{definition}
(Concretely, $X$ could be $\totalreward$.)
If $B$ is a bounding set, then we can rewrite
\begin{equation*}
	\mathbb{E}(X) \,= \int_{\Omega} \!\!X \,\mathrm{d}\mathbb{P} \,= \int_{B} \!\!X \,\mathrm{d}\mathbb{P} +\!\! \int_{\overline{B}} \!\!X \,\mathrm{d}\mathbb{P} \,< \int_{B} \!\!X \,\mathrm{d}\mathbb{P} + \varepsilon.
\end{equation*}
Observe that $\int_{B} X \,\mathrm{d}\mathbb{P} = \mathbb{E}(X^B)$, where $X^B(\omega) = X(\omega)\text{ if }X \in B\text{ else }0$.
Since we chose $B$ such that $X(\omega) \in [a, b]$ for all $\omega \in B$, $X^B$ clearly is bounded and we can apply our previous methods to obtain a statistical estimate of $X^B$.
Inserting in the above equation then yields a bound on the overall expectation of $X$.

One possible choice for $B$ would be $\set{\omega \mid |X(\omega)| < t}$ for a sufficiently large $t$.
Such a $t$ exists for any random variable with finite expectation by positivity and additivity of $\mathbb{P}$ (when $\mathbb{E}(X) = \int_\Omega X \,\mathrm{d}\mathbb{P} < \infty$ we necessarily have $\lim_{t \to \infty} \int_{\{|X| > t\}} X \,\mathrm{d}\mathbb{P} = 0$).
However, %
without further assumptions we cannot derive such a $t$ or any other kind of bounding set just by sampling.
Thus, in the following we exploit that DTMCs give some structure to the random variable.

\paragraph{Geometric path lengths.}
While the value of $\totalreward$ in \Cref{ex:unboundedPathReward} can be arbitrarily large, as long as the expectation $\mathbb{E}(\totalreward)$ is finite, this only happens with vanishingly low probabilities.
This is the case for DTMCs (and many other Markov systems, see \Cref{rem:mixing-time}) in general:
Intuitively, the number of steps until $G$ is reached is (roughly) geometrically distributed.
\begin{lemma}\label{lem:bounding-set}
	Let $\varepsilon > 0$. %
	Choose $q$ such that
	$$
	\overline{|S|}\cdot \overline{r}_\mathit{\!max} \cdot (1 - (\underline{p}_\mathit{min})^{\overline{|S|}})^q \cdot (q - q (\underline{p}_\mathit{min})^{\overline{|S|}}+ 1) \cdot (\underline{p}_\mathit{min})^{-\overline{|S|}} \,<\, \varepsilon.
	$$
	Then $B = \lozenge^{\leq q \cdot \overline{|S|}} G = \set{ \pi \mid \mathit{idx}(\pi, G) \leq q \cdot \overline{|S|} }$ is a bounding set.
\end{lemma}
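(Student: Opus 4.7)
I would verify the two defining conditions of an $\varepsilon$-bounding set (\Cref{def:BoundingSet}) for the path-reward variable $\totalreward$, taking $a = 0$ and $b = (q \cdot \overline{|S|} + 1) \cdot \overline{r}_\mathit{\!max}$. Condition~(ii) is immediate: every $\pi \in B$ satisfies $\mathit{idx}(\pi, G) \leq q \cdot \overline{|S|}$, so $\totalreward(\pi)$ sums at most $q \cdot \overline{|S|} + 1$ state rewards, each bounded by $\overline{r}_\mathit{\!max}$.

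Condition~(i) is the substance. The key structural fact is a geometric tail bound on $N(\pi) \defeq \mathit{idx}(\pi, G)$: since $\mathbb{E}(\totalreward)$ is finite, $G$ is reached almost surely, and from every reachable non-goal state there is a simple path of length at most $\overline{|S|}$ into $G$ whose probability is at least $\beta \defeq (\underline{p}_\mathit{min})^{\overline{|S|}}$. Setting $\alpha \defeq 1 - \beta$ and applying the strong Markov property block-by-block yields
\[
\mathbb{P}(N > j \cdot \overline{|S|}) \;\leq\; \alpha^j \quad \text{for every } j \in \NN.
\]

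I would then decompose $\int_{\overline{B}} \totalreward \,\mathrm{d}\mathbb{P}$ by the $\overline{|S|}$-block containing $N$. Each block contributes at most $\overline{|S|} \cdot \overline{r}_\mathit{\!max}$ of reward, so with $Z \defeq \lceil N/\overline{|S|}\rceil$ we have $\totalreward \leq \overline{|S|} \cdot \overline{r}_\mathit{\!max} \cdot Z$ and $\overline{B} = \{\,Z > q\,\}$. The tail-sum identity $\mathbb{E}[Z\,\mathbbm{1}_{Z > q}] = (q+1)\mathbb{P}(Z > q) + \sum_{m > q}\mathbb{P}(Z > m)$ combined with the geometric tail $\mathbb{P}(Z > j) \leq \alpha^j$ and $\sum_{m \geq q+1}\alpha^m = \alpha^{q+1}/\beta$ produces an upper bound of the form $\overline{|S|} \cdot \overline{r}_\mathit{\!max} \cdot \alpha^q \cdot (\text{affine in } q) \cdot \beta^{-1}$, which is then shown to be majorised by the expression $\overline{|S|} \cdot \overline{r}_\mathit{\!max} \cdot \alpha^q \cdot (q - q\beta + 1) \cdot \beta^{-1}$ in the hypothesis on $q$, and hence strictly below $\varepsilon$.

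The main obstacle is the careful bookkeeping needed to recover precisely the coefficient $(q - q\beta + 1) = (q\alpha + 1)$ of the lemma: a direct pointwise bound $\totalreward(\pi) \leq (N(\pi)+1)\overline{r}_\mathit{\!max}$ yields a polynomial-times-geometric sum that is slightly different in shape, so routing through the integer counter $Z$ and applying the tail-sum identity above is what keeps the constants clean. A secondary subtlety is justifying that the per-block reaching probability $\beta$ holds uniformly over the starting state of each block; this follows from the Markov property together with the fact that $\beta$ is already a worst-case bound over all reachable non-goal states (indeed, over all states, by $\overline{|S|} \geq |S|$ and $\underline{p}_\mathit{min} \leq p_\mathit{min}$).
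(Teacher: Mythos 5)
Your proposal is correct and takes essentially the same route as the paper's proof: split the run into episodes of length $\overline{|S|}$, use the worst-case simple-path probability $(\underline{p}_\mathit{min})^{\overline{|S|}}$ to obtain a geometric tail on the episode count, and bound the tail reward linearly in the episode index, arriving at the same coefficient $q\,(\underline{p}_\mathit{min})^{\overline{|S|}}+1$ that the paper's appendix derives. Your tail-sum bookkeeping via $Z=\lceil N/\overline{|S|}\rceil$ is merely a slightly more careful rendering of the paper's direct summation against the dominating geometric distribution.
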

\begin{proof}[Sketch]
	Every state $s$ has a path of length at most $|S|$ to the goal $G$ by assumption.
	Such a path has probability at least $(p_{\min})^{|S|}$ and reward at most $|S|\cdot r_{\max}$.
	Considering $|S|$ steps as an \enquote{episode}, we can geometrically lower bound the probability to reach $G$ after $q$ episodes, and use this to upper bound the reward.
	See \iftoggle{arxiv}{\Cref{app:bounding-set}}{\cite[App.~B.1]{BHMWW24-arxiv}} for the full proof.
\end{proof}

\begin{corollary} \label{stm:structural_total_reward_bound}
	Given bounds $\overline{|S|}\geq |S|$, $0 < \underline{p}_\mathit{min} \leq p_{\min}$, and $\overline{r}_\mathit{\!max} \geq r_{\max}$, we can give PAC guarantees on $E_{\diamond\,G}$.
\end{corollary}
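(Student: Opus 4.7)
The plan is to combine \Cref{lem:bounding-set} with a sound statistical method for bounded distributions from \Cref{sec:StatisticsAB}. Given a target half-width $\varepsilon$ and confidence level $\gamma$, I first split the error budget into a truncation part $\varepsilon_t$ (for paths ignored outside the bounding set) and a statistical part $\varepsilon_s$ (for the estimator), chosen so that $\varepsilon_t + 2\varepsilon_s \leq 2\varepsilon$; a convenient choice is $\varepsilon_t = \varepsilon$, $\varepsilon_s = \varepsilon/2$.

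First, I instantiate \Cref{lem:bounding-set} with $\varepsilon_t$ in place of $\varepsilon$ to obtain a positive integer $q$ such that $B = \lozenge^{\leq q \cdot \overline{|S|}} G$ is an $\varepsilon_t$-bounding set for $\totalreward$. Because every $\pi \in B$ reaches $G$ within $q \cdot \overline{|S|}$ steps and each visited state contributes at most $\overline{r}_\mathit{\!max}$, the truncated variable $X \defeq \totalreward \cdot \mathbbm{1}_B$ takes values in $[a, b] = [0,\, q \cdot \overline{|S|} \cdot \overline{r}_\mathit{\!max}]$, with both endpoints computed purely from the syntactic bounds $\overline{|S|}$, $\underline{p}_\mathit{min}$, and $\overline{r}_\mathit{\!max}$.

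Next, I draw simulation runs but abort each one after at most $q \cdot \overline{|S|}$ steps, recording reward $0$ whenever the run has not yet reached $G$. This produces an i.i.d.\ sample from the distribution of the bounded variable $X$, to which I apply the DKW inequality (or Hoeffding's, in the sequential setting) at confidence $\gamma$ with the bounds $[a, b]$ above. This yields a confidence interval $[l, u]$ of half-width at most $\varepsilon_s$ around the empirical mean $\hat X$ that covers $\mathbb{E}(X)$ with probability at least $\gamma$.

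Finally, the deterministic sandwich $\mathbb{E}(X) \leq E_{\diamond\,G} < \mathbb{E}(X) + \varepsilon_t$ (lower bound by non-negativity of rewards, upper bound by \Cref{def:BoundingSet}) combines with the statistical interval on the same $\gamma$-event to give $l \leq E_{\diamond\,G} < u + \varepsilon_t$. Hence $[l,\, u + \varepsilon_t]$ is a PAC confidence interval of width at most $\varepsilon_t + 2\varepsilon_s \leq 2\varepsilon$ at confidence $\gamma$, establishing the corollary. The main obstacle is not mathematical but quantitative: the $q$ demanded by \Cref{lem:bounding-set} grows very rapidly as $\varepsilon_t$ shrinks and as $\underline{p}_\mathit{min}^{\overline{|S|}}$ shrinks, so $b$---and thus the sample count needed by DKW/Hoeffding, which scales like $b^2/\varepsilon_s^2$---can be astronomical, matching the authors' own caveat that the construction is only a proof of the \emph{possibility} of sound SMC for unbounded reachability~rewards.
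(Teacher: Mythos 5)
Your proposal is correct and follows exactly the route the paper intends: it treats the corollary as an immediate consequence of \Cref{lem:bounding-set} combined with the truncation argument around \Cref{def:BoundingSet} (estimate the bounded variable $X^B$ with a sound method such as DKW/Hoeffding, then widen the interval by the bounding-set error), with your explicit error-budget split $\varepsilon_t + 2\varepsilon_s \leq 2\varepsilon$ merely making precise what the paper leaves implicit. The only cosmetic quibble is the upper bound on the truncated path reward, which should be $(q\cdot\overline{|S|}+1)\cdot\overline{r}_\mathit{\!max}$ to account for the initial state, but this changes nothing in the argument.
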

\begin{remark}\label{rem:mixing-time}
	Due to the worst-case over-approximation involved, $q$ is extremely large even for very small DTMCs and is thus not a practical solution.
	For example, a DTMC with $5$ states, $p_\mathit{min}=0.05$, and $r_\mathit{max}=1$ with a desired bound $\varepsilon=1$ requires $q>10^8$.
	The value $(p_{\min})^{-|S|}$ is closely related to the \emph{mixing time} of a DTMC, see e.g.~\cite{LP17}, which, in our setting, intuitively upper bounds the time until a goal state is reached with high probability.	
	While often a coarse bound, there exists DTMCs for which it is tight~\cite[Fig. 3]{HM18}.
	Determining better bounds on the mixing time (and thus $q$) requires knowledge of the DTMC's state space and transitions, which SMC explicitly does not have access to.
	For Markovian systems other than DTMCs, the geometric path lengths construction can also be used, provided we can obtain similar bounds; we conjecture that a sufficient condition is that the system is finite and goal states are reached almost surely.
\end{remark}

\paragraph*{Lower bounds.}
The inability to practically bound $b$ means that we cannot gain confidence in an \emph{upper} bound on $E_{\diamond\,G}$.
However, since rewards are non-negative, we cannot miss any extreme \enquote{negative} events.
Thus we at least want to derive meaningful \emph{lower bounds}.
Since $0$ trivially is a correct lower bound, %
we need a definition of bounds being \enquote{close} to the true value.
We propose the novel definition of \emph{limit-PAC} lower bounds:
they are not only (i)~sound, but additionally require that (ii)~given enough samples, they (unknowingly) become $\varepsilon$-close.

\begin{definition}
	Let $X$ be a random variable.
	A procedure $\mathcal{A}$ yields \emph{limit-PAC lower bounds} on $\mathbb{E}(X)$ if, for any confidence $\gamma$, the following two conditions hold:
	(i)~For a collection of independent samples $\Xi$ drawn from $X$, we have $\mathbb{P}(\mathcal{A}(\Xi, \gamma) \leq \mathbb{E}(X)) \geq \gamma$.
	(ii)~For any precision $\varepsilon > 0$, there exists a threshold $k_0$ such that for a collection of independent samples $\Xi$ drawn from $X$ with $|\Xi| \geq k_0$, we have $\mathbb{P}(\mathbb{E}(X) - \varepsilon \leq \mathcal{A}(\Xi, \gamma) \leq \mathbb{E}(X)) \geq \gamma$. %
\end{definition}

\begin{remark}
Classical procedures such as normal intervals do not provide limit-PAC bounds.
While they may satisfy condition (ii) and provide enough coverage to satisfy condition (i) in the limit, they can be unsound for many sample sets $\Xi$ that are not \enquote{sufficiently close to the limit.}
\end{remark}
We describe a procedure \enquote{DKW-$\mathbb{E}$-Lower} which provides limit-PAC lower bounds:
For a given set of samples $\Xi$ with $k = |\Xi|$, set $\chi_k = \sqrt{(\ln\sfrac{2}{1-\gamma})/{2k}}$ (using $\chi$ instead of $\varepsilon$ to avoid a clash of notation) and compute the empirical average over~$\Xi$, however setting the largest $\chi_k$ fraction of samples to $0$. %
This is equivalent to computing the expectation of the minimising cdf $\underline{F}(x)$ provided by the DKW (with width $\chi_k$), as explained in \Cref{sec:StatisticsAB}.

\begin{theorem}
\label{thm:DkwLowerPAC}
For any non-negative, finite-expectation random variable $X$, DKW-$\mathbbmsl{E}$-Lower gives limit-PAC lower bounds on $\mathbb{E}(X)$.
\end{theorem}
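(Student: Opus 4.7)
The plan is to derive both conditions of the limit-PAC definition from a single favorable event provided by the DKW inequality. Let $\mathcal{E}_k$ denote the event that $\sup_x |\hat F(x)-F(x)| \le \chi_k$; by the choice $\chi_k = \sqrt{(\ln(2/(1-\gamma)))/(2k)}$, DKW gives $\mathbb{P}(\mathcal{E}_k) \ge \gamma$. As justified at the end of \Cref{sec:StatisticsAB}, the output $\mathcal{A}(\Xi,\gamma)$ is exactly $\mathbb{E}(Y)$ for $Y$ distributed according to $\underline F$. Both conditions will be established pointwise on $\mathcal{E}_k$, so that their probabilistic versions inherit the DKW bound of $\gamma$.

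For condition (i) (soundness), I would note that under $\mathcal{E}_k$, $\underline F(x) = \min\{1,\hat F(x)+\chi_k\} \ge F(x)$ for all $x$. Because $X \ge 0$ and the construction places the ``dropped'' mass at $0$ rather than at $-\infty$, both $X$ and $Y$ are supported in $[0,\infty)$, and the tail formula gives $\mathbb{E}(X) = \int_0^\infty (1-F(x))\,\mathrm{d}x$ and $\mathbb{E}(Y) = \int_0^\infty (1-\underline F(x))\,\mathrm{d}x$. Pointwise domination of the CDFs immediately yields $\mathbb{E}(Y) \le \mathbb{E}(X)$, which proves (i) and provides the upper half of the interval required by (ii).

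For the nontrivial lower bound in (ii), I would control $\mathbb{E}(X)-\mathbb{E}(Y)$ on $\mathcal{E}_k$. Subtracting the two tail integrals gives $\mathbb{E}(X)-\mathbb{E}(Y) = \int_0^\infty (\underline F(x)-F(x))\,\mathrm{d}x$, and the integrand is bounded pointwise by $\min\{2\chi_k,\,1-F(x)\}$: the $2\chi_k$ bound follows from $\underline F \le \hat F + \chi_k \le F + 2\chi_k$, and $1-F(x)$ is a trivial bound. Given $\varepsilon > 0$, I first pick $T$ so that $\int_T^\infty (1-F(x))\,\mathrm{d}x < \varepsilon/2$; such $T$ exists because $\mathbb{E}(X) < \infty$ implies $\mathbb{E}((X-T)_+) \to 0$ as $T \to \infty$ by dominated convergence. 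Then I pick $k_0$ large enough that $2\chi_{k_0} T < \varepsilon/2$. Splitting the integral at $T$ yields, for every $k \ge k_0$ and on $\mathcal{E}_k$, $\mathbb{E}(X)-\mathbb{E}(Y) \le 2\chi_k T + \varepsilon/2 < \varepsilon$, which combined with the soundness upper bound gives $\mathbb{E}(X) - \varepsilon < \mathbb{E}(Y) \le \mathbb{E}(X)$ on an event of probability at least $\gamma$.

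The main obstacle is the interplay between the shrinking DKW band of width $2\chi_k$ and potentially heavy (though integrable) tails of $\mu$: a uniform band bound integrated over the whole half-line diverges, whereas a tail bound alone does not shrink with $k$. The $\min$ split resolves this tension by trading a bounded region---where $\chi_k \to 0$ does the work---against a tail region---where finiteness of $\mathbb{E}(X)$ suffices---without requiring any quantitative tail-decay assumption on $\mu$. A minor subtlety is verifying that the procedure's discrete distribution really equals $\underline F$ and that the atom at $0$ keeps the tail identities valid; both are immediate once one writes out the jumps of $\underline F$ and uses $X \ge 0$.
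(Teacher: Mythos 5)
Your proposal is correct and follows essentially the same route as the paper's proof: condition (i) via the one-sided DKW domination $\underline{F}\geq F$ on the DKW event, and condition (ii) by writing $\mathbb{E}(X)-\mathcal{A}(\Xi,\gamma)$ as the integral of $\underline{F}-F$ and splitting at a threshold $T$, with the bounded part controlled by $2T\chi_k\to 0$ and the tail part by $\varepsilon/2$ using finiteness of $\mathbb{E}(X)$ (the paper phrases the latter via its bounding-set notion, which is the same tail-integral bound). The only cosmetic difference is that you bound the integrand on $[0,T]$ directly by $2\chi_k$ rather than splitting through the empirical cdf $\hat{F}$ as the paper does.
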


\begin{proof}[sketch, full proof in \iftoggle{arxiv}{\Cref{app:DkwLowerPAC}}{\cite[App.~B.2]{BHMWW24-arxiv}}]
Condition (i) holds by the DKW with coverage $\geq\gamma$ due to our choice of $\chi_k$.
For condition (ii), note that for every $\varepsilon/2$, we can find some bounding set $\set{X> t}$.
Then for large enough $\Xi$, the difference between the actual expected value and the output of DKW-$\mathbb{E}$-Lower on $[0,t]$ can be bounded by $\varepsilon/2$ as in the bounded case (see \Cref{sec:StatisticsAB}), and on $[t,\infty)$ the difference is also bounded by $\varepsilon/2$ by definition of bounding sets.
\end{proof}

\begin{corollary}
	DKW-$\mathbbmsl{E}$-Lower gives limit-PAC lower bounds on~$E_{\diamond\,G}$.
\end{corollary}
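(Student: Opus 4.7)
The plan is to obtain the corollary as an immediate instantiation of \Cref{thm:DkwLowerPAC} with the random variable $X = \totalreward$ defined by $\totalreward(\pi) = \sum_{i=0}^{\mathit{idx}(\pi,G)} R(\pi[i])$, so that $\mathbb{E}(\totalreward) = E_{\diamond\,G}$ under the DTMC's path measure $\mathbb{P}$. The task reduces to verifying the two hypotheses of the theorem: that $\totalreward$ is non-negative and has finite expectation.

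First I would observe that non-negativity is immediate from the definition of the reward function $R\colon S \to \RRpos$ given in \Cref{sec:Models}: every summand $R(\pi[i])$ lies in $\RRpos$, and the (possibly infinite) sum of non-negative terms is again in $\RRpos \cup \{\infty\}$. Finite expectation is exactly the standing hypothesis under which $E_{\diamond\,G}$ is meaningfully defined in this section; recall that the paragraph introducing reachability rewards explicitly assumes $\mathbb{E}(\totalreward) < \infty$, and that the defining clause of \Cref{sec:Properties} makes $E_{\diamond\,G} = \infty$ otherwise, in which case the corollary is vacuous because any finite lower bound is trivially sound and any sequence of finite outputs is $\varepsilon$-close to $\infty$ in no meaningful way.

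Second, I would invoke \Cref{thm:DkwLowerPAC} with this $X$: both limit-PAC conditions transfer verbatim. Condition~(i) gives soundness of the DKW-$\mathbb{E}$-Lower output as a lower bound on $\mathbb{E}(\totalreward) = E_{\diamond\,G}$ at confidence $\gamma$, and condition~(ii) gives the existence, for every $\varepsilon > 0$, of a sample threshold $k_0$ beyond which the output is $\varepsilon$-close to $E_{\diamond\,G}$ with probability at least $\gamma$. Since DKW-$\mathbb{E}$-Lower treats its input purely as an i.i.d.\ sample collection from the underlying distribution, and the simulation procedure from \Cref{sec:SMC} returns exactly i.i.d.\ samples of $\totalreward$ drawn according to $\mathbb{P}$, the procedure applies without modification.

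I do not expect a genuine obstacle here since the corollary is a direct specialisation. The only subtlety worth spelling out is the standing assumption of finite expectation and, for soundness of the stopping criterion implicit in sampling $\totalreward$, the assumption stated at the end of \Cref{sec:SMC} that $G$ is reached with probability~$1$ so that each simulation run terminates and produces a well-defined sample; under this assumption the sampler faithfully draws from the distribution of $\totalreward$ and the reduction to \Cref{thm:DkwLowerPAC} is clean.
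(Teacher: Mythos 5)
Your proposal is correct and matches the paper's own treatment: the corollary is obtained there as an immediate instantiation of \Cref{thm:DkwLowerPAC} with $X = \totalreward$, whose non-negativity follows from $R\colon S \to \RRpos$ and whose finite expectation is the standing assumption of the reachability-reward paragraph. Your additional remarks on the almost-sure reachability of $G$ for well-defined sampling are consistent with the assumptions already made in \Cref{sec:SMC}, so nothing further is needed.
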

Note that \Cref{thm:DkwLowerPAC} directly extends to any random variable with a known lower bound, i.e.\ $X \in [a, \infty)$, by considering $X' = X - a$.
Then $X' \geq 0$ and any limit-PAC estimation of $X'$ also yields one for $X$, as $\mathbb{E}(X) = \mathbb{E}(X') + a$.
Similarly, for $X \in (-\infty, a]$ we can give limit-PAC \emph{upper} bounds.

\section{Implementation}
\label{sec:Implementation}

\newcommand{\Normal}{Normal}
\newcommand{\Student}{Student's-$t$}
\newcommand{\SeqStudent}{seq.\,Student's-$t$}
\newcommand{\ClopperPearson}{\textbf{Clopper-Pearson}}
\newcommand{\Wilson}{Wilson/CC}
\newcommand{\WilsonNonCC}{Wilson w/o CC}
\newcommand{\SeqClopperPearson}{seq.\,Clopper-P.}
\newcommand{\Chen}{Chen}
\newcommand{\Okamoto}{\textbf{Okamoto}}
\newcommand{\Hoeffding}{\textbf{Hoeffding}}
\newcommand{\ChowRobbins}{Chow-Robbins}
\newcommand{\UNSOUND}{}%
\newcommand{\UnknownSound}{}%
\newcommand{\WTF}{\textsuperscript{\bfseries\color{red}?}}

\begin{table}[t]
	\centering
	\caption{Default statistical methods used in state-of-the-art SMC tools}
	\label{tab:tools}
	\begingroup
\newlength{\colwidth}
\setlength{\colwidth}{.11\linewidth}
\setlength{\tabcolsep}{3.5pt}
\setlength{\extrarowheight}{.3ex}
\rowcolors{6}{shade4}{shade2}
\smaller[.5]
\begin{tabular}{%
  >{}L{.131\linewidth}l  % Tool
  >{\,}cc@{~~}  % P algorithms
  >{\,}cc  % TR algorithms
	}%
	\toprule
	\multicolumn{2}{c}{\bfseries Tool}
	& \multicolumn{2}{c}{\bfseries For probabilities $p\in[0,1]$}
	& \multicolumn{2}{c}{\bfseries For rewards $r\in[a,b]$}
	\\
	\cmidrule(l{1.1\tabcolsep}r{1.2\tabcolsep}){1-2}
	\cmidrule(l{1.1\tabcolsep}r{1.2\tabcolsep}){3-4}
	\cmidrule(l{1.1\tabcolsep}r{1.2\tabcolsep}){5-6}
	Name & Ref.
	& \larger fixed $k$ & \larger seq.\ $\varepsilon$  % P algos
	& \larger fixed $k$ & \larger seq.\ $\varepsilon$  % TR algos
    \\
	\midrule     %        &  P-algorithms              &  TR-algorithms
	\tool{C-SMC} & \cite{CDGL21}
	& \Okamoto            & ---
	& ---                 & ---
	\\
	\tool{Cosmos} & \cite{BBD+15}
	& \ClopperPearson     & \ChowRobbins\UNSOUND
	& \Hoeffding          & \ChowRobbins\UNSOUND
	\\
	\tool{Fig} & \cite{Bud22a}*
	& \WilsonNonCC        & \SeqStudent\UNSOUND
	& ---                 & ---
	\\
	\tool{Hypeg} & \cite{PR17}
	& ---                 & \SeqStudent\UNSOUND
	& ---                 & ---
	\\
	\tool{modes} {\tiny (prev.)\!\!\!\!} & \cite{BDHS20}*
	& \Okamoto            & Chen\UnknownSound
	& \Normal\UNSOUND & \ChowRobbins\UNSOUND
	\\
	\tool{MultiVeSta} & \cite{GRV17}
	& ---                 &  \ChowRobbins\UNSOUND
	& ---                 &  \ChowRobbins\UNSOUND
	\\
	\tool{Plasma~Lab} & \cite{LST16}*
	& \Okamoto            & \Okamoto
	& \Hoeffding          & \Hoeffding
	\\
	\tool{Prism} & \cite{KNP11}*
	& \Student\UNSOUND    & \SeqStudent\UNSOUND
	& \Student\UNSOUND    & \SeqStudent\UNSOUND
	\\
	\tool{Sbip} & \cite{NMB+18}
	& \Normal\UNSOUND     & \Okamoto
	& ---                 & ---
	\\
	\tool{Smc\,Storm}\!\!\! & \cite{LKP24}
	& ---                 & Chen\UnknownSound
	& ---                 & \ChowRobbins\UNSOUND
	\\
	\tool{Uppaal\,Smc} & \cite{DLL+15}*
	& \ClopperPearson     & \SeqClopperPearson\UNSOUND
	& \Student\UNSOUND    & ---
	\\
	\midrule
	\multicolumn{2}{l}{\thetool v3.1.273}
	& \Wilson             & \Wilson
	& \textbf{DKW}        & \Hoeffding
	\\
	\bottomrule
\end{tabular}
\endgroup
%
\iffalse
\parbox[c]{.95\linewidth}{%
	\vspace{1ex}\Centering\ttfamily\smaller[2]
	\begin{itemize}[leftmargin=1em,label=\textcolor{RedViolet}{$\bullet$}]
	%
	%%\item In the citing text: ``{\sl\Cref{tab:tools} lists the statistical tests used in the generation of confidence intervals to estimate the value of probability properties and total expected rewards}''.
	%%%
	%%\item The three sub-columns of each type of property indicate the user-provided variable:
	%%\begin{itemize}[leftmargin=1.5em,label=--]
	%%\item	if $n\in\NN$ is given then the interval's width
	%%		(and the estimation error, $\varepsilon$) is a free variable;
	%%\item	if the absolute error $\varepsilon_{\mathrm{abs}}\in\RR_{>0}$
	%%		is given then $n$ is a free variable;
	%%\item	if the relative error $\varepsilon_{\mathrm{rel}}\in(0,1)$
	%%		is given then $n$ is a free variable;
	%%\item	the last two are sometimes called ``\emph{sequential tests}'',
	%%		where samples are drawn (i.e.\ $n$ is sequentially increased)
	%%		until the error criterion is satisfied.
	%%\end{itemize}
	%
	\color{RedViolet}
	%
	%\item \textcolor{blue}{StdCI} stands for the ``\href{https://en.wikipedia.org/wiki/Confidence_interval}{standard}'' frequentist confidence interval based on the CLT, regardless of whether it uses the Gaussian or Student's $t$ distribution for the variance.  Reijsbergen et al.\ call the former \emph{Gauss-SSP} \cite[\S\,3.5]{RBSH15}.
	%
	%\item \textcolor{blue}{\textbf{@Patrick:} is Wald} the sequential equivalent of StdCI? Check ``\textbf{CI (Confidence Interval) Method}'' here: \url{https://www.prismmodelchecker.org/manual/RunningPRISM/StatisticalModelChecking} $\rightarrow$ no, PRISM does not do anything sequential. they just do StdCI with Student's t distribution for the variance. they say ``v is (an estimation of) the variance for X'' -- for a given $N$ it's unclear whether this does the Bessel correction thing (replacing $N$ by $N-1$); for given $w$ (or $\varepsilon$ as we denote it) they do not say how they estimate it -- afaict the only sensible thing is to use 1/4 as the worst-case upper bound, but not sure if they do that
	%
	%\item Cho-Ro${}^\star$ stands for the Chow-Robbins method using Student's $t$-distribution with $n-1$ degrees of freedom, as opposed to using the standard normal distribution \bf (delete this??).
	%
	%\item[\color{OrangeRed}${\boldsymbol\star}$] I find very little precise data about \tool{Plasma\:Lab} in the literature.  Should I remove these algorithms for hypothesis test? There's no indication that \tool{Plasma\:Lab} uses them to build a CI in any way, which for instance is something that \tool{MultiVeSta} says it does.
	%
	\end{itemize}
}
\fi

\end{table}

\paragraph{State of the art.}
In \Cref{tab:tools}, we collect the results of an extensive survey of the SMs used by default in all current SMC tools we are aware of. 
It is based on the information available in their tool papers (column \textit{Ref.}); for those marked ``*'', we also tested a current version or consulted its documentation%
\footnote{We used
\tool{Fig} 1.3,
the previous version of \tool{modes} from the \toolset v3.1.265,
\tool{Prism} 4.8.1,
\tool{Uppaal Smc} 5.0.0 with its \href{https://docs.uppaal.org/language-reference/query-semantics/smc_queries/ci_estimation/}{online documentation} as of 2024-10-09,
and
the \tool{Plasma Lab} 1.4.4 \href{https://web.archive.org/web/20191101134206/http://plasma-lab.gforge.inria.fr/plasma_lab_doc/1.4.4/html/index.html}{documentation from the Web Archive} as of 2019-11-01.
}
for more accurate information.
The ``seq.'' prefix for a method indicates a Chow-Robbins-like procedure using an interval different from Wald's/the normal approximation-based one.
We highlight the provably sound methods in boldface.
Entries ``---'' indicate that the tool does not appear to support that setting. %

We see that, in the fixed setting for probabilities, 5 of 8 tools choose a sound method, although three of those use the inefficient Okamoto bound;
in the sequential setting, only 2 of 10 tools use a sound (but inefficient) method.
For expected rewards, \tool{Cosmos} and \tool{Plasma Lab} apply Hoeffding's inequality when there is an obvious upper bound $b$, with \tool{Cosmos} using information from its higher-level formalism (\eg a system's finite capacity bound when estimating the average number of clients) for this purpose.
In the general setting, in particular for their respective variants of reachability rewards, \tool{Cosmos} will build a normal interval instead while \tool{Plasma Lab} will return the estimate $\hat X$ only, without error bounds.
Overall, \emph{no tool} implements a sound \emph{and} efficient method for probabilities in the sequential setting, nor for rewards in the fixed setting; those tools that use Hoeffding for rewards only do so for very specific cases.

\paragraph{Sound SMC in \thetool.}
We have implemented the recommendations we make \wrt SMs for probabilities and the new methods we propose for soundly handling expected rewards in the newest version of the \thetool statistical model checker as shown in the last row of \Cref{tab:tools}.
In particular, \thetool uses the $k$-precomputation based on the Clopper-Pearson interval in the sequential setting for probabilities, and the DKW in the fixed setting for expected rewards, improving upon the state of the art in soundness and sample efficiency.
\thetool supports $P_{\diamond\,G}^{\leq c}$, $P_{\diamond\,G}$, $E_{\diamond\,G}^{\leq c}$, and $E_{\diamond\,G}$ properties.
For $E_{\diamond\,G}^{\leq c}$ properties, it computes the upper bound as $b = (c + 1) \cdot \overline{r}_\mathit{\!max}$.
For $E_{\diamond\,G}$, it uses our new DKW-$\mathbb{E}$-Lower method by default.
\thetool also implements the Wilson/CC, Wald/normal, and Student's-$t$ intervals, the Okamoto bound, Chen's methods, the Chow-Robbins approach, and the SPRT.
Via a command-line parameter, the user can provide a preference list of these methods; for each property being analysed, \thetool chooses the first in the list that can be applied to it.
By default, it prefers sound over unsound and then efficient over less efficient methods, resulting in the first choices as in \Cref{tab:tools}.

\section{Experimental Evaluation}
\label{sec:Experiments}
\label{sec:ExperimentsForRewards}
\vspace{-.5ex}

To evaluate SMs for probabilities, we can directly work with the binomial distribution as in \Cref{ex:SoundnessNumerically}.
With expected rewards, however, the shape of the (unknown) reward distribution matters.
We thus use our implementation in \thetool on models from the QVBS~\cite{HKPQR19} to evaluate the coverage probability, performance, and effectiveness of the methods we propose in \Cref{sec:ForRewards} in a realistic setting.
The code and scripts for reproduction are available online~\cite{BHMWW24-artifact}.

\paragraph{Experimental setup.}
We used \thetool version 3.1.273\footnote{\thetool 3.1.273 implements all methods as described in \Cref{sec:Implementation}, but uses Wilson/CC for probabilities by default. Version 3.1.281 defaults to Clopper-Pearson as in \Cref{tab:tools}.}. %
We chose all DTMC and Markov decision process (MDP) models from the QVBS that contain an expected-reward property (except those that just ask for an expected number of transitions), excluding only the artificial \textit{haddad-monmege} model plus \textit{bluetooth} and \textit{oscillators}, which \thetool cannot handle for technical reasons (the former having multiple initial states, which \thetool does not support\footnote{Supporting multiple initial states, while natural for a PCTL model checker like \tool{Prism}, would require an SMC tool to perform a separate analysis starting from each initial state and thus defeat the scalability of SMC.}, and the latter's syntax being too large to parse and compile\footnote{The \textit{oscillators} model explicitly encodes a large flat state space in the higher-level \tool{Prism} language's syntax, overwhelming \thetool' parser that assumes its input to \emph{compactly} encode a potentially large state space for SMC to sample runs from.}).
We turn the MDP into DTMC by applying the \tool{Prism} language's DTMC semantics, which resolves all nondeterministic choices uniformly at random.
The models are parametrised; we use up to four parameter valuations each, including
the smallest and largest ones included in QVBS. %
A triple $\tuple{\mathit{model}, \mathit{parameter\,values}, \mathit{property}}$ is a \emph{benchmark instance}.

We consider all $E_{\diamond\,G}^{\leq c}$ and $E_{\diamond\,G}$ properties included with the models, the only
$E_{\diamond\,G}^{\leq c}$ property being in the \textit{resource-gathering} model.
To be able to study the DKW and Hoeffding methods, we manually turn all $E_{\diamond\,G}$ into $E_{\diamond\,G}^{\leq c}$ by
experimentally determining a small $c$ that does not change the value of the property up to the third significant digit.%
\footnote{We obtain reference results from the QVBS, if available, or via SMC with $k = 5\cdot 10^6$. We determine a $c$ as follows: Use SMC with $k=10^6$ and some step bound $c$ to obtain a value. If the three most significant digits of this value are equal to the reference result, stop and report $c$. Otherwise, increase $c$ and repeat. All choices of $c$ greater than the output of this procedure allow to approximate the result with precision $10^{-3}$, in particular since in all QVBS models the target state is reached with probability 1.} 
This in essence constitutes a manually-derived bounding set except we truncate rewards instead of setting them to~$0$.
\iftoggle{arxiv}{\Cref{sec:Appendix:ExperimentalSetup}}{\cite[App.~D.1]{BHMWW24-arxiv}} lists the resulting 44 instances (including the values of $c$).

\paragraph{Coverage probabilities.}
While the normal interval and Chow-Robbins are unsound, it was not clear if this manifests on real models under typical $k$ and~$\varepsilon$.
To investigate this, we implemented an empirical coverage test inside \thetool:
Given a benchmark instance, a confidence $\gamma$, step bound $k$, and a number $m$, it executes SMC with fixed $k$ for $m$ times, each time computing a $\gamma$-confidence interval.
It counts $w$, the number of times the reference result was \emph{w}rong, \ie not in the computed confidence interval.
Thus, we obtain the empirical coverage probability as $p_\mathit{cov} = \frac w m$, and compute a \enquote{meta} confidence interval $[l_\mathit{cov}, u_\mathit{cov}]$ around it using Clopper-Pearson.

\begin{wraptable}[8]{r}{0.53\linewidth}
	\vspace{-0.9cm}
	\centering
	\caption{Coverage over 44 instances ($\gamma = 0.95$)}
	\label{tab:WrongCoverageProbs}
	\vspace{4pt}
	\setlength{\tabcolsep}{4pt}
	\scriptsize
	\begin{tabular}{lcccc}
		\toprule
		SM & $u_\mathit{cov} {<} \gamma$ & $p_\mathit{cov} {<} \gamma$ & $\min p_\mathit{cov}$ & $\varnothing\,p_\mathit{cov}$ \\
		\midrule
		\Normal            & 10 & 31 & $0.908$ & $0.946$ \\
		\Student           &  9 & 32 & $0.902$ & $0.947$ \\
		Hoeffding ($k$)    &  0 &  0 & $1\phantom{.000}$ & $1\phantom{.000}$ \\
		DKW                &  0 &  0 & $0.999$ & $1.000$ \\
		\midrule
		\ChowRobbins\!\!\! & 16 & 24 & $0.723$ & $0.937$ \\
		Hoeffding ($\varepsilon$) &  0 &  0 & $1\phantom{.000}$ & $1\phantom{.000}$ \\
		\bottomrule
	\end{tabular}
\end{wraptable}
In \Cref{tab:WrongCoverageProbs}, we report the result of using this empirical coverage test choosing $\gamma = 0.95$, $k=1000$, and $m=5000$.
We report the number of benchmark instances where the respective SM attained insufficient coverage ($p_\mathit{cov} < \gamma$) in a statistically significant way ($u_\mathit{cov} < \gamma$), as well as the minimum $\min\,p_\mathit{cov}$ and average $\varnothing\,p_\mathit{cov}$ of the 44 coverage probabilities.
Detailed results are in \iftoggle{arxiv}{\Cref{sec:Appendix:CoverageTests}}{\cite[App.~D.2]{BHMWW24-arxiv}}.
Hoeffding's inequality and the DKW produced only sound results as expected, although Hoeffding timed out (${>}\,10$ minutes) 39 times in the sequential setting.
The unsound methods produce incorrect results much more often than they claim, with the insufficiency even being statistically significant in almost a quarter of the benchmark instances.

\def\figSoundPerformanceScatterHEIGHT{13}
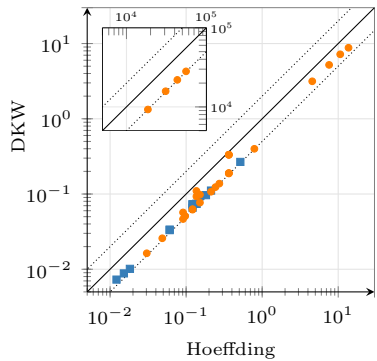
\begin{wrapfigure}[\figSoundPerformanceScatterHEIGHT]{r}{0.44\linewidth}
\vspace{1.325cm}
\newlength{\plotsizeSoundPerformace}
\setlength{\plotsizeSoundPerformace}{\linewidth}

\definecolor{color1}{RGB}{55,126,184}
\definecolor{color2}{RGB}{228,26,28}
\definecolor{color3}{RGB}{77,175,74}
\definecolor{color4}{RGB}{152,78,163}
\definecolor{color5}{RGB}{255,127,0}

\pgfplotstableread[col sep=comma]{
category,               instance,                x,                  y,
dtmc, coupon.200.5.2.5.exp_draws,                0.121472292381661,  0.0655950577314795,  
dtmc, coupon.300.7.3.5.exp_draws,                0.182208438572491,  0.0965705123549139,
dtmc, coupon.350.9.4.5.exp_draws,                0.212576511667907,  0.111213347501585,
dtmc, coupon.850.15.4.5.exp_draws,               0.516257242622059,  0.267251323495673,
dtmc, egl.100.5.2.messagesA,                     0.0607361461908305, 0.0337085610758908,
dtmc, egl.200.5.8.messagesA,                     0.121472292381661,  0.0731870561745688,
dtmc, egl.225.10.2.messagesA,                    0.136656328929369,  0.0741212524623308,
dtmc, egl.250.10.6.messagesA,                    0.151840365477076,  0.0938422999586155,
dtmc, egl.100.5.2.messagesB,                     0.0607361461908305, 0.0331011996392432,
dtmc, egl.200.5.8.messagesB,                     0.121472292381661,  0.0725796946417249,
dtmc, egl.225.10.2.messagesB,                    0.136656328929369,  0.0740380375128772,
dtmc, egl.250.10.6.messagesB,                    0.151840365477076,  0.0937782650274824,
dtmc, leader_sync.20.3.2.time,                   0.0121472292381661, 0.00728833745889894,
dtmc, leader_sync.25.4.3.time,                   0.0151840365477076, 0.00880674110879243,
dtmc, leader_sync.30.5.4.time,                   0.0182208438572491, 0.0101287640813915,
mdp,  csma.150.2.2.time_max,                     0.091104219286251,  0.0569248763879216,
mdp,  csma.225.2.6.time_max,                     0.136656328929362,  0.109921763054416,
mdp,  csma.225.3.4.time_max,                     0.136656328929362,  0.0928736407695809,
mdp,  csma.250.4.4.time_max,                     0.151840365477085,  0.0979821317887598,
mdp,  eajs.50.2.100.5.ExpUtil,                   0.0303680730954152, 0.0163164186002176,
mdp,  eajs.80.3.150.7.ExpUtil,                   0.0485889169526644, 0.0258152019624144,
mdp,  eajs.160.5.250.11.ExpUtil,                 0.0971778339053286, 0.0508270620460755,
mdp,  eajs.200.6.300.13.ExpUtil,                 0.121472292381661,  0.0629809113071296,
mdp,  firewire.600.true.3.400.time_max,          0.364416877144976,  0.332632718468176,
mdp,  firewire.600.true.36.200.time_max,         0.364416877144976,  0.332604558527518,
mdp,  firewire.600.true.36.600.time_max,         0.364416877144976,  0.332608778530434,
mdp,  firewire.600.true.3.400.time_sending,      0.364416877144983,  0.18758861022571,
mdp,  firewire.600.true.36.200.time_sending,     0.364416877144983,  0.189519214532999,
mdp,  firewire.600.true.36.600.time_sending,     0.364416877144983,  0.189518074532741,
mdp,  resource-gathering.0.200.15.15.expgold,    0.121472292381661,  0.0619887090739123,
mdp,  resource-gathering.0.400.30.15.expgold,    0.242944584763322,  0.123521996747668,
mdp,  resource-gathering.0.1300.100.100.expgold, 0.789569900480797,  0.399264980564367,
mdp,  wlan.150.0.0.time_min,                     4.55521096431221,   3.15130354267953,
mdp,  wlan.250.2.0.time_min,                     7.59201827385368,   5.20671332478901,
mdp,  wlan.350.4.0.time_min,                     10.6288255833952,   7.19998123015262,
mdp,  wlan.450.6.0.time_min,                     13.6656328929371,   8.79686572759365,
mdp,  wlan.150.0.0.num_collisions,               0.0911042192862458, 0.0465071718365052,
mdp,  wlan.250.2.0.num_collisions,               0.151840365477076,  0.0768391049323388,
mdp,  wlan.350.4.0.num_collisions,               0.212576511667907,  0.107215558024493,
mdp,  wlan.450.6.0.num_collisions,               0.273312657858737,  0.137585551103147,
}{\HoeffdingDKWdataLarge}

\pgfplotstableread[col sep=comma]{
category,          instance,                     x,                  y,
mdp,  wlan.150.0.0.cost_max,                     18544.2638357153,   9277.62297092179,
mdp,  wlan.250.2.0.cost_max,                     30907.1063928589,   15785.5013895411,
mdp,  wlan.350.4.0.cost_max,                     43269.9489500024,   21970.7118804045,
mdp,  wlan.450.6.0.cost_max,                     55632.791507146,    28153.8906514648,
}{\HoeffdingDKWdataSmall}

\begingroup
\def\xORIGIN{-1.3}
\def\yORIGIN{-1.4}
\def\largeplotMIN{0.005}
\def\largeplotMAX{30}
\def\smallplotMIN{4999}
\def\smallplotMAX{100000}
\begin{tikzpicture}[overlay]
  \node[anchor=north west] (largeplot) at (\xORIGIN,\yORIGIN) {%
  \begin{axis}[
    width=\plotsizeSoundPerformace,
    height=\plotsizeSoundPerformace,
    axis equal image,
    xmin={\largeplotMIN}, ymin={\largeplotMIN}, xmax={\largeplotMAX}, ymax={\largeplotMAX},
    xmode=log, ymode=log,
    grid=major,
    grid style={color=black!11},
    axis x line=bottom,
    axis y line=left,
    xlabel={\strut{}Hoeffding},
    xlabel style={font=\scriptsize,yshift=5pt},
    ylabel={\strut{}DKW},
    ylabel style={font=\scriptsize,yshift=-13pt,xshift=9pt},
    yticklabel style={font=\scriptsize},
    xticklabel style={font=\scriptsize},%rotate=290,anchor=west,
    legend pos=north east,
    legend columns=-1,
    legend style={nodes={scale=0.75, transform shape},anchor=north east,inner sep=1.5pt,yshift=16pt,xshift=3.5pt},
    legend cell align={left},
    extra x ticks = {\largeplotMAX},
    extra x tick labels = {},  % {$\mathstrut{\geq}\,\largeplotMAX$},
    extra x tick style = {grid=major,grid style={very thin,color=gray}},
    extra y ticks = {\largeplotMAX},
    extra y tick labels = {},  % {$\mathstrut{\geq}\,\largeplotMAX$},
    extra y tick style = {grid=major,grid style={very thin,color=gray}},
  ]
  \addplot[
    scatter,
    only marks,
    scatter/classes={dtmc={mark=square*,color1,mark size=1.35},
                     mdp={mark=*,color5,mark size=1.25}},
    scatter src=explicit symbolic
  ]%
  table[x={x}, y={y}, meta={category}]{\HoeffdingDKWdataLarge};
  
  \addplot[no marks] coordinates {(\largeplotMIN,\largeplotMIN) (\largeplotMAX,\largeplotMAX)};
  \addplot[no marks, densely dotted] coordinates {(0.5*\largeplotMIN,\largeplotMIN) (0.5*\largeplotMAX,\largeplotMAX)};
  \addplot[no marks, densely dotted] coordinates {(\largeplotMIN,0.5*\largeplotMIN) (\largeplotMAX,0.5*\largeplotMAX)};
  \end{axis}
  };
  \node[anchor=north west] (smallplot) at
	([xshift=-.05\plotsizeSoundPerformace,yshift=.4\plotsizeSoundPerformace]largeplot.north west) {%
  \begin{axis}[
    width=.55\plotsizeSoundPerformace,
    height=.55\plotsizeSoundPerformace,
    axis equal image,
    xmin={\smallplotMIN}, ymin={\smallplotMIN}, xmax={\smallplotMAX}, ymax={\smallplotMAX},
    xmode=log, ymode=log,
    grid=major,
    grid style={color=black!9},
    axis x line*={top,very thin},
    axis y line*={right,very thin},
    yticklabel style={font=\fontsize{4.5}{7.2},xshift=-2pt},
    xticklabel style={font=\fontsize{4.5}{7.2},yshift=-1pt},
    axis background/.style={fill=white},
  ]
  \addplot[
	color=red,
    scatter,
    only marks,
    scatter/classes={dtmc={mark=square*,color1,mark size=1.25},
                     mdp={mark=*,color5,mark size=1.25}},
    scatter src=explicit symbolic
  ]%
  table[x={x}, y={y}, meta={category}]{\HoeffdingDKWdataSmall};
  
  \addplot[no marks, thin] coordinates {(\smallplotMIN,\smallplotMIN) (\smallplotMAX,\smallplotMAX)};
  \addplot[no marks, very thin] coordinates {(\smallplotMIN,\smallplotMIN) (\smallplotMAX,\smallplotMIN)};
  \addplot[no marks, very thin] coordinates {(\smallplotMIN,\smallplotMIN) (\smallplotMIN,\smallplotMAX)};
  \addplot[no marks, thin, densely dotted] coordinates {(0.5*\smallplotMIN,\smallplotMIN) (0.5*\smallplotMAX,\smallplotMAX)};
  \addplot[no marks, thin, densely dotted] coordinates {(\smallplotMIN,0.5*\smallplotMIN) (\smallplotMAX,0.5*\smallplotMAX)};
  \end{axis}
  };
\end{tikzpicture}%
\endgroup
\vspace{{.725ex*\figSoundPerformanceScatterHEIGHT}}\vspace{-1ex}
\caption{Sound $\varepsilon$ given $k$}
\label{fig:SoundPerformanceScatter}
\end{wrapfigure}

\paragraph{Performance.}
We next evaluate the performance of the two sound SMs available when $[a, b]$ is known.
The runtime spent on the calculations involved with the SMs that we consider is negligible compared to that for generating sample paths.
We thus compare the performance of Hoeffding's inequality and the DKW via the half-width $\varepsilon$ of the interval returned given fixed $k = 500\,000$. %
The results are shown as a scatter plot in \Cref{fig:SoundPerformanceScatter},
where every point $\tuple{x, y}$ (blue for DTMCs, orange for MDPs) is the result of one benchmark instance, stating that using Hoeffding's inequality resulted in $\varepsilon = x$ while the DKW gave $\varepsilon = y$.
Note the logarithmic scale; points on the %
dotted diagonals mark $2\times\!$ differences.
We see that, as expected, the DKW consistently produces smaller intervals; %
the geometric mean of the ratios $\frac{\varepsilon\text{ for Hoeffding}}{\varepsilon\text{ for DKW}}$ over all 44 instances is~$1.72$, close to the theoretical maximum of $2$ (see \Cref{lem:dkw-v-hoeffding}).
Our upper bounds $b$ computed as per \Cref{sec:BoundingInstCumulRewards} are rather loose, benefiting DKW and resulting in very asymmetric DKW intervals with a lower bound close to the true value and an upper bound similar to Hoeffding's.

\begin{wrapfigure}[11]{r}{0.4\linewidth}
\vspace{0.6cm}
\newlength{\lboundplotsize}
\setlength{\lboundplotsize}{1.05\linewidth}

\definecolor{color1}{RGB}{55,126,184}
\definecolor{color2}{RGB}{77,175,74}
\definecolor{color3}{RGB}{228,26,28}
\definecolor{color4}{RGB}{152,78,163}
\definecolor{color5}{RGB}{255,127,0}

\begingroup
\def\xORIGIN{-.4}
\def\yORIGIN{-1.45}
\def\xMIN{40}
\def\xMAX{4000000}
\def\yMIN{0}
\def\yMAX{14}
\begin{tikzpicture}[overlay]
  %\pgfplotsset{every semilogx axis}
  %
  \node at (\xORIGIN,\yORIGIN) {%
  \begin{axis}[
    width=\lboundplotsize,
    height=.8\lboundplotsize,
    %axis equal image,
    xmin={\xMIN}, ymin={\yMIN}, xmax={\xMAX}, ymax={\yMAX},
    xmode=log, %ymode=normal,
    grid=major,
    grid style={color=black!11},
    axis x line=bottom,
    axis y line=left,
    xtick={100,1000,10000,100000,1000000},
    xlabel={\strut{}$k$},
    xlabel style={font=\scriptsize,yshift=13pt,xshift=.36\lboundplotsize},
    ylabel={\strut{}\rotatebox{270}{exp.\ reward}},
    ylabel style={font=\scriptsize,yshift=-50pt,xshift=.28\lboundplotsize},
    yticklabel style={font=\scriptsize},
    xticklabel style={font=\scriptsize},%rotate=290,anchor=west,
    subtickwidth=0pt,
    extra y ticks={2.5908,    % egl.200.5.8.messagesB
                   7.737,     % resource-gathering.0.1300.100.100.expgold
                   12.8184},  % coupon.850.15.4.5.exp_draws
    extra y tick labels={\textcolor{color3}{2.6},   % egl.200.5.8.messagesB
                         \textcolor{color2}{7.7},   % resource-gathering.0.1300.100.100.expgold
                         \textcolor{color4}{12.8}}, % coupon.850.15.4.5.exp_draws
    extra y tick style={font={\bfseries}},
  ]
  % instance                          ref.value  100     1000	 10000   100000  1000000
  %%%%%%%%%%%%%%%%%%%%%%%%%%%%%%%%%%%% %%%%%%%%%% %%%%%%% %%%%%%% %%%%%%% %%%%%%% %%%%%%%
  % egl.200.5.8.messagesB             2.5908     0.9948	 1.3624	 2.0846	 2.4343	 2.5358
  % resource-....1300.100.100.expgold 7.7370     6.1027	 7.2180	 7.5491	 7.6769  7.7155
  % coupon.850.15.4.5.exp_draws	      12.8184    9.9141	11.7987	12.4357	12.6721	12.7673
  %%%%%%%%%%%%%%%%%%%%%%%%%%%%%%%%%%%% %%%%%%%%%% %%%%%%% %%%%%%% %%%%%%% %%%%%%% %%%%%%%
  % egl.250.10.6.messagesB            4.0516     2.57E+0 3.44E+0 3.87E+0 3.96E+0 3.99E+0
  % coupon.200.5.2.5.exp_draws	      5.9603     4.5069	 5.4202	 5.7468	 5.8801  5.9322
  % csma.150.2.2.time_max             68.8325    5.81E+1 6.52E+1 6.76E+1 6.84E+1 6.87E+1
  % firewire.600.true.36.600.time_max 185.6007   1.35E+2 1.66E+2 1.78E+2 1.83E+2 1.84E+2
  % wlan.450.6.0.time_min             2240.7877  1.82E+3 2.07E+3 2.18E+3 2.22E+3 2.23E+3
  % wlan.150.0.0.cost_max             13469.2545 1.11E+4 1.26E+4 1.31E+4 1.33E+4 1.34E+4
  %%%%%%%%%%%%%%%%%%%%%%%%%%%%%%%%%%%% %%%%%%%%%% %%%%%%% %%%%%%% %%%%%%% %%%%%%% %%%%%%%
  % egl.200.5.8.messagesB
  \addplot[color3] coordinates { (\xMIN,2.5908) (\xMAX,2.5908) };
  \addplot[
    only marks,
    color3,
    mark=triangle*,
    mark size=1.85,
  ] coordinates {
    (100,0.994759393703753) 
    (1000,1.36241223666131) 
    (10000,2.08460040886108) 
    (100000,2.43434293074541)  
    (1000000,2.53579184087476)
  };
  % resource-gathering.0.1300.100.100.expgold
  \addplot[color2] coordinates { (\xMIN,7.7370) (\xMAX,7.7370) };
  \addplot[
    only marks,
    color2,
    mark=square*,
    mark size=1.25,
  ] coordinates {
    (100,6.10273645394457) 
    (1000,7.21801548631783) 
    (10000,7.54912680295348) 
    (100000,7.67690628283304)  
    (1000000,7.71554847724557)
  };
  % coupon.850.15.4.5.exp_draws
  \addplot[color4] coordinates { (\xMIN,12.8184) (\xMAX,12.8184) };
  \addplot[
    only marks,
    color4,
    mark=diamond*,
    mark size=1.85,
  ] coordinates {
    (100,9.91406539292614) 
    (1000,11.798735466138) 
    (10000,12.4356736059066) 
    (100000,12.672069177495)  
    (1000000,12.7672945484988)
  };

  \end{axis}
  };
\end{tikzpicture}%
\endgroup
\vspace{0.21\linewidth}
\caption{DKW-$\mathbb{E}$-Lower}
\label{fig:DKWELowerExamples}
\end{wrapfigure}
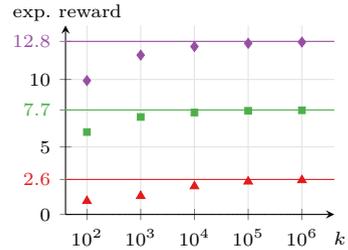

\paragraph{Effectiveness.}
For unknown $b$ as in $E_{\diamond\,G}$ properties, we test how quickly our novel DKW-$\mathbb{E}$-Lower method converges to the (usually unknown) true value by applying it to our 44 benchmark instances for $k = 10^i$ with $i \in \set{2, 3, 4, 5, 6}$.
All results are in \iftoggle{arxiv}{\Cref{sec:Appendix:EffectivenessDKW}}{\cite[App.~D.3]{BHMWW24-arxiv}} and show behaviour similar to the three benchmark instances of \Cref{fig:DKWELowerExamples} (from top to bottom:
$\langle\textit{coupon},$ $\tuple{15,4,5},$ $\textit{exp\_draws}\,\rangle$,
$\langle\textit{resource-gathering},$ $\tuple{1300,100,100},$ $\allowbreak \textit{expgold}\,\rangle$, and
$\langle\textit{egl},$ $\tuple{5, 8},$ $\textit{messagesB}\,\rangle$%
).
The distance between the lower bound and the true value in most cases decreases by a factor between $2$ and $3$ on each step.
Over all instances and steps, on average (geometric mean) the distance decreases by a factor of $2.6$, which gives an indication of the practical convergence rate of the DKW-$\mathbb{E}$-Lower method.

\vspace{-.5ex}

\section{Conclusion}
\label{sec:Conclusion}
\vspace{-1.5ex}

We raise attention to the issue of soundness in SMC given a state of the art where many tools use unsound statistical methods.
For estimating probabilities, several sound methods exist, which have recently been compared in~\cite{MWW24}.
We summarised them as a reference for the SMC practitioner, and expanded upon~\cite{MWW24} by looking into the sequential setting as well as adding coverage probability plots that highlight the level of (un)soundness at a glance and providing an overview of the methods employed by tools.
For expected-reward properties, only two tools had (ad-hoc and inefficient) sound methods so far; we contribute a recommendation for the---apparently little-known---DKW and a thorough treatment of the problem of bounding the path reward distribution.
While our proof that sound SMC is possible for reachability rewards is currently of theoretical use only, we expect our notion of bounding sets to be crucial for future practical solutions based on the identification of specific structural features of a model's state space or higher-level description.
On the practical side, we formalised the notion of \emph{limit-PAC} procedures, which we instantiate by the DKW-$\mathbb{E}$-Lower method that we show to give close bounds in practice.
As immediate future work, our results can be extended to estimating rare event probabilities, where samples are in $[0, 1]$ or potentially unbounded depending on the rare event simulation method used.
Our contributions should transfer to continuous-time Markov chains straightforwardly.

\paragraph{Data availability statement.}
The models, tools, and scripts to reproduce our experimental evaluation are archived and available at DOI \href{https://doi.org/10.5281/zenodo.14602067}{10.5281/zenodo.14602067}~\cite{BHMWW24-artifact}.

\bibliographystyle{splncs04}
\bibliography{paper}

\iftoggle{arxiv}{
\clearpage
\appendix

\crefalias{section}{appendix} %
\crefalias{subsection}{appendix} %

\section{Impossibility of Confidence Intervals for Means of Unbounded Distributions}\label{app:bounds_impossibility}

	The impossibility of providing sound confidence intervals for an unbounded random variable is not an issue of our approach or of SMC, but rather a fundamental problem of statistics.
	Without additional assumptions, it is simply impossible to give meaningful upper bounds on the expectation of an unbounded (non-negative, finite support) random variable.
	For illustration, fix a discrete probability space $(\Omega, 2^{\Omega}, \mathbb{P})$ and suppose that $X$ is a random variable with an event $\omega \in \Omega$ where $\mathbb{P}[\omega] = \tau$ and $X(\omega) = v \frac{1}{\tau}$ for an arbitrary $v \in \mathbb{N}$.
	Then, $\mathbb{E}(X) = \tau \cdot v \frac{1}{\tau} + \sum_{\omega' \neq \omega} \mathbb{P}[\omega'] \cdot X(\omega') \geq v$.
	However, without additional knowledge, we cannot give a bound on $v$ without encountering $\omega$, nor can we prove by sampling that such an event does not exist.
	(In other words, given any number of samples, we can never be sure that we have seen all events possible under $\mathbb{P}$.)
	To conclude, observe that for any such random variable, we can directly construct a (finite state) Markov chain such that $\totalreward = X$.

\section{Proofs}

\subsection{\Cref{lem:dkw-v-hoeffding}}\label{app:dkw-v-hoeffding}

\begin{proof}
	Let $X_{(1)},X_{(2)}\dots X_{(k)}$ be the order statistics of the samples.
	Define $\varepsilon = (b - a)\sqrt{(\ln \sfrac{2}{\delta}) / 2k}$.
	Then, applying Hoeffding's inequality yields $[l_h,u_h] = \frac{1}{k} \sum_{i=1}^k X_{(i)} \pm \varepsilon$.
	For DKW, the confidence interval is constructed by replacing the largest (resp. smallest) $\varepsilon k$ samples with $a$ (resp. $b$) and averaging.
	Formally, to handle the case where $\varepsilon k$ is not a whole number, define $m=\lfloor \varepsilon k \rfloor$.
	Then
	\begin{align*}
	l_d =& \frac{1}{k} \left( \sum_{i=1}^{k-m-1} X_{(i)} + (m+1-\varepsilon k)X_{(k-m)} + \varepsilon k a \right) \\	
	u_d =& \frac{1}{k} \left( \varepsilon k b + (m+1-\varepsilon k)X_{(m+1)} + \sum_{i=m+2}^{k} X_{(i)} \right)
	\end{align*}
	We now show that $l_h \leq l_d$.
	\begin{align*}
	l_d =& \frac{1}{k} \left( \sum_{i=1}^{k-m-1} X_{(i)} + (m+1-\varepsilon k)X_{(k-m)} + \varepsilon k a \right) \\
	=& \frac{1}{k} \left( \sum_{i=1}^{k} X_{(i)} + (\varepsilon k-m)X_{(k-m)} + (\varepsilon k-m)a + ma - \sum_{i=k-m}^k X_{(i)} \right) \\	
	=& \frac{1}{k} \left( \sum_{i=1}^{k} X_{(i)} - (\varepsilon k-m)(X_{(k-m)}-a) - \sum_{i=k-m}^k (X_{(i)}-a) \right) \\	
	\geq & \frac{1}{k} \left( \sum_{i=1}^{k} X_{(i)} - (\varepsilon k-m)(b-a) - \sum_{i=k-m}^k (b-a) \right) \\
	=& \frac{1}{k} \left( \sum_{i=1}^{k} X_{(i)} - \varepsilon k(b-a) \right) \\
	=& \frac{1}{k} \left( \sum_{i=1}^{k} X_{(i)} \right) - \varepsilon (b-a) \\
	=& l_h
	\end{align*}
	Proving $u_d\leq u_h$ is analogous, and $l_d \leq u_d$ follows from definition.
	
	Finally, to show  $\frac{u_h-l_h}{u_d-l_d} \leq 2$, notice that $u_h-l_h=2\varepsilon$, regardless of any $X_{(i)}$.
	Thus, $\frac{u_h-l_h}{u_d-l_d}$ is maximal when $u_d-l_d$ is minimal.
	We have 
	\[u_d-l_d = \frac{1}{k} \left( \varepsilon k (b-a) + (m+1-\varepsilon k)(X_{(m+1)}-X_{(k-m)}) + \sum_{i=1}^{k-m-1} (X_{(k-i+1)} - X_{(i)}) \right) \]
	For $\varepsilon\leq\frac{1}{2}$, all differences of $X_{i}$ are non-negative.
	If $\varepsilon\leq\frac{1}{2}$, we can cancel opposite terms inside the sum such that only non-negative terms remain.
	Clearly then, each $X_{(k-i+1)} - X_{(i))}$ is minimal if it is $0$, i.e. when $X_{(k-i+1)} = X_{(i)}$.
	Since it is an order statistic, this implies $X_{j} = X_{(i))}$ for all $1\leq i,j \leq k$.
	
	Lastly, if $X_{(i)}=x$ for all $i$ then $l_d=x-\varepsilon (x-a)$ and $u_d=x-\varepsilon (b-x)$ showing that $u_d-l_d$ is always at least $\varepsilon(b-a)$, and thus $\frac{u_h-l_h}{u_d-l_d} \leq 2$.
\end{proof}

\subsection{\Cref{lem:bounding-set}}\label{app:bounding-set}

\begin{proof}
	From any state $s$ there exists a simple finite path $\pi_\mathit{fin}$ to the goal $G$ by assumption.
	It has non-zero probability $p_s$, and $|\pi_\mathit{fin}| \leq |S|$.
	We clearly have that $p_s \geq (p_{\min})^{|S|}$.
	Since the DTMC has finitely many states, there is a minimum path probability $\underline{p} = \min_{s \in S} p_s \geq (p_{\min})^{|S|}$.
	Consequently, at any time, there is a probability of at least $\underline{p}$ of reaching the goal within $|S|$ steps.
	Thus the probability of reaching the goal in at most $q \cdot |S|$ steps is at least $1 - (1 - \underline{p})^q$.
	Further, every such path obtains at most $q \cdot |S| \cdot r_{\max}$ reward with that probability.
	Hence, for any $q$ the total weight of all paths with length $\geq q \cdot |S|$ can be bounded by
	\begin{align*}
		{\sum}_{j = q}^\infty ((j + 1) \cdot |S| \cdot r_{\max}) \cdot (1 - \underline{p})^j \underline{p}
		&= |S| \cdot r_{\max} \cdot \underline{p} \cdot {\sum}_{j = q}^\infty (j+1) (1 - \underline{p})^j\\
		&= |S| \cdot r_{\max} \cdot (1 - \underline{p})^q \cdot (q \underline{p} + 1) \cdot \underline{p}^{-1}.
	\end{align*}
	Observe that the right-hand side goes to zero for $q \to \infty$, hence for any $\varepsilon > 0$ we can find a $q$ such that this expression is less than $\varepsilon$.
	Moreover, any path with length less than $q \cdot |S|$ steps can only obtain a reward of at most $q \cdot |S| \cdot r_{\max}$.
	Together, we get for $B = \lozenge^{\leq q \cdot |S|} T$ that $\int_{\overline{B}} \totalreward \:\mathrm{d}\mathbb{P} < \varepsilon$ for sufficiently large $q$, but also $0 \leq \totalreward(\pi) \leq q \cdot |S| \cdot r_{\max}$ for all $\pi \in B$.
	\qed
\end{proof}

\subsection{\Cref{thm:DkwLowerPAC}}\label{app:DkwLowerPAC}

\begin{proof}
	Fix the random variable $X$ and confidence level $\gamma$.
	We write $F(x) = \mathbb{P}(X \leq x)$ for the cdf of $X$ and, for a fixed set $\Xi$ of $k$ independent samples, $F_\Xi(x)$ for the empirical distribution function over $\Xi$, indicating the fraction of samples obtaining a reward $\leq x$, and $E_\Xi$ the expectation of $F_\Xi$.
	Recall that the function $F_\Xi$ is a product of a random process (drawing the samples $\Xi$ from $X$), so, for example, $E_\Xi$ is a random variable depending on $\Xi$. %
	For a random set of samples $\Xi$ with cardinality $k = |\Xi|$ we have by the DKW-inequality that
	\begin{equation}\label{eq:dkw}
		|F(x) - F_\Xi(x)| \leq \chi_k
	\end{equation} 
	with confidence $1 - 2 \mathrm{e}^{-2 k \chi_k^2}=\gamma$.
	We define $F'_\Xi(x) \defeq \min\,\set{F_\Xi(x) + \chi_k, 1}$ (noting the relation to the Kolmogorov-Smirnov confidence band) and $E_\Xi'$ as the expectation of $F_\Xi'$ (\ie the empirical average where the largest $\chi_k$ fraction of samples is mapped to $0$).
	
	\paragraph{Proof of (i).} 
	From \Cref{eq:dkw} it follows that $F(x) \leq F_\Xi(x) + \chi_k$, \ie $F$ stochastically dominates $F'_\Xi(x)$.
	In particular, this implies the desired property $E_\Xi' \leq \mathbb{E}(X)$ holds with high confidence.
	
	\paragraph{Proof of (ii).} Additionally fix a precision requirement $\varepsilon$.
	Recall that, by the DKW-inequality, for a sufficiently large number of samples the \emph{probability} inaccuracy (difference between $F_\Xi'$ and $F$) is arbitrarily small with high confidence.
	However, it remains to bound the error in \emph{expectation}, i.e.\ the difference between $E_\Xi'$ and $\mathbb{E}(X)$.
	To this end, we show that for a random set of samples $\Xi$ with $|\Xi| \geq k_0$ (where the concrete $k_0$ is given below) we obtain the desired precision whenever \Cref{eq:dkw} holds, i.e.\ overall $\mathbb{E}(X) - \varepsilon \leq E'_\Xi \leq \mathbb{E}(X)$ with confidence $\gamma$.

	To fix $k_0$, observe that $X$ has finite expectation, thus there exists $t$ such that $\set{X \leq t}$ is a bounding set for $\sfrac{\varepsilon}{2}$; fix such a $t$.
	(Observe that $\set{X \leq t}$ might be the whole set of outcomes $\Omega$ when $X$ is bounded.)
	As $\lim_{k\to\infty}\chi_k = 0$ for, there is $k_0$ such that for all $k\geq k_0$ we have $2 \cdot t \cdot \chi_k \leq \sfrac{\varepsilon}{2}$.

	Now, fix a set of samples $\Xi$ with $k = |\Xi| \geq k_0$ and assume that \Cref{eq:dkw} holds.
	On the one hand, we get from the above arguments that $E_\Xi'$ is a lower bound, as required.
	On the other hand, since $X$ is a non-negative random variable, we have that $\mathbb{E}(X) = \int_{0}^\infty 1 - F(x) \, dx$.
	The same holds for the empirical estimate, i.e.\ $E_\Xi = \int_0^\infty 1 - F_\Xi(x) \, dx$.
	Together, we get
	\begin{equation*}
		\mathbb{E}(X) - E_\Xi' = \int_0^\infty 1 - F(x) - (1 - F'_\Xi(x)) \,dx = \int_0^\infty F_\Xi'(x) - F(x) \,dx
	\end{equation*}
	Splitting the above integral, we get
	\begin{equation*}
		\mathbb{E}(X) - E_\Xi' = \int_{[0, t]} F_\Xi'(x) - F(x) \, dx + \int_{(t, \infty)} F_\Xi'(x) - F(x) \, dx.
	\end{equation*}
	For the first part, observe that $0 \leq \int_0^t F'_\Xi(x) - F_\Xi(x) \, dx \leq t \cdot \chi_k$ by definition of $F_\Xi'$.
	Moreover, by \Cref{eq:dkw}, we similarly get
	\begin{equation*}
		\left| \int_0^t F_\Xi(x) - F(x) \, dx \right| \leq \int_0^t |F_\Xi(x) - F(x)| \, dx \leq \int_0^t \chi_k \,dx \leq t \cdot \chi_k.
	\end{equation*}
	Thus,
	\begin{equation*}
		\int_0^t F_\Xi'(x) - F(x) \, dx = \int_0^t F_\Xi'(x) - F_\Xi(x) \, dx + \int_0^t F_\Xi(x) - F(x) \, dx \leq 2 \cdot t \cdot \chi_k \leq \frac{\varepsilon}{2}.
	\end{equation*}
	For the second part, observe that since $F'_\Xi \leq 1$ we have
	\begin{equation*}
		\int_{(t, \infty)} F_\Xi'(x) - F(x) \,dx = \int_{(t, \infty)} 1 - F(x) \, dx - \int_{(t, \infty)} 1 - F_\Xi'(x) \, dx \leq  \int_{(t, \infty)} 1 - F(x) \, dx.
	\end{equation*}
	By choice of $t$ and definition of bounding sets we have
	\begin{equation*}
		\int_{\set{X > t}} X \, d\mathbb{P} = \int_{(t, \infty)} 1 - F(x) \, dx \leq \frac{\varepsilon}{2},
	\end{equation*}
	and thus $\int_{(t, \infty)} F_\Xi'(x) - F(x) \,dx \leq \sfrac{\varepsilon}{2}$.
	All together, we have
	\begin{equation*}
		\mathbb{E}(X) - E'_\Xi = \int_0^t F_\Xi'(x) - F(x) \, dx + \int_t^\infty F_\Xi'(x) - F(x) \, dx \leq \frac{\varepsilon}{2} + \frac{\varepsilon}{2} = \varepsilon.
	\end{equation*}
	Thus, whenever \cref{eq:dkw} holds for a set of samples large enough, our estimate is a lower bound and close to the correct value. \qed
\end{proof}

\section{Additional Discussion on Coverage Probability}\label{app:addCoverage}

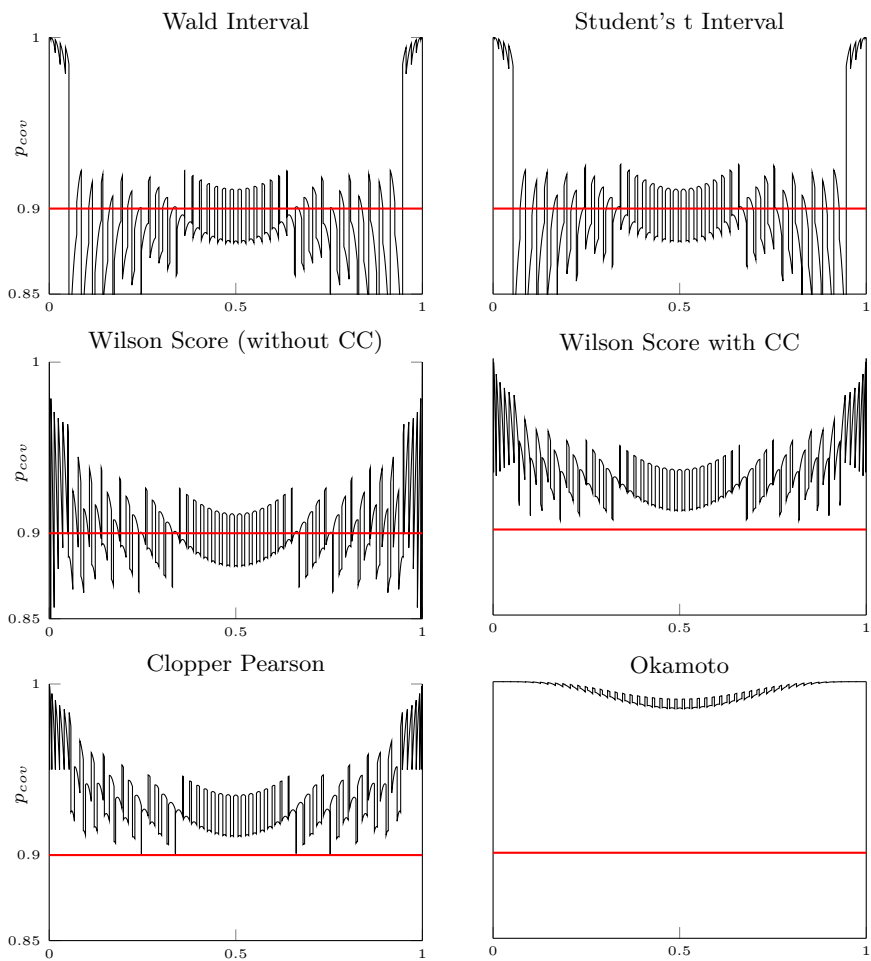
\begin{figure}[tp]
\centering
	\pgfplotsset{every axis/.append style={appendixplot}}
	\begin{minipage}{0.55\textwidth}
	\centering
	\begin{tikzpicture}
		\begin{axis}[confleftplot,xlabel=Wald Interval]
			\addplot [no marks,draw=black] table [x index=0, y index=1] {experiments/binom_coverage/coverage_fixed_n_stdci_normal.csv};
			\addplot [no marks,draw=red,thick] coordinates {(0,0.9) (1,0.9)};
		\end{axis}
	\end{tikzpicture}
	\end{minipage}%
	\begin{minipage}{0.45\textwidth}
	\centering
	\begin{tikzpicture}
		\begin{axis}[confplot,xlabel=Student's t Interval]
			\addplot [no marks,draw=black] table [x index=0, y index=1] {experiments/binom_coverage/coverage_fixed_n_stdci_student.csv};
			\addplot [no marks,draw=red,thick] coordinates {(0,0.9) (1,0.9)};
		\end{axis}
	\end{tikzpicture}
	\end{minipage}%

	\begin{minipage}{0.55\textwidth}
	\centering
	\begin{tikzpicture}
		\begin{axis}[confleftplot,xlabel=Wilson Score (without CC)]
			\addplot [no marks,draw=black] table [x index=0, y index=1] {experiments/binom_coverage/coverage_fixed_n_wilson.csv};
			\addplot [no marks,draw=red,thick] coordinates {(0,0.9) (1,0.9)};
		\end{axis}
	\end{tikzpicture}
	\end{minipage}%
	\begin{minipage}{0.45\textwidth}
	\centering
	\begin{tikzpicture}
		\begin{axis}[confplot,xlabel=Wilson Score with CC]
			\addplot [no marks,draw=black] table [x index=0, y index=1] {experiments/binom_coverage/coverage_fixed_n_wilson_cc.csv};
			\addplot [no marks,draw=red,thick] coordinates {(0,0.9) (1,0.9)};
		\end{axis}
	\end{tikzpicture}
	\end{minipage}%

	\begin{minipage}{0.55\textwidth}
	\centering
	\begin{tikzpicture}
		\begin{axis}[confleftplot,xlabel=Clopper Pearson]
			\addplot [no marks,draw=black] table [x index=0, y index=1] {experiments/binom_coverage/coverage_fixed_n_clopper_pearson.csv};
			\addplot [no marks,draw=red,thick] coordinates {(0,0.9) (1,0.9)};
		\end{axis}
	\end{tikzpicture}
	\end{minipage}%
	\begin{minipage}{0.45\textwidth}
	\centering
	\begin{tikzpicture}
		\begin{axis}[confplot,xlabel=Okamoto]
			\addplot [no marks,draw=black] table [x index=0, y index=1] {experiments/binom_coverage/coverage_fixed_n_okamoto.csv};
			\addplot [no marks,draw=red,thick] coordinates {(0,0.9) (1,0.9)};
		\end{axis}
	\end{tikzpicture}
	\end{minipage}%
\caption{Coverage probabilities over $p$ for $k=50$ and $\gamma=0.9$ for different methods.}
\label{fig:FixedNCoverageProbabilities}
\end{figure}

\begin{figure}[tp]
\centering
	\pgfplotsset{
		cycle list/Dark2,
		cycle multiindex* list={
			linestyles*\nextlist
			Dark2\nextlist
		},
		every axis plot/.append style={line width=1.25pt}
	}
	\begin{tikzpicture}
		\begin{axis}[allplot,xlabel=$p$,ylabel=expected interval width,width=7cm,height=6cm,
				xmin=0,xmax=1,ymin=0,ymax=0.4,
				legend style={draw=none},
				legend cell align=left,
				legend pos=outer north east,
		]
			\addplot+ [no marks] table [x index=0, y index=1] {experiments/binom_coverage/expected_interval_width_clopper_pearson.csv};
			\addlegendentry{Clopper Pearson}
			\addplot+ [no marks] table [x index=0, y index=1] {experiments/binom_coverage/expected_interval_width_wilson_cc.csv};
			\addlegendentry{Wilson Score with CC}
			\addplot+ [no marks] table [x index=0, y index=1] {experiments/binom_coverage/expected_interval_width_okamoto.csv};
			\addlegendentry{Okamoto}
		\end{axis}
	\end{tikzpicture}
\caption{Expected interval width over $p$ for $k=50$ and $\gamma=0.9$ for sound methods.}
\label{fig:ExpectedIntervalLength}
\end{figure}
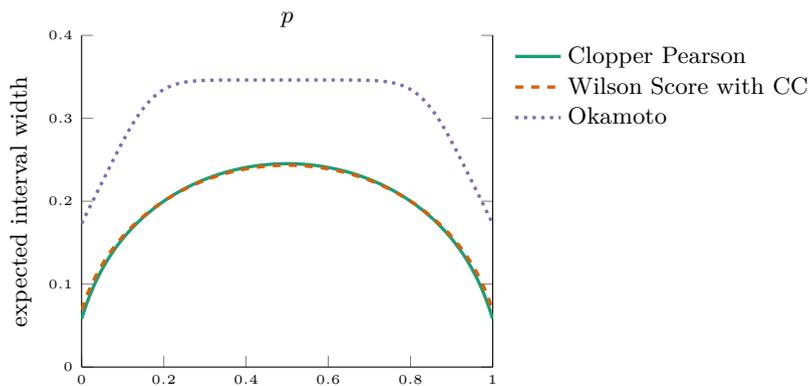

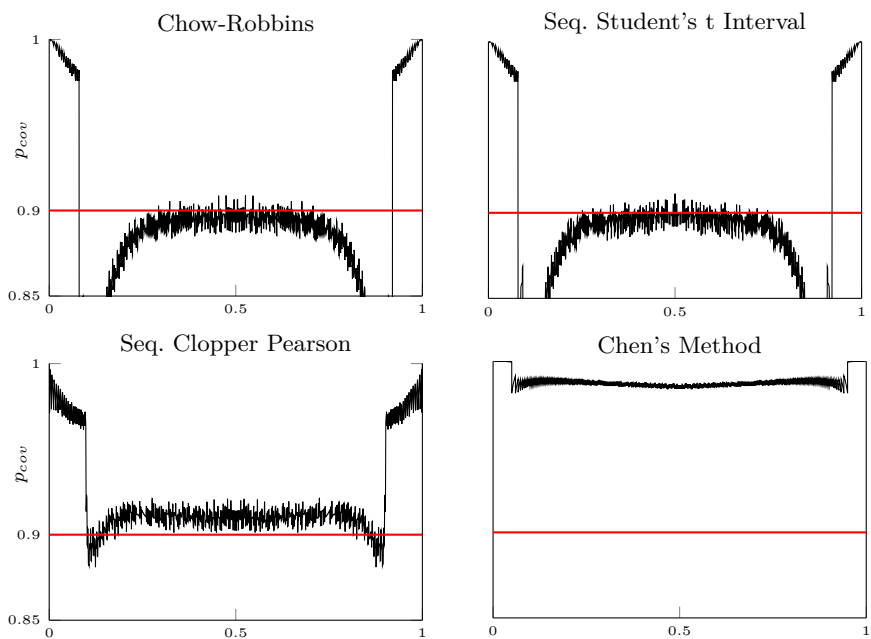
\begin{figure}[tp]
\centering
	\pgfplotsset{every axis/.append style={appendixplot}}
	\begin{minipage}{0.54\textwidth}
	\centering
	\begin{tikzpicture}
		\begin{axis}[confleftplot,xlabel=Chow-Robbins]
			\addplot [no marks,draw=black] table [x index=0, y index=1] {experiments/binom_coverage/coverage_fixed_eps_chow_robbins.csv};
			\addplot [no marks,draw=red,thick] coordinates {(0,0.9) (1,0.9)};
		\end{axis}
	\end{tikzpicture}
	\end{minipage}%
	\begin{minipage}{0.45\textwidth}
	\centering
	\begin{tikzpicture}
		\begin{axis}[confplot,xlabel=Seq.\ Student's t Interval]
			\addplot [no marks,draw=black] table [x index=0, y index=1] {experiments/binom_coverage/coverage_fixed_eps_sequential_student.csv};
			\addplot [no marks,draw=red,thick] coordinates {(0,0.9) (1,0.9)};
		\end{axis}
	\end{tikzpicture}
	\end{minipage}%

	\begin{minipage}{0.55\textwidth}
	\centering
	\begin{tikzpicture}
		\begin{axis}[confleftplot,xlabel=Seq.\ Clopper Pearson]
			\addplot [no marks,draw=black] table [x index=0, y index=1] {experiments/binom_coverage/coverage_fixed_eps_sequential_clopper_pearson.csv};
			\addplot [no marks,draw=red,thick] coordinates {(0,0.9) (1,0.9)};
		\end{axis}
	\end{tikzpicture}
	\end{minipage}%
	\begin{minipage}{0.45\textwidth}
	\centering
	\begin{tikzpicture}
		\begin{axis}[confplot,xlabel=Chen's Method]
			\addplot [no marks,draw=black] table [x index=0, y index=1] {experiments/binom_coverage/coverage_fixed_eps_chen.csv};
			\addplot [no marks,draw=red,thick] coordinates {(0,0.9) (1,0.9)};
		\end{axis}
	\end{tikzpicture}
	\end{minipage}%
\caption{Coverage probabilities over $p$ for $\varepsilon=0.05$ and $\gamma=0.9$ for sequential methods.}
\label{fig:FixedEpsCoverageProbabilities}
\end{figure}

\begin{figure}[tp]
\centering
	\begin{tikzpicture}
		\pgfplotsset{
			cycle list/Dark2,
			cycle multiindex* list={
				linestyles*\nextlist
				Dark2\nextlist
			},
			every axis plot/.append style={line width=1.25pt}
		}
		\begin{axis}[allplot,xlabel=$p$,ylabel=expected number of samples,width=9cm,height=6cm,
				xmin=0,xmax=1,ymin=0,ymax=650,
				legend style={draw=none},
				legend cell align=left,
				legend pos=outer north east
		]
			\addplot+ [no marks] table [x index=0, y index=1] {experiments/binom_coverage/expected_num_samples_chen.csv};
			\addlegendentry{Chen's Method}
			\addplot+ [no marks] table [x index=0, y index=1] {experiments/binom_coverage/expected_num_samples_chow_robbins.csv};
			\addlegendentry{Chow-Robbins}
			\addplot+ [no marks] table [x index=0, y index=1] {experiments/binom_coverage/expected_num_samples_sequential_clopper_pearson.csv};
			\addlegendentry{Seq.\ Clopper Pearson}
			\addplot+ [no marks] table [x index=0, y index=1] {experiments/binom_coverage/expected_num_samples_sequential_student.csv};
			\addlegendentry{Seq.\ Student's t Interval}
			\addplot+ [no marks] table [x index=0, y index=1] {experiments/binom_coverage/expected_num_samples_clopper_pearson.csv};
			\addlegendentry{Clopper Pearson}
			\addplot+ [no marks] table [x index=0, y index=1] {experiments/binom_coverage/expected_num_samples_wilsoncc.csv};
			\addlegendentry{Wilson Score with CC}
		\end{axis}
	\end{tikzpicture}
\caption{Expected number of samples over $p$ for $\varepsilon=0.05$ and $\gamma=0.9$ for sequential methods.}
\label{fig:ExpectedNumSamples}
\end{figure}
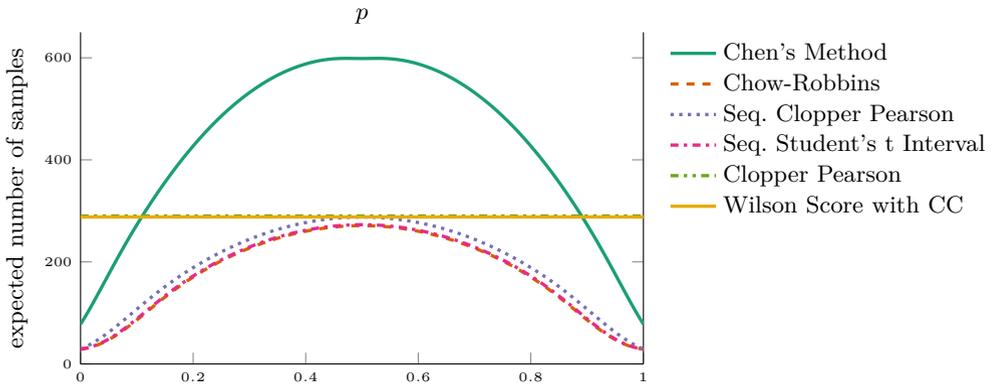

In this section we provide further details on \Cref{sec:ForProbabilities}.
We first outline how we obtained the exact coverage probability plots, before evaluating the soundness and efficiency of the statistical methods used in SMC tools (see \Cref{tab:tools}) by analysing their coverage probabilities.

\paragraph{Computation.}
Given some SM, a sample size $k$, confidence level $\gamma$, and a fixed success probability $p$,  we can compute the coverage probabilities by considering all $k+1$ possible outcomes of the sampled binomial distribution.
For each number of successes $k_s\in\{ 0,\dots,k \}$ we can easily compute the probability of observing exactly $k_s$ successes via the pmf of the binomial distribution.
Additionally, for each $k_s$ we can also compute the confidence interval returned and determine whether it contains the true $p$.
The coverage probability then is simply the sum of the probabilities of observing exactly $k_s$ successes for those $k_s$ for which the confidence interval contained the true $p$.

For the sequential algorithms, where a desired precision $\varepsilon$ is given instead of $k$, the core idea remains the same; however computation is slightly more involved:
Here, we need to compute the set of \emph{stopping points}, i.e.\ pairs $(k_s,k)$ where the SM would stop sampling after $k_s$ successes out of $k$ trials; and the respective probability to reach each stopping point.
For a formal description of the problem and details on this computation, see \cite{Fre10}.

Since the computation of the coverage probability can become rather computationally expensive for large $k$ and small $\varepsilon$, respectively, we chose moderate values of $k=50$ and $\varepsilon=0.05$ for all our examples in this section.
Additionally, note that for practical purposes, with larger $k$ and smaller $\varepsilon$ floating point errors will become an issue.

\paragraph{Fixed sample size.} We first focus on the setting with a fixed sample size $k$.
In \Cref{fig:FixedNCoverageProbabilities} we plot the coverage probability over $p$ for all methods implemented in some SMC tool (see \Cref{tab:tools}). 
As discussed, the Wald/normal invertal, Student's-$t$ interval and Wilson score are \textcolor{BrickRed}{\bf not sound} and have coverage probability less than $\gamma=0.9$ for some $p$.
While the Wilson score with CC is also \textcolor{BrickRed}{\bf not sound} in general, we observe that it is sound in our example.
This is in particular since we chose $\gamma=0.9$ and unsound behaviour is empirically only observed for larger $\gamma\approx0.95$ and above \cite{MWW24}.
On the other hand, we can see that the Okamoto bound is overly conservative, consistently achieving coverage probability over $0.98$ at the cost of increased sample complexity.
The difference between the Wilson/CC and Clopper-Pearson interval is only minimal.

For the two generally \textcolor{OliveGreen}{\bf sound} intervals, as well as Wilson score with CC which is sound for the chose parameters, we also plot the expected with of the confidence interval for different $p$ in \Cref{fig:ExpectedIntervalLength}.
Again, we see that both Wilson score with CC and Clopper-Pearson perform very similar while clearly outperforming Okamoto's bound.
Note that the computed confidence half-width $\varepsilon$ in Okamoto's bound does not directly depend on the sample success probability $\hat{p}$ and hence the width of $I_\mathit{Oka}$ is constant.
However as $\hat{p}\pm\varepsilon$ may be outside of the interval $[0,1]$ we can clamp the bounds of $I_\mathit{Oka}$ to $0$ and $1$, respectively, leading to confidence intervals smaller than $2\varepsilon$.

\paragraph{Sequential setting.} For the setting with a fixed precision $\varepsilon$, \Cref{fig:FixedEpsCoverageProbabilities} shows that all truly sequential algorithms except for Chen's method are \textcolor{BrickRed}{\bf not sound}, again attaining coverage probability $<\gamma=0.9$ for some $p$.
This is not surprising for the Chow-Robbins method and sequential Student's-$t$ interval since their underlying methods for fixed $k$ are unsound already (recall that the Chow-Robbins method uses the Wald/normal interval).
For the sequential Clopper-Pearson method we lose the soundness property since the sequential method is biased towards early termination \cite{JSD19}.

For Chen's method we do not observe unsound behaviour, matching the experimental soundness of Chen's evaluation \cite{Che15}.
In fact, we actually observe that the method is overly conservative.
This is not too surprising, since at its core it is based on a refinement of Okamoto's bound for $p$ away from $\frac{1}{2}$.
Nonetheless the soundness of Chen's method remains unproven.

In \Cref{fig:ExpectedNumSamples} we plot the expected number of samples collected by each of the four methods until they stop for different $p$.
This number was calculated as follows: There is a finite set of pairs $(k_s,k)$ of successes $k_s$ and tries $k$ where the algorithm stops. For every $p$, we can compute the probability of the algorithm reaching every such pair $(k_s,k)$.
Then, the expected number of samples is the weighted average of these $k$, with the weight being the sum of all the probabilities of the corresponding pairs $(k_s,k)$.
Additionally, we plot the a priori bound for the Wilson/CC and the Clopper-Pearson interval which is obtained by computing the largest number of samples $k$ for which each method outputs a confidence interval of size $\leq\varepsilon$, regardless of the number of successes $k_s$.
This is done by considering the worst-case scenario where $k_s = k/2$ \cite{MWW24}.
Hence this value is constant and not affected by the true probability $p$.
Note that for both methods yield very similar bounds (288 for Wilson/CC and 290 for Clopper-Pearson) which lie on top of each other in \Cref{fig:ExpectedNumSamples}.
While these a priori bound are naturally higher than the sample complexity of the unsound methods, they are significantly lower than the sample complexity of Chen's method for many values of $p$, with Chen's method only being favourable for $p$ close to $0$ or~$1$.

As these a priori bounds guarantee sound confidence intervals and require less samples than the only truly sequential method that is \emph{conjectured} to be sound, we recommend to use them and implement them in our tool.

\section{Additional Experimental Results}

In this section, we include additional plots and data for the experiments that we performed but cannot include in \Cref{sec:Experiments} in this conference version of our~paper.

\subsection{Experimental Setup}
\label{sec:Appendix:ExperimentalSetup}

\begin{table}[tp]
	\centering
	\caption{Models and properties used in experimentation}
	\label{tab:Models}
	\smaller
	\begingroup
%%	% Preamble for row-bicoloured tables
%%	% [texblog.org/2011/04/19/highlight-table-rowscolumns-with-color/]
%%	%\def\rowcolortablepreamble{%
%%	  \renewcommand{\arraystretch}{1.1}
%%	  \setlength{\tabcolsep}{3.5pt}
%%	  \setlength{\aboverulesep}{-.25pt}
%%	  \setlength{\belowrulesep}{.35pt}
%%	  \setlength{\extrarowheight}{.3ex}
%%	  \rowcolors{4}{shade1}{shade2}
%%	%}
%%	\newlength{\colwidth}
%%	\setlength{\colwidth}{.11\linewidth}
\definecolor{colorDTMC}{RGB}{55,126,184}
\definecolor{colorMDP}{RGB}{255,127,0}
\def\DTMC{\textcolor{colorDTMC}{\hfill\texttt{\smaller DTMC}}}
\def\MDP{\textcolor{colorMDP}{\hfill\texttt{\smaller MDP}}}
\def\DoMe{\textcolor{RubineRed!44}{\emph{??}}}
\def\firstColWidth{.155\linewidth}
\def\lastColWidth{.25\linewidth}
\setlength{\tabcolsep}{3.5pt}
\setlength{\extrarowheight}{.3ex}
%\rowcolors{6}{shade4}{shade2}
\smaller[.5]
\begin{tabular}{ll@{~\:}c@{~\:}ccccc@{~\,}l}
	\toprule
	\multicolumn{7}{c}{\bfseries Models}
	& \multicolumn{2}{c}{\bfseries Properties}
	\\
	\cmidrule(l{\tabcolsep}r{.8\tabcolsep}){1-7}
	\cmidrule(l{.8\tabcolsep}r{\tabcolsep}){8-9}
	ID
	& Instance
	& $c$  % <-- STEPS variable
	& $|S|$
	& $\overline{|S|}$
	& $\overline{r}_\mathit{\!max}$
	& $\underline{p}_\mathit{min}$
	& Value
	& ID
	\\
	\cmidrule(l{\tabcolsep}r{.8\tabcolsep}){1-7}
	\cmidrule(l{.8\tabcolsep}r{\tabcolsep}){8-9}
	\arrayrulecolor{black}
	% DTMC: coupon      %%%%%%%%%%%%%%%%%%%%%%%%%%%%%%%%%%%%%%%%%%%%%%%%%%%
	% exp_draws
	& \ttfamily 5,2,5
	& \slshape 200
	& $5.4\textsc{e}3$
	& $1.2\textsc{e}4$
	& 1
	& .200
	& 5.96
	& 
	\\
	& \ttfamily 7,3,5
	& \slshape 300
	& $3.4\textsc{e}5$
	& $8.8\textsc{e}5$
	& 1
	& .143
	& 6.38
	& 
	\\
	& \ttfamily 9,4,5
	& \slshape 350
	& $2.8\textsc{e}7$
	& $8.4\textsc{e}7$
	& 1
	& .111
	& 6.74
	& 
	\\
	\multirow{-4}{*}{\parbox{\firstColWidth}{%
		coupon \DTMC\\[1ex]
		\smaller\ttfamily N,DRAWS,B}}
	& \ttfamily 15,4,5
	& \slshape 850
	& $1.7\textsc{e}10$
	& $6.1\textsc{e}10$
	& 1
	& .067
	& 12.82
	& \multirow{-4}{*}{\parbox{\lastColWidth}{\RaggedRight%
		\texttt{exp\_draws}\\[.5ex]
		$E(\#\mathit{draw})$ to collect $\mathtt{N}$ unique coupons,
		drawing $\mathtt{DRAWS}$ coupons per $\mathit{draw}$}}
	\\
	\arrayrulecolor{black}
	% DTMC: EGL         %%%%%%%%%%%%%%%%%%%%%%%%%%%%%%%%%%%%%%%%%%%%%%%%%%%
	% messagesA
	\cmidrule(l{\tabcolsep}r{.8\tabcolsep}){1-7}
	\cmidrule(l{.8\tabcolsep}r{\tabcolsep}){8-9}
	& \ttfamily 5,2
	& \slshape 100
	& $3.4\textsc{e}4$
	& $1.5\textsc{e}40$
	& 1
	& .500
	& 1.15
	& 
	\\
	& \ttfamily 5,8
	& \slshape 200
	& $1.6\textsc{e}5$
	& $8.7\textsc{e}78$
	& 1
	& .500
	& 2.06
	& 
	\\
	& \ttfamily 10,2
	& \slshape 225
	& $6.6\textsc{e}7$
	& $3.0\textsc{e}40$
	& 1
	& .500
	& 1.01
	& 
	\\
	& \ttfamily 10,6
	& \slshape 250
	& $2.3\textsc{e}8$
	& $2.4\textsc{e}70$
	& 1
	& .500
	& 1.05
	& \multirow{-4}{*}{\parbox{\lastColWidth}{\RaggedRight%
		\texttt{messagesA}\\[.5ex]
		$E(\#\mathit{msgs})$ from $A$ that $B$ needs to know}}
	\\
	% messagesB %%%%%%%%%%%%%%%%%%%%%%%%%%%%%%%%%%%%%%
	\arrayrulecolor{gray}
	\cmidrule[.2pt](l{\tabcolsep}r{.8\tabcolsep}){2-7}
	\cmidrule[.2pt](l{.8\tabcolsep}r{\tabcolsep}){8-9}
	& \ttfamily 5,2
	& \slshape 100
	& $3.4\textsc{e}4$
	& $1.5\textsc{e}40$
	& 1
	& .500
	& 1.68
	& 
	\\
	& \ttfamily 5,8
	& \slshape 200
	& $1.6\textsc{e}5$
	& $8.7\textsc{e}78$
	& 1
	& .500
	& 2.59
	& 
	\\
	& \ttfamily 10,2
	& \slshape 225
	& $6.6\textsc{e}7$
	& $3.0\textsc{e}40$
	& 1
	& .500
	& 4.01
	& 
	\\
	\multirow{-8}{*}{\parbox{\firstColWidth}{%
		egl \DTMC\\[1ex]
		\smaller\ttfamily N,L}}
	& \ttfamily 10,6
	& \slshape 250
	& $2.3\textsc{e}8$
	& $2.4\textsc{e}70$
	& 1
	& .500
	& 4.05
	& \multirow{-4}{*}{\parbox{\lastColWidth}{\RaggedRight%
		\texttt{messagesB}\\[.5ex]
		$E(\#\mathit{msgs})$ from $B$ that $A$ needs to know}}
	\\
	\arrayrulecolor{black}
	% DTMC: leader_sync         %%%%%%%%%%%%%%%%%%%%%%%%%%%%%%%%%%%%%%%%%%%%%%%%%%%
	% time
	\cmidrule(l{\tabcolsep}r{.8\tabcolsep}){1-7}
	\cmidrule(l{.8\tabcolsep}r{\tabcolsep}){8-9}
	& \ttfamily 3,2
	& \slshape 20
	& $2.6\textsc{e}1$
	& $4.2\textsc{e}6$
	& 1
	& .500
	& 1.33
	& 
	\\
	& \ttfamily 4,3
	& \slshape 25
	& $2.7\textsc{e}2$
	& $2.1\textsc{e}10$
	& 1
	& .333
	& 1.35
	& 
	\\
	\multirow{-3}{*}{\parbox{\firstColWidth}{%
		leader\texttt{\textunderscore}sync\DTMC\!\\[1ex]
		\smaller\ttfamily N,K}}
	& \ttfamily 10,6
	& \slshape 30
	& $4.2\textsc{e}3$
	& $1.4\textsc{e}14$
	& 1
	& .250
	& 1.14
	& \multirow{-3}{*}{\parbox{\lastColWidth}{\RaggedRight%
		\texttt{time}\\[.5ex]
		$E(\#\mathit{rounds})$ to elect a leader}}
	\\
	\arrayrulecolor{black}
	% MDP: csma         %%%%%%%%%%%%%%%%%%%%%%%%%%%%%%%%%%%%%%%%%%%%%%%%%%%
	% time_max
	\cmidrule(l{\tabcolsep}r{.8\tabcolsep}){1-7}
	\cmidrule(l{.8\tabcolsep}r{\tabcolsep}){8-9}
	& \ttfamily 2,2
	& \slshape 150
	& $1.0\textsc{e}3$
	& $1.3\textsc{e}8$
	& 1
	& .250
	& 68.83
	& 
	\\
	& \ttfamily 2,6
	& \slshape 225
	& $6.7\textsc{e}4$
	& $1.9\textsc{e}11$
	& 1
	& .016
	& 86.93
	& 
	\\
	& \ttfamily 3,4
	& \slshape 225
	& $1.5\textsc{e}6$
	& $8.9\textsc{e}13$
	& 1
	& .063
	& 111.1
	& 
	\\
	\multirow{-4}{*}{\parbox{\firstColWidth}{%
		csma \MDP\\[1ex]
		\smaller\ttfamily N,K}}
	& \ttfamily 4,4
	& \slshape 250
	& $1.3\textsc{e}8$
	& $1.3\textsc{e}18$
	& 1
	& .063
	& 147.4
	& \multirow{-4}{*}{\parbox{\lastColWidth}{\RaggedRight%
		\texttt{time\_max}\\[.5ex]
		$E(\mathrm{unif}(\mathit{time}))$ to send all messages}}
	\\
	\arrayrulecolor{black}
	% MDP: eajs         %%%%%%%%%%%%%%%%%%%%%%%%%%%%%%%%%%%%%%%%%%%%%%%%%%%
	% ExpUtil
	\cmidrule(l{\tabcolsep}r{.8\tabcolsep}){1-7}
	\cmidrule(l{.8\tabcolsep}r{\tabcolsep}){8-9}
	& \ttfamily 2,5,100
	& \slshape 50
	& $1.3\textsc{e}4$
	& $2.9\textsc{e}7$
	& 1
	& .333
	& 3.08
	& 
	\\
	& \ttfamily 3,7,150
	& \slshape 80
	& $1.4\textsc{e}5$
	& $7.0\textsc{e}9$
	& 1
	& .333
	& 3.98
	& 
	\\
	& \ttfamily 5,11,250
	& \slshape 160
	& $3.0\textsc{e}6$
	& $2.1\textsc{e}14$
	& 1
	& .333
	& 2.33
	& 
	\\
	\multirow{-4}{*}{\parbox{\firstColWidth}{%
		eajs \MDP\\[1ex]
		\smaller\ttfamily N,B, energy\_capacity}}
	& \ttfamily 6,13,300
	& \slshape 200
	& $7.9\textsc{e}6$
	& $3.1\textsc{e}16$
	& 1
	& .333
	& 1.98
	& \multirow{-4}{*}{\parbox{\lastColWidth}{\RaggedRight%
		\texttt{ExpUtil}\\[.5ex]
		$E(\mathrm{unif}(\mathit{utility}))$ obtained until
		the battery becomes empty}}
	\\
	\arrayrulecolor{black}
	% MDP: firewire_true     %%%%%%%%%%%%%%%%%%%%%%%%%%%%%%%%%%%%%%%%%%%%%%
	% time_max
	\cmidrule(l{\tabcolsep}r{.8\tabcolsep}){1-7}
	\cmidrule(l{.8\tabcolsep}r{\tabcolsep}){8-9}
	& \ttfamily true,3,400
	& \slshape 600
	& $4.3\textsc{e}5$
	& $1.2\textsc{e}14$
	& 1
	& .500
	& 185.7
	& 
	\\
	& \ttfamily true,36,200
	& \slshape 600
	& $6.9\textsc{e}6$
	& $3.9\textsc{e}17$
	& 1
	& .500
	& 185.6
	& 
	\\
	& \ttfamily true,36,600
	& \slshape 600
	& $8.7\textsc{e}7$
	& $3.9\textsc{e}17$
	& 1
	& .500
	& 185.6
	& \multirow{-3}{*}{\parbox{\lastColWidth}{\RaggedRight%
		\texttt{time\_max}\\[.5ex]
		$E(\mathrm{unif}(\mathit{time}))$ to elect a leader}}
	\\
	% time_sending    %%%%%%%%%%%%%%%%%%%%%%%%%%%%%%%%
	\arrayrulecolor{gray}
	\cmidrule[.2pt](l{\tabcolsep}r{.8\tabcolsep}){2-7}
	\cmidrule[.2pt](l{.8\tabcolsep}r{\tabcolsep}){8-9}
	& \ttfamily true,3,400
	& \slshape 600
	& $4.3\textsc{e}5$
	& $1.2\textsc{e}14$
	& 1
	& .500
	& 3.45
	& 
	\\
	& \ttfamily true,36,200
	& \slshape 600
	& $6.9\textsc{e}6$
	& $3.9\textsc{e}17$
	& 1
	& .500
	& 3.97
	& 
	\\
	\multirow{-6}{*}{\parbox{\firstColWidth}{%
		firewire \MDP\\[1ex]
		\smaller\ttfamily explicit\_timer, delay,deadline}}
	& \ttfamily true,36,600
	& \slshape 600
	& $8.7\textsc{e}7$
	& $3.9\textsc{e}17$
	& 1
	& .500
	& 3.97
	& \multirow{-3}{*}{\parbox{\lastColWidth}{\RaggedRight%
		\texttt{time\_sending}\\[.5ex]
		$E(\mathrm{unif}(\mathit{time}))$ spent sending to
		elect a leader}}
	\\
	\arrayrulecolor{black}
	% MDP: resources_gathering    %%%%%%%%%%%%%%%%%%%%%%%%%%%%%%%%%%%%%%%%%
	% expgold
	\cmidrule(l{\tabcolsep}r{.8\tabcolsep}){1-7}
	\cmidrule(l{.8\tabcolsep}r{\tabcolsep}){8-9}
	& \ttfamily 200,15,15
	& \slshape --
	& $2.4\textsc{e}4$
	& $5.1\textsc{e}4$
	& 1
	& .100
	& 1.02
	& 
	\\
	& \ttfamily 400,30,30
	& \slshape --
	& $9.0\textsc{e}4$
	& $1.9\textsc{e}5$
	& 1
	& .100
	& 2.24
	& 
	\\
	\multirow{-3}{*}{\parbox{\firstColWidth}{%
		resources\texttt{\textunderscore}\MDP\\[-.3ex]gathering\\[.7ex]
		\smaller\ttfamily B,GOLD\_TO\_COLLECT, GEM\_TO\_COLLECT}}
	& \ttfamily 1300,100,100
	& \slshape --
	& $9.6\textsc{e}5$
	& $2.0\textsc{e}6$
	& 1
	& .100
	& 7.74
	& \multirow{-3}{*}{\parbox{\lastColWidth}{\RaggedRight%
		\texttt{expgold}\\[.5ex]
		$E(\mathrm{unif}(\mathit{gold}))$ collectable in \texttt{B} steps}}
	\\
	\arrayrulecolor{black}
	% MDP: wlan           %%%%%%%%%%%%%%%%%%%%%%%%%%%%%%%%%%%%%%%%%%%%%%%%%
	% cost_max
	\cmidrule(l{\tabcolsep}r{.8\tabcolsep}){1-7}
	\cmidrule(l{.8\tabcolsep}r{\tabcolsep}){8-9}
	& \ttfamily 0,0
	& \slshape 150
	& $3.0\textsc{e}3$
	& $2.8\textsc{e}10$
	& $2.04\textsc{e}5$
	& .063
	& 13469
	& 
	\\
	& \ttfamily 2,0
	& \slshape 250
	& $2.8\textsc{e}4$
	& $9.9\textsc{e}11$
	& $2.04\textsc{e}5$
	& .063
	& 35914
	& 
	\\
	& \ttfamily 4,0
	& \slshape 350
	& $3.5\textsc{e}5$
	& $4.4\textsc{e}13$
	& $2.04\textsc{e}5$
	& .063 
	& 35914
	& 
	\\
	& \ttfamily 6,0
	& \slshape 450
	& $5.0\textsc{e}6$
	& $1.4\textsc{e}15$
	& $2.04\textsc{e}5$
	& .016
	& 35914
	& \multirow{-4}{*}{\parbox{\lastColWidth}{\RaggedRight%
		\texttt{cost\_max}\\[.5ex]
		$E(\mathrm{unif}(\mathit{cost}))$ for both stations to send correctly}}
	\\
	% time_min       %%%%%%%%%%%%%%%%%%%%%%%%%%%%%%%%%
	\arrayrulecolor{gray}
	\cmidrule[.2pt](l{\tabcolsep}r{.8\tabcolsep}){2-7}
	\cmidrule[.2pt](l{.8\tabcolsep}r{\tabcolsep}){8-9}
	& \ttfamily 0,0
	& \slshape 150
	& $3.0\textsc{e}3$
	& $2.8\textsc{e}10$
	& 50
	& .063
	& 2214
	& 
	\\
	& \ttfamily 2,0
	& \slshape 250
	& $2.8\textsc{e}4$
	& $9.9\textsc{e}11$
	& 50
	& .063
	& 2241
	& 
	\\
	& \ttfamily 4,0
	& \slshape 350
	& $3.5\textsc{e}5$
	& $4.4\textsc{e}13$
	& 50
	& .063 
	& 2241
	& 
	\\
	& \ttfamily 6,0
	& \slshape 450
	& $5.0\textsc{e}6$
	& $1.4\textsc{e}15$
	& 50
	& .016
	& 2241
	& \multirow{-4}{*}{\parbox{\lastColWidth}{\RaggedRight%
		\texttt{time\_min}\\[.5ex]
		$E(\mathrm{unif}(\mathit{time}))$ for both stations to send correctly}}
	\\
	% num_collisions  %%%%%%%%%%%%%%%%%%%%%%%%%%%%%%%%
	\arrayrulecolor{gray}
	\cmidrule[.2pt](l{\tabcolsep}r{.8\tabcolsep}){2-7}
	\cmidrule[.2pt](l{.8\tabcolsep}r{\tabcolsep}){8-9}
	& \ttfamily 0,0
	& \slshape 150
	& $3.0\textsc{e}3$
	& $2.8\textsc{e}10$
	& 1
	& .063
	& 0.80
	& 
	\\
	& \ttfamily 2,0
	& \slshape 250
	& $2.8\textsc{e}4$
	& $9.9\textsc{e}11$
	& 1
	& .063
	& 0.80
	& 
	\\
	& \ttfamily 4,0
	& \slshape 350
	& $3.5\textsc{e}5$
	& $4.4\textsc{e}13$
	& 1
	& .063 
	& 0.80
	& 
	\\
	\multirow{-12}{*}{\parbox{\firstColWidth}{%
		wlan \MDP\\[1ex]
		\smaller\ttfamily MAX\_BACKOFF,COL}}
	& \ttfamily 6,0
	& \slshape 450
	& $5.0\textsc{e}6$
	& $1.4\textsc{e}15$
	& 1
	& .016
	& 0.80
	& \multirow{-4}{*}{\parbox{\lastColWidth}{\RaggedRight%
		\texttt{num\_collisions}\\[.5ex]
		$E(\#(\mathit{collisions}))$ until both stations send correctly}}
	\\
	\arrayrulecolor{black}
	\bottomrule
\end{tabular}
\endgroup

\end{table}

We list the benchmark instances that we use in \Cref{tab:Models}.
The values for $\overline{|S|}$, $\overline{r}_\mathit{\!max}$, and $\underline{p}_\mathit{min}$ were determined by \thetool from the \jani syntax of the models, while the value of $|S|$ was determined by a PMC tool (\tool{mcsta} of the \toolset or \tool{Storm}~\cite{HJKQV22} using BDDs).
Note in particular that $\overline{|S|}$, based on model variable ranges, is close for some but far off for many instances.

\subsection{Coverage tests}
\label{sec:Appendix:CoverageTests}

\begingroup
\newlength{\scatterPlotsSize}
\setlength{\scatterPlotsSize}{.35\linewidth}
\def\largeplotMIN{0.01}
\def\largeplotMAX{2000}
\def\smallplotMIN{115}
\def\smallplotMAX{9000000}
\def\node at (#) {\parbox{\linewidth}{\include{experiments/plots/1}}};#2{\node at (#1) {\parbox{\linewidth}{\include{experiments/plots/#2}}};}
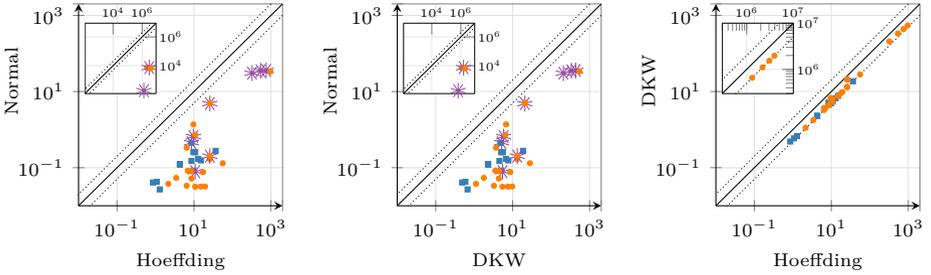
\begin{figure}[tp]
	\vspace{7ex}
	\hspace*{-.15\linewidth}%
	\begin{tikzpicture}
		\node at (0.0,0) {\parbox{\linewidth}{\pgfplotstableread[col sep=comma]{
category,  instance,                                  x,                 y,
dtmc,      coupon.200.5.2.5.exp_draws,                8.58938816693475,  0.150414442010658,	 
dtmc,      coupon.300.7.3.5.exp_draws,                12.8840822504021,  0.170229326100802,	 
dtmc,      coupon.350.9.4.5.exp_draws,                15.0314292921358,  0.154932005953482,	 
dtmc,      coupon.850.15.4.5.exp_draws,               36.5048997094727,  0.276743082721351,	 
dtmc,      egl.100.5.2.messagesA,                     4.29469408346738,  0.119874908974928,	 
incorrect, egl.200.5.8.messagesA,                     8.58938816693475,  0.454003171268597,	 
dtmc,      egl.225.10.2.messagesA,                    9.66306168780159,  0.0837853639382766, 
incorrect, egl.250.10.6.messagesA,                    10.7367352086684,  0.0786819134767031, 
dtmc,      egl.100.5.2.messagesB,                     4.29469408346738,  0.125391581275442,	 
dtmc,      egl.200.5.8.messagesB,                     8.58938816693475,  0.448640561039747,	 
dtmc,      egl.225.10.2.messagesB,                    9.66306168780159,  0.254834334690913,	 
dtmc,      egl.250.10.6.messagesB,                    10.7367352086684,  0.255063085818741,	 
dtmc,      leader_sync.20.3.2.time,                   0.858938816693475, 0.0408550487304114, 
dtmc,      leader_sync.25.4.3.time,                   1.07367352086684,  0.0426704450916353, 
dtmc,      leader_sync.30.5.4.time,                   1.28840822504021,  0.0268939830568464, 
mdp,       csma.150.2.2.time_max,                     6.44204112520106,  0.333755317243686,	 
mdp,       csma.225.2.6.time_max,                     9.6630616878016,   1.36992482625401,	 
incorrect, csma.225.3.4.time_max,                     9.6630616878016,   0.657615592662168,	 
mdp,       csma.250.4.4.time_max,                     10.7367352086684,  0.73370987206718,	 
mdp,       eajs.50.2.100.5.ExpUtil,                   2.14734704173369,  0.0376123302591171, 
mdp,       eajs.80.3.150.7.ExpUtil,                   3.4357552667739,   0.0539803235742864, 
mdp,       eajs.160.5.250.11.ExpUtil,                 6.8715105335478,   0.0830478023687276, 
mdp,       eajs.200.6.300.13.ExpUtil,                 8.58938816693475,  0.081014882458132,	 
incorrect, firewire.600.true.3.400.time_max,          25.7681645008043,  4.82636420645645,	 
mdp,       firewire.600.true.36.200.time_max,         25.7681645008043,  5.06208463676549,	 
mdp,       firewire.600.true.36.600.time_max,         25.7681645008043,  5.00699212354471,	 
mdp,       firewire.600.true.3.400.time_sending,      25.7681645008043,  0.168435801494745,	 
incorrect, firewire.600.true.36.200.time_sending,     25.7681645008043,  0.200431341804805,	 
mdp,       firewire.600.true.36.600.time_sending,     25.7681645008043,  0.205344488979442,	 
mdp,       resource-gathering.0.200.15.15.expgold,    8.58938816693475,  0.0519705841536617, 
mdp,       resource-gathering.0.400.30.15.expgold,    17.1787763338695,  0.0763125257295201, 
mdp,       resource-gathering.0.1300.100.100.expgold, 55.8310230850759,  0.130871148801164,	 
incorrect, wlan.150.0.0.time_min,                     322.102056260053,  30.393898980079,	 
incorrect, wlan.250.2.0.time_min,                     536.836760433422,  35.0996216997701,	 
incorrect, wlan.350.4.0.time_min,                     751.571464606791,  35.8133983154285,	 
mdp,       wlan.450.6.0.time_min,                     966.306168780159,  33.8458169412484,	 
mdp,       wlan.150.0.0.num_collisions,               6.44204112520106,  0.0335028994869795, 
mdp,       wlan.250.2.0.num_collisions,               10.7367352086684,  0.0316919241477416, 
mdp,       wlan.350.4.0.num_collisions,               15.0314292921358,  0.0318175590239919, 
mdp,       wlan.450.6.0.num_collisions,               19.3261233756032,  0.0320583573931118, 
}{\HoeffdingvsNormalLarge}

\pgfplotstableread[col sep=comma]{
category,  instance,                                  x,                 y,
incorrect, wlan.150.0.0.cost_max,                     1311277.47103468,  183.72470894397,	 
mdp,       wlan.250.2.0.cost_max,                     2185462.45172446,  7108.63276269723,	 
incorrect, wlan.350.4.0.cost_max,                     3059647.43241424,  7284.40600736209,	 
mdp,       wlan.450.6.0.cost_max,                     3933832.41310403,  6510.40032633235,	 
}{\HoeffdingvsNormalSmall}

\begin{tikzpicture}
  \node[anchor=north west] (largeplot) at (0,0) {%
  \begin{axis}[
    width=\scatterPlotsSize,
    height=\scatterPlotsSize,
    axis equal image,
    xmin={\largeplotMIN}, ymin={\largeplotMIN}, xmax={\largeplotMAX}, ymax={\largeplotMAX},
    xmode=log, ymode=log,
    grid=major,
    grid style={color=black!11},
    axis x line=bottom,
    axis y line=left,
    xlabel={\strut{}Hoeffding},
    xlabel style={font=\scriptsize,yshift=5pt},
    ylabel={\strut{}Normal},
    ylabel style={font=\scriptsize,yshift=-13pt,xshift=9pt},
    yticklabel style={font=\scriptsize},
    xticklabel style={font=\scriptsize},%rotate=290,anchor=west,
    legend pos=north east,
    legend columns=-1,
    legend style={nodes={scale=0.75, transform shape},anchor=north east,inner sep=1.5pt,yshift=16pt,xshift=3.5pt},
    legend cell align={left},
    extra x ticks = {\largeplotMAX},
    extra x tick labels = {},  % {$\mathstrut{\geq}\,\largeplotMAX$},
    extra x tick style = {grid=major,grid style={very thin,color=gray}},
    extra y ticks = {\largeplotMAX},
    extra y tick labels = {},  % {$\mathstrut{\geq}\,\largeplotMAX$},
    extra y tick style = {grid=major,grid style={very thin,color=gray}},
  ]
  \addplot[
    scatter,
    only marks,
    scatter/classes={dtmc={mark=square*,color1,mark size=.95},
                     mdp={mark=*,color5,mark size=.95},
                     incorrect={mark=10-pointed star,color4,mark size=3}},
    scatter src=explicit symbolic
  ]%
  table[x={x}, y={y}, meta={category}]{\HoeffdingvsNormalLarge};
  
  \addplot[no marks] coordinates {(\largeplotMIN,\largeplotMIN) (\largeplotMAX,\largeplotMAX)};
  \addplot[no marks, densely dotted] coordinates {(0.5*\largeplotMIN,\largeplotMIN) (0.5*\largeplotMAX,\largeplotMAX)};
  \addplot[no marks, densely dotted] coordinates {(\largeplotMIN,0.5*\largeplotMIN) (\largeplotMAX,0.5*\largeplotMAX)};
  \end{axis}
  };
  \node[anchor=north west] (smallplot) at
	([xshift=-.082\scatterPlotsSize,yshift=.35\scatterPlotsSize]largeplot.north west) {%
  \begin{axis}[
    width=.59\scatterPlotsSize,
    height=.59\scatterPlotsSize,
    axis equal image,
    xmin={\smallplotMIN}, ymin={\smallplotMIN}, xmax={\smallplotMAX}, ymax={\smallplotMAX},
    xmode=log, ymode=log,
    grid=major,
    grid style={color=black!9},
    axis x line*={top,very thin},
    axis y line*={right,very thin},
    yticklabel style={font=\fontsize{4.5}{7.2},xshift=-2pt},
    xticklabel style={font=\fontsize{4.5}{7.2},yshift=-1pt},
    axis background/.style={fill=white},
  ]
  \addplot[
	color=red,
    scatter,
    only marks,
    scatter/classes={dtmc={mark=square*,color1,mark size=.95},
                     mdp={mark=*,color5,mark size=.95},
                     incorrect={mark=10-pointed star,color4,mark size=3}},
    scatter src=explicit symbolic
  ]%
  table[x={x}, y={y}, meta={category}]{\HoeffdingvsNormalSmall};
  
  \addplot[no marks, thin] coordinates {(\smallplotMIN,\smallplotMIN) (\smallplotMAX,\smallplotMAX)};
  \addplot[no marks, very thin] coordinates {(\smallplotMIN,\smallplotMIN) (\smallplotMAX,\smallplotMIN)};
  \addplot[no marks, very thin] coordinates {(\smallplotMIN,\smallplotMIN) (\smallplotMIN,\smallplotMAX)};
  \addplot[no marks, thin, densely dotted] coordinates {(0.5*\smallplotMIN,\smallplotMIN) (0.5*\smallplotMAX,\smallplotMAX)};
  \addplot[no marks, thin, densely dotted] coordinates {(\smallplotMIN,0.5*\smallplotMIN) (\smallplotMAX,0.5*\smallplotMAX)};
  \end{axis}
  };
\end{tikzpicture}}};
		\node at (4.2,0) {\parbox{\linewidth}{\pgfplotstableread[col sep=comma]{
category,  instance,                                  x,                 y,
dtmc,      coupon.200.5.2.5.exp_draws,                4.53198184680609,  0.150414442010658,	 
dtmc,      coupon.300.7.3.5.exp_draws,                6.67630235895711,  0.170229326100802,	 
dtmc,      coupon.350.9.4.5.exp_draws,                7.74000240940674,  0.154932005953482,	 
dtmc,      coupon.850.15.4.5.exp_draws,               18.6690519109967,  0.276743082721351,	 
dtmc,      egl.100.5.2.messagesA,                     2.3227143938204,   0.119874908974928,	 
incorrect, egl.200.5.8.messagesA,                     5.14023901849148,  0.454003171268597,	 
dtmc,      egl.225.10.2.messagesA,                    4.94242472557013,  0.0837853639382766, 
incorrect, egl.250.10.6.messagesA,                    5.47926148600354,  0.0786819134767031, 
dtmc,      egl.100.5.2.messagesB,                     2.27724092340299,  0.125391581275442,	 
dtmc,      egl.200.5.8.messagesB,                     4.80908796513669,  0.448640561039747,	 
dtmc,      egl.225.10.2.messagesB,                    5.02479207765683,  0.254834334690913,	 
dtmc,      egl.250.10.6.messagesB,                    5.56162883809026,  0.255063085818741,	 
dtmc,      leader_sync.20.3.2.time,                   0.486916349181372, 0.0408550487304114, 
dtmc,      leader_sync.25.4.3.time,                   0.588783701268057, 0.0426704450916353, 
dtmc,      leader_sync.30.5.4.time,                   0.673177582937396, 0.0268939830568464, 
mdp,       csma.150.2.2.time_max,                     3.70554303011235,  0.333755317243686,	 
mdp,       csma.225.2.6.time_max,                     6.75830642102725,  1.36992482625401,	 
incorrect, csma.225.3.4.time_max,                     5.81604924934203,  0.657615592662168,	 
mdp,       csma.250.4.4.time_max,                     6.41430642102738,  0.73370987206718,	 
mdp,       eajs.50.2.100.5.ExpUtil,                   1.11812046170156,  0.0376123302591171, 
mdp,       eajs.80.3.150.7.ExpUtil,                   1.78532457422154,  0.0539803235742864, 
mdp,       eajs.160.5.250.11.ExpUtil,                 3.55112261886062,  0.0830478023687276, 
mdp,       eajs.200.6.300.13.ExpUtil,                 4.41056143555409,  0.081014882458132,	 
incorrect, firewire.600.true.3.400.time_max,          20.7531458431148,  4.82636420645645,	 
mdp,       firewire.600.true.36.200.time_max,         20.6724763843185,  5.06208463676549,	 
mdp,       firewire.600.true.36.600.time_max,         21.3265662543669,  5.00699212354471,	 
mdp,       firewire.600.true.3.400.time_sending,      13.1168434841583,  0.168435801494745,	 
incorrect, firewire.600.true.36.200.time_sending,     13.1652904249929,  0.200431341804805,	 
mdp,       firewire.600.true.36.600.time_sending,     13.1882638954102,  0.205344488979442,	 
mdp,       resource-gathering.0.200.15.15.expgold,    4.36011449471932,  0.0519705841536617, 
mdp,       resource-gathering.0.400.30.15.expgold,    8.69728204860414,  0.0763125257295201, 
mdp,       resource-gathering.0.1300.100.100.expgold, 28.1147993058767,  0.130871148801164,	 
incorrect, wlan.150.0.0.time_min,                     207.679601360369,  30.393898980079,	 
incorrect, wlan.250.2.0.time_min,                     335.327056260052,  35.0996216997701,	 
incorrect, wlan.350.4.0.time_min,                     437.185326096334,  35.8133983154285,	 
mdp,       wlan.450.6.0.time_min,                     544.804004662155,  33.8458169412484,	 
mdp,       wlan.150.0.0.num_collisions,               3.26746750343532,  0.0335028994869795, 
mdp,       wlan.250.2.0.num_collisions,               5.41281454516871,  0.0316919241477416, 
mdp,       wlan.350.4.0.num_collisions,               7.5591615869024,   0.0318175590239919, 
mdp,       wlan.450.6.0.num_collisions,               9.70850862863667,  0.0320583573931118, 
}{\DKWvsNormalLarge}

\pgfplotstableread[col sep=comma]{
category,  instance,                                  x,                 y,
incorrect, wlan.150.0.0.cost_max,                     655911.109609679,  183.72470894397,	 
mdp,       wlan.250.2.0.cost_max,                     1104103.56658949,  7108.63276269723,	 
incorrect, wlan.350.4.0.cost_max,                     1540110.40825825,  7284.40600736209,	 
mdp,       wlan.450.6.0.cost_max,                     1976931.42579945,  6510.40032633235,	 
}{\DKWvsNormalSmall}

\begin{tikzpicture}
  \node[anchor=north west] (largeplot) at (0,0) {%
  \begin{axis}[
    width=\scatterPlotsSize,
    height=\scatterPlotsSize,
    axis equal image,
    xmin={\largeplotMIN}, ymin={\largeplotMIN}, xmax={\largeplotMAX}, ymax={\largeplotMAX},
    xmode=log, ymode=log,
    grid=major,
    grid style={color=black!11},
    axis x line=bottom,
    axis y line=left,
    xlabel={\strut{}DKW},
    xlabel style={font=\scriptsize,yshift=5pt},
    ylabel={\strut{}Normal},
    ylabel style={font=\scriptsize,yshift=-13pt,xshift=9pt},
    yticklabel style={font=\scriptsize},
    xticklabel style={font=\scriptsize},%rotate=290,anchor=west,
    legend pos=north east,
    legend columns=-1,
    legend style={nodes={scale=0.75, transform shape},anchor=north east,inner sep=1.5pt,yshift=16pt,xshift=3.5pt},
    legend cell align={left},
    extra x ticks = {\largeplotMAX},
    extra x tick labels = {},  % {$\mathstrut{\geq}\,\largeplotMAX$},
    extra x tick style = {grid=major,grid style={very thin,color=gray}},
    extra y ticks = {\largeplotMAX},
    extra y tick labels = {},  % {$\mathstrut{\geq}\,\largeplotMAX$},
    extra y tick style = {grid=major,grid style={very thin,color=gray}},
  ]
  \addplot[
    scatter,
    only marks,
    scatter/classes={dtmc={mark=square*,color1,mark size=.95},
                     mdp={mark=*,color5,mark size=.95},
                     incorrect={mark=10-pointed star,color4,mark size=3}},
    scatter src=explicit symbolic
  ]%
  table[x={x}, y={y}, meta={category}]{\DKWvsNormalLarge};
  
  \addplot[no marks] coordinates {(\largeplotMIN,\largeplotMIN) (\largeplotMAX,\largeplotMAX)};
  \addplot[no marks, densely dotted] coordinates {(0.5*\largeplotMIN,\largeplotMIN) (0.5*\largeplotMAX,\largeplotMAX)};
  \addplot[no marks, densely dotted] coordinates {(\largeplotMIN,0.5*\largeplotMIN) (\largeplotMAX,0.5*\largeplotMAX)};
  \end{axis}
  };
  \node[anchor=north west] (smallplot) at
	([xshift=-.082\scatterPlotsSize,yshift=.35\scatterPlotsSize]largeplot.north west) {%
  \begin{axis}[
    width=.59\scatterPlotsSize,
    height=.59\scatterPlotsSize,
    axis equal image,
    xmin={\smallplotMIN}, ymin={\smallplotMIN}, xmax={\smallplotMAX}, ymax={\smallplotMAX},
    xmode=log, ymode=log,
    grid=major,
    grid style={color=black!9},
    axis x line*={top,very thin},
    axis y line*={right,very thin},
    yticklabel style={font=\fontsize{4.5}{7.2},xshift=-2pt},
    xticklabel style={font=\fontsize{4.5}{7.2},yshift=-1pt},
    axis background/.style={fill=white},
  ]
  \addplot[
	color=red,
    scatter,
    only marks,
    scatter/classes={dtmc={mark=square*,color1,mark size=.95},
                     mdp={mark=*,color5,mark size=.95},
                     incorrect={mark=10-pointed star,color4,mark size=3}},
    scatter src=explicit symbolic
  ]%
  table[x={x}, y={y}, meta={category}]{\DKWvsNormalSmall};
  
  \addplot[no marks, thin] coordinates {(\smallplotMIN,\smallplotMIN) (\smallplotMAX,\smallplotMAX)};
  \addplot[no marks, very thin] coordinates {(\smallplotMIN,\smallplotMIN) (\smallplotMAX,\smallplotMIN)};
  \addplot[no marks, very thin] coordinates {(\smallplotMIN,\smallplotMIN) (\smallplotMIN,\smallplotMAX)};
  \addplot[no marks, thin, densely dotted] coordinates {(0.5*\smallplotMIN,\smallplotMIN) (0.5*\smallplotMAX,\smallplotMAX)};
  \addplot[no marks, thin, densely dotted] coordinates {(\smallplotMIN,0.5*\smallplotMIN) (\smallplotMAX,0.5*\smallplotMAX)};
  \end{axis}
  };
\end{tikzpicture}}};
		\node at (8.4,0) {\parbox{\linewidth}{\pgfplotstableread[col sep=comma]{
category,  instance,                                  x,                 y,
dtmc,      coupon.200.5.2.5.exp_draws,                8.58938816693475,  4.53198184680609, 
dtmc,      coupon.300.7.3.5.exp_draws,                12.8840822504021,  6.67630235895711, 
dtmc,      coupon.350.9.4.5.exp_draws,                15.0314292921358,  7.74000240940674, 
dtmc,      coupon.850.15.4.5.exp_draws,               36.5048997094727,  18.6690519109967, 
dtmc,      egl.100.5.2.messagesA,                     4.29469408346738,  2.3227143938204,  
dtmc,      egl.200.5.8.messagesA,                     8.58938816693475,  5.14023901849148, 
dtmc,      egl.225.10.2.messagesA,                    9.66306168780159,  4.94242472557013, 
dtmc,      egl.250.10.6.messagesA,                    10.7367352086684,  5.47926148600354, 
dtmc,      egl.100.5.2.messagesB,                     4.29469408346738,  2.27724092340299, 
dtmc,      egl.200.5.8.messagesB,                     8.58938816693475,  4.80908796513669, 
dtmc,      egl.225.10.2.messagesB,                    9.66306168780159,  5.02479207765683, 
dtmc,      egl.250.10.6.messagesB,                    10.7367352086684,  5.56162883809026, 
dtmc,      leader_sync.20.3.2.time,                   0.858938816693475, 0.486916349181372,
dtmc,      leader_sync.25.4.3.time,                   1.07367352086684,  0.588783701268057,
dtmc,      leader_sync.30.5.4.time,                   1.28840822504021,  0.673177582937396,
mdp,       csma.150.2.2.time_max,                     6.44204112520106,  3.70554303011235, 
mdp,       csma.225.2.6.time_max,                     9.6630616878016,   6.75830642102725, 
mdp,       csma.225.3.4.time_max,                     9.6630616878016,   5.81604924934203, 
mdp,       csma.250.4.4.time_max,                     10.7367352086684,  6.41430642102738, 
mdp,       eajs.50.2.100.5.ExpUtil,                   2.14734704173369,  1.11812046170156, 
mdp,       eajs.80.3.150.7.ExpUtil,                   3.4357552667739,   1.78532457422154, 
mdp,       eajs.160.5.250.11.ExpUtil,                 6.8715105335478,   3.55112261886062, 
mdp,       eajs.200.6.300.13.ExpUtil,                 8.58938816693475,  4.41056143555409, 
mdp,       firewire.600.true.3.400.time_max,          25.7681645008043,  20.7531458431148, 
mdp,       firewire.600.true.36.200.time_max,         25.7681645008043,  20.6724763843185, 
mdp,       firewire.600.true.36.600.time_max,         25.7681645008043,  21.3265662543669, 
mdp,       firewire.600.true.3.400.time_sending,      25.7681645008043,  13.1168434841583, 
mdp,       firewire.600.true.36.200.time_sending,     25.7681645008043,  13.1652904249929, 
mdp,       firewire.600.true.36.600.time_sending,     25.7681645008043,  13.1882638954102, 
mdp,       resource-gathering.0.200.15.15.expgold,    8.58938816693475,  4.36011449471932, 
mdp,       resource-gathering.0.400.30.15.expgold,    17.1787763338695,  8.69728204860414, 
mdp,       resource-gathering.0.1300.100.100.expgold, 55.8310230850759,  28.1147993058767, 
mdp,       wlan.150.0.0.time_min,                     322.102056260053,  207.679601360369, 
mdp,       wlan.250.2.0.time_min,                     536.836760433422,  335.327056260052, 
mdp,       wlan.350.4.0.time_min,                     751.571464606791,  437.185326096334, 
mdp,       wlan.450.6.0.time_min,                     966.306168780159,  544.804004662155, 
mdp,       wlan.150.0.0.num_collisions,               6.44204112520106,  3.26746750343532, 
mdp,       wlan.250.2.0.num_collisions,               10.7367352086684,  5.41281454516871, 
mdp,       wlan.350.4.0.num_collisions,               15.0314292921358,  7.5591615869024,  
mdp,       wlan.450.6.0.num_collisions,               19.3261233756032,  9.70850862863667, 
}{\DKWvsHoeffdingLarge}

\pgfplotstableread[col sep=comma]{
category,  instance,                                  x,                 y,
mdp,       wlan.150.0.0.cost_max,                     1311277.47103468,  655911.109609679, 
mdp,       wlan.250.2.0.cost_max,                     2185462.45172446,  1104103.56658949, 
mdp,       wlan.350.4.0.cost_max,                     3059647.43241424,  1540110.40825825, 
mdp,       wlan.450.6.0.cost_max,                     3933832.41310403,  1976931.42579945, 
}{\DKWvsHoeffdingSmall}

% custom (tighter) axes for the child plot
\def\smallplotMIN{300000}
\def\smallplotMAX{10000000}

\begin{tikzpicture}
  \node[anchor=north west] (largeplot) at (0,0) {%
  \begin{axis}[
    width=\scatterPlotsSize,
    height=\scatterPlotsSize,
    axis equal image,
    xmin={\largeplotMIN}, ymin={\largeplotMIN}, xmax={\largeplotMAX}, ymax={\largeplotMAX},
    xmode=log, ymode=log,
    grid=major,
    grid style={color=black!11},
    axis x line=bottom,
    axis y line=left,
    xlabel={\strut{}Hoeffding},
    xlabel style={font=\scriptsize,yshift=5pt},
    ylabel={\strut{}DKW},
    ylabel style={font=\scriptsize,yshift=-13pt,xshift=9pt},
    yticklabel style={font=\scriptsize},
    xticklabel style={font=\scriptsize},%rotate=290,anchor=west,
    legend pos=north east,
    legend columns=-1,
    legend style={nodes={scale=0.75, transform shape},anchor=north east,inner sep=1.5pt,yshift=16pt,xshift=3.5pt},
    legend cell align={left},
    extra x ticks = {\largeplotMAX},
    extra x tick labels = {},  % {$\mathstrut{\geq}\,\largeplotMAX$},
    extra x tick style = {grid=major,grid style={very thin,color=gray}},
    extra y ticks = {\largeplotMAX},
    extra y tick labels = {},  % {$\mathstrut{\geq}\,\largeplotMAX$},
    extra y tick style = {grid=major,grid style={very thin,color=gray}},
  ]
  \addplot[
    scatter,
    only marks,
    scatter/classes={dtmc={mark=square*,color1,mark size=.95},
                     mdp={mark=*,color5,mark size=.95},
                     incorrect={mark=10-pointed star,color4,mark size=3}},
    scatter src=explicit symbolic
  ]%
  table[x={x}, y={y}, meta={category}]{\DKWvsHoeffdingLarge};
  
  \addplot[no marks] coordinates {(\largeplotMIN,\largeplotMIN) (\largeplotMAX,\largeplotMAX)};
  \addplot[no marks, densely dotted] coordinates {(0.5*\largeplotMIN,\largeplotMIN) (0.5*\largeplotMAX,\largeplotMAX)};
  \addplot[no marks, densely dotted] coordinates {(\largeplotMIN,0.5*\largeplotMIN) (\largeplotMAX,0.5*\largeplotMAX)};
  \end{axis}
  };
  \node[anchor=north west] (smallplot) at
	([xshift=-.082\scatterPlotsSize,yshift=.35\scatterPlotsSize]largeplot.north west) {%
  \begin{axis}[
    width=.59\scatterPlotsSize,
    height=.59\scatterPlotsSize,
    axis equal image,
    xmin={\smallplotMIN}, ymin={\smallplotMIN}, xmax={\smallplotMAX}, ymax={\smallplotMAX},
    xmode=log, ymode=log,
    grid=major,
    grid style={color=black!9},
    axis x line*={top,very thin},
    axis y line*={right,very thin},
    yticklabel style={font=\fontsize{4.5}{7.2},xshift=-2pt},
    xticklabel style={font=\fontsize{4.5}{7.2},yshift=-1pt},
    axis background/.style={fill=white},
  ]
  \addplot[
	color=red,
    scatter,
    only marks,
    scatter/classes={dtmc={mark=square*,color1,mark size=.95},
                     mdp={mark=*,color5,mark size=.95},
                     incorrect={mark=10-pointed star,color4,mark size=3}},
    scatter src=explicit symbolic
  ]%
  table[x={x}, y={y}, meta={category}]{\DKWvsHoeffdingSmall};
  
  \addplot[no marks, thin] coordinates {(\smallplotMIN,\smallplotMIN) (\smallplotMAX,\smallplotMAX)};
  \addplot[no marks, very thin] coordinates {(\smallplotMIN,\smallplotMIN) (\smallplotMAX,\smallplotMIN)};
  \addplot[no marks, very thin] coordinates {(\smallplotMIN,\smallplotMIN) (\smallplotMIN,\smallplotMAX)};
  \addplot[no marks, thin, densely dotted] coordinates {(0.5*\smallplotMIN,\smallplotMIN) (0.5*\smallplotMAX,\smallplotMAX)};
  \addplot[no marks, thin, densely dotted] coordinates {(\smallplotMIN,0.5*\smallplotMIN) (\smallplotMAX,0.5*\smallplotMAX)};
  \end{axis}
  };
\end{tikzpicture}}};
	\end{tikzpicture}
	\caption{Widths comparison of DKW, Hoeffding, and Normal 95\% intervals}
	\label{fig:CoverageTests}
\end{figure}
\endgroup

\Cref{fig:CoverageTests} shows log-log scatter plots for the coverage tests from \Cref{sec:ExperimentsForRewards}.
Each plot compares the half-width of 95\% confidence intervals of two methods, built by \thetool with $k=1000$ for the 44 instances from \Cref{tab:Models}.
Per plot, the method that builds the narrower intervals is placed in the $y$-axis: the distance of a mark from the main diagonal indicates how much narrower is that interval.
For instance, a mark at coordinate $x=10^1$ and $y=10^{-1}$ in the left plot of \Cref{fig:CoverageTests} indicates that the Normal interval for that instance is 100$\times$ narrower than the Hoeffding interval.
The right plot in \Cref{fig:CoverageTests} is equivalent to \Cref{fig:SoundPerformanceScatter} (page~\pageref{fig:SoundPerformanceScatter}), with wider intervals since \Cref{fig:SoundPerformanceScatter} was built with $k=500000$.

In \Cref{fig:CoverageTests}, we mark DTMC instances as~\tikz{\node[color=color1,mark size=1.45]{\pgfuseplotmark{square*}}}, MDP instances as~\tikz{\node[color=color5,mark size=1.55]{\pgfuseplotmark{*}}}, and the 10 Normal intervals that did not reach 95\% coverage as~\tikz{\node[color=color4,mark size=3]{\pgfuseplotmark{10-pointed star}}}\,.
\Cref{tab:CoverageTests,tab:CoverageTestsRelative} show the exact widths of all intervals, highlighting those violated coverage.
These include tests for Student's $t$ intervals, which are not shown in \Cref{fig:CoverageTests} due to their similarity to Normal intervals, to which they converge in the limit.

\begin{table}[tp]
	\centering
	\caption{CI half-widths of Normal, Student's $t$, Hoeffding, and DKW intervals}
	\label{tab:CoverageTests}
	\includegraphics[width=.95\linewidth]{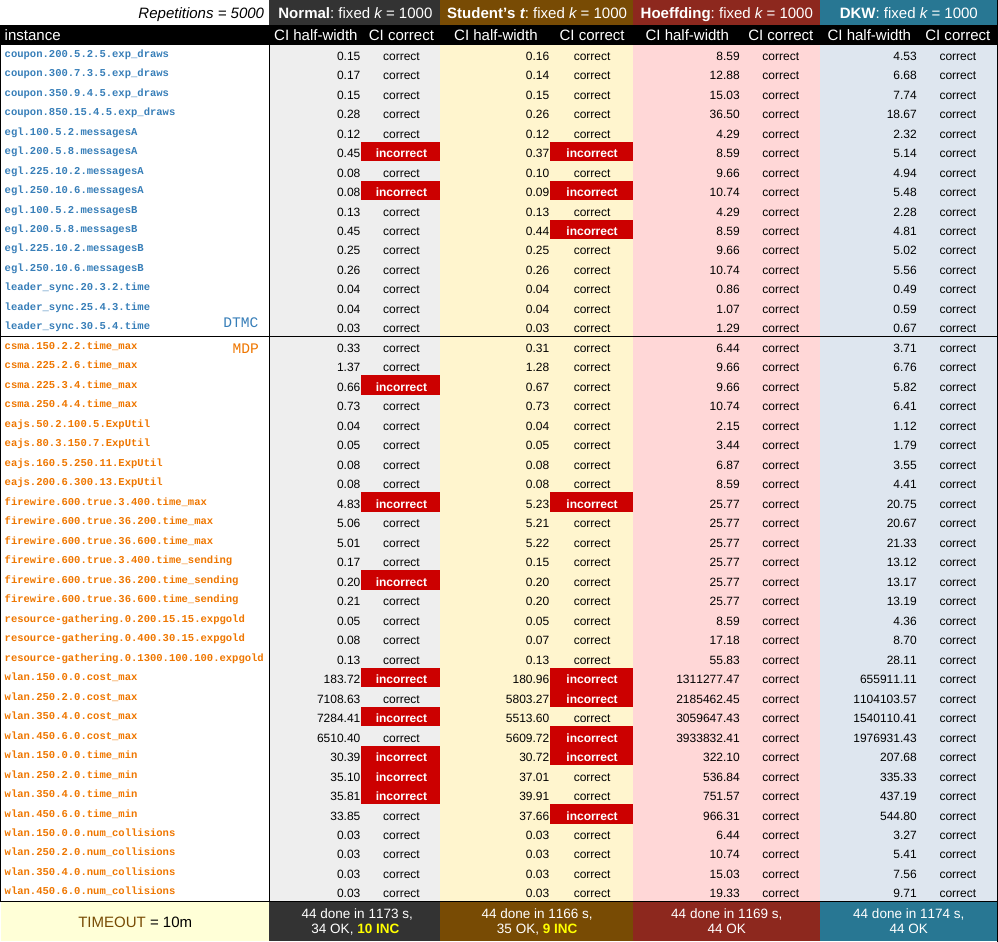}
\end{table}

\begin{table}[tp]
	\centering
	\caption{CI half-widths of Chow-Robbins and Hoeffding given $\varepsilon$}
	\label{tab:CoverageTestsRelative}
	\includegraphics[width=.75\linewidth]{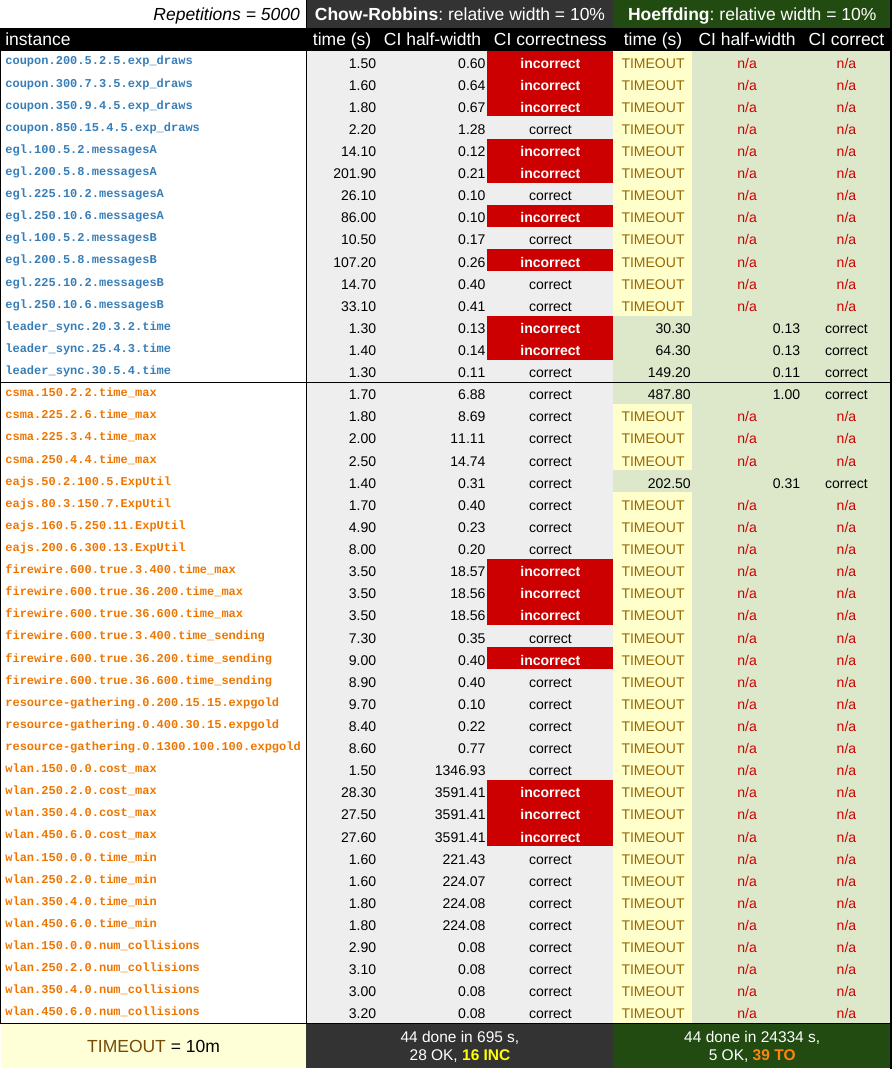}
\end{table}

\subsection{Effectiveness of DKW}
\label{sec:Appendix:EffectivenessDKW}

\Cref{tab:EffectivenessDKW} shows the lower-bound of DKW 95\% intervals for our 44 instances, ran for increasing values of $k\in\{10^i\}_{i=2}^6$.
The four rightmost columns show the ratio at which the relative error of the lower bound decreases every time $k$ increases an order of magnitude, and its bottom row shows the corresponding geometric mean.
On average, \emph{the lower-bound of the DKW interval gets $2.6\times$ closer to the expected reward for every step}.

The instances shown in \Cref{fig:DKWELowerExamples} on page~\pageref{fig:DKWELowerExamples} are
\texttt{\smaller\color{color1}coupon.850.15.4.5.exp\_draws},
\texttt{\smaller\color{color1}egl.200.5.8.messagesB}, and
\texttt{\smaller\color{color5}resource-gathering.0.1300.100.100.expgold},
with expected rewards
$E^{\leq 850}_{\diamond\,G}=12.8184$,
$E^{\leq 200}_{\diamond\,G}=2.5908$, and
$E^{\leq 1300}_{\diamond\,G}=7.7370$ resp.

\begin{table}[tp]
	\centering
	\caption{Approximation of the lower-bound of DKW intervals to the expected reward}
	\label{tab:EffectivenessDKW}
	\includegraphics[angle=90,origin=c,width=0.95\linewidth]{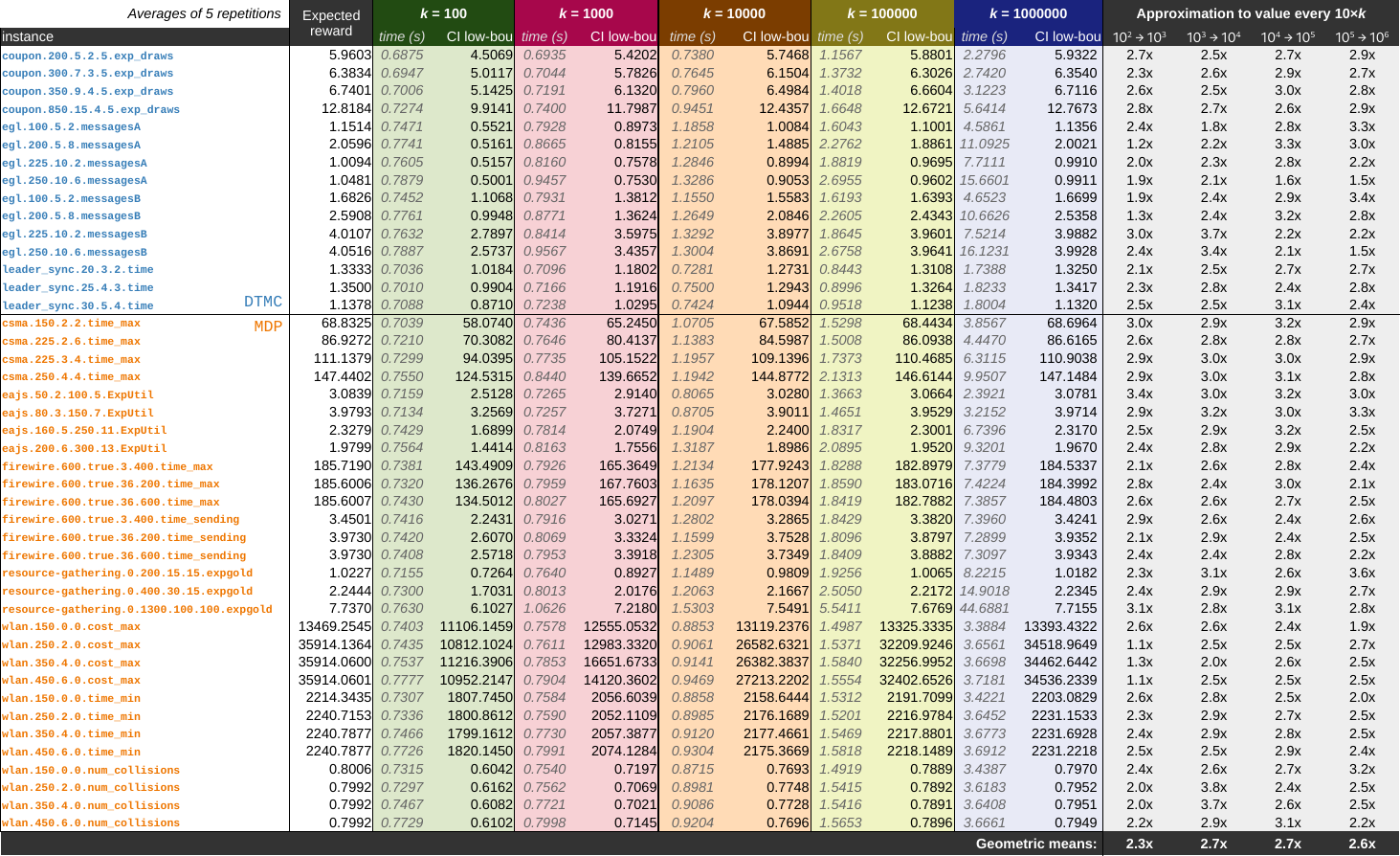}
\end{table}
}{} %

\end{document}